\documentclass[a4paper, 11pt]{article}
\usepackage[utf8]{inputenc}

\usepackage[margin=1in]{geometry}

\usepackage{todonotes}
\usepackage{amsmath}
\usepackage{amssymb}
\usepackage{amsfonts}
\usepackage{amsthm}
\usepackage{authblk}
\usepackage{xspace}
\usepackage{paralist}
\usepackage{multirow}
\usepackage{graphicx}
\usepackage{tabularx}
\usepackage{mathtools}
\usepackage[numbers, sort&compress]{natbib}
\usepackage[pagebackref=True]{hyperref}
\usepackage{cleveref}

\usepackage{tikz}
\usetikzlibrary{arrows, calc, shapes, patterns}
\usetikzlibrary{decorations.pathreplacing}

\newtheorem{claim}{Claim}
\newtheorem{theorem}{Theorem}
\newtheorem{lemma}{Lemma}
\newtheorem{observation}{Observation}
\newtheorem{corollary}{Corollary}
\newtheorem{definition}{Definition}

\crefname{claim}{Claim}{Claims}
\crefname{corollary}{Corollary}{Corollaries}
\crefname{definition}{Definition}{Definitions}

\newcommand{\DTW}{\textsc{DTW-Mean}\xspace}
\newcommand{\fCCS}[1][]{\ensuremath{{f^{ \text{\tiny\sf CS}}_{#1}}}}
\newcommand{\fMSCS}[1][$f$]{\textsc{#1-MSCS}\xspace}
\newcommand{\fMSCSlong}[1][$f$]{\textsc{Multiple String Circular Shift (with Cost~#1)}\xspace}

\newcommand{\RMCClong}{\textsc{Regular Multicolored Clique}\xspace}
\newcommand{\RMCC}{\textsc{RMCC}\xspace}

\newcommand{\Cgap}{\ensuremath{C_\mathrm{gap}}}
\newcommand{\Ccode}{\ensuremath{C_\mathrm{code}}}
\newcommand{\Cback}{\ensuremath{C_\mathrm{back}}}

\newcommand{\N}{\mathbb{N}}
\newcommand{\Q}{\mathbb{Q}}

\newcommand{\Fcost}{\mathcal{F}}
\DeclareMathOperator{\cost}{cost}
\DeclareMathOperator{\dtw}{dtw}
\DeclareMathOperator{\poly}{poly}
\newcommand{\longZero}{\textbf{\textit{0}}}

\newcommand{\problemdef}[3]{
	\begin{center}
  \begin{minipage}{0.95\textwidth}
    \noindent
    \textsc{#1}

			\vspace{2pt}
			\setlength{\tabcolsep}{3pt}
			\begin{tabularx}{\textwidth}{@{}lX@{}}
					\textbf{Input:} 		& #2 \\
					\textbf{Question:} 	& #3
				\end{tabularx}
  \end{minipage}
	\end{center}
      }

\title{Tight Hardness Results for Consensus Problems on Circular Strings and Time Series}

\author[1]{Laurent Bulteau}
\author[2]{Vincent Froese}
\author[2]{Rolf Niedermeier}

\affil[1]{\small Université Paris-Est, LIGM (UMR 8049), CNRS, ENPC, ESIEE Paris, UPEM, F-77454, Marne-la-Vallée, France\\
laurent.bulteau@u-pem.fr}
\affil[2]{\small
  Algorithmics and Computational Complexity, Faculty~IV, TU Berlin, Berlin, Germany\\
\{vincent.froese, rolf.niedermeier\}@tu-berlin.de}

\date{}

\begin{document}

\maketitle

\begin{abstract}
  Consensus problems for strings and sequences appear in numerous application contexts, ranging from bioinformatics over data mining to machine learning.
Closing some gaps in the literature, we show that several fundamental problems in this context are NP- and W[1]-hard, and that the known (partially brute-force) algorithms are close to optimality assuming the Exponential Time Hypothesis.
Among our main contributions is to settle the complexity status of computing a mean in dynamic time warping spaces which, as pointed out by Brill~et~al.~[DMKD~2019], suffered from many unproven or false assumptions in the literature.
We prove this problem to be NP-hard and additionally show that a recent dynamic programming algorithm is essentially optimal.
In this context, we study a broad family of circular string alignment problems. This family also serves as a key for our hardness reductions, and it is of independent (practical) interest in molecular biology.
In particular, we show tight hardness and running time lower bounds for \textsc{Circular Consensus String}; notably, the corresponding non-circular version is easily linear-time solvable.

\medskip

\noindent\textbf{Keywords:} Consensus Problems; Time Series Analysis; 
Circular String Alignment;
Fine-Grained Complexity and Reductions; 
Lower Bounds; Parameterized Complexity;
Exponential Time Hypothesis.
\end{abstract}

\section{Introduction}\label{sec:intro}

Consensus problems appear in many contexts of stringology and time series 
analysis, including applications in bioinformatics, data mining, machine learning, and 
speech recognition. Roughly speaking, given a set of input sequences,
the goal is to find a consensus sequence that minimizes the ``distance''
(according to some specified distance measure) to the input sequences.
Classic problems in this context are the NP-hard \textsc{Closest String} \cite{FL97,LMW02a,LMW02b,GNR03}
(where the goal is to find a ``closest string'' that minimizes 
the maximum Hamming distance to a 
set of equal-length strings) or the more general \textsc{Closest Substring}~\cite{FGN06,Marx08}.
Notably, the variant of \textsc{Closest String} where one minimizes the sum of Hamming distances 
instead of the maximum distance is easily solvable in linear time.

In this work, we settle the computational complexity of prominent consensus  
problems on circular strings and time series.
Despite their great importance in many applications, and a correspondingly rich 
set of heuristic solution strategies used in practice, 
to date, it has been unknown whether 
these problems are polynomial-time solvable or NP-hard. 
We prove their hardness, including also ``tight'' parameterized and 
fine-grained complexity results, thus justifying the massive use of 
heuristic solution strategies in real-world applications.

On the route to determining the complexity of exact mean computation in dynamic time warping spaces, a fundamental consensus problem in the context of time series analysis~\cite{PKG11}\footnote{As of April~2019, the work by Petitjean et al.~\cite{PKG11}, who developed one of the most prominent heuristics for this problem, has already been cited around 340 times since 2011 according to Google Scholar.}, we first study a fairly general alignment problem\footnote{Particularly from the viewpoint of applications in bioinformatics, consensus string problems can also be interpreted as 
alignment problems \cite{LNPPS13}.} 
for circular strings called \fMSCSlong. Based on its analysis, we 
will also derive our results for two further, more specific problems.
Given a set of input strings over a fixed alphabet~$\Sigma$ and a
local cost function~$f\colon \Sigma^* \rightarrow \Q$,
the goal in \fMSCSlong{} (abbreviated by \fMSCS) is to find a cyclic shift of 
each input string such that the shifted strings
``align well'' in terms of the sum of local costs.\footnote{We cast all problems in this work as decision 
problems for easier complexity-theoretic treatment. Our hardness results correspondingly 
hold for the associated optimization problems.}

\problemdef{\fMSCS}
{A list of~$k$ strings $s_1,\ldots, s_k\in\Sigma^n$ of length~$n$ and~$c\in\Q$.}
{Is there a multiple circular shift $\Delta=(\delta_1,\ldots,\delta_k)\in \N^k$ with $\cost_f(\Delta):=\sum_{i=1}^n f\big((s_1^{\leftarrow\delta_1}[i], \ldots, s_k^{\leftarrow\delta_k}[i])\big) \le c$?}

See \Cref{fig:exampleFMSCS} for an example.
We separately study the special case 
\textsc{Circular Consensus String} for a binary alphabet, where the cost 
function~$f\colon \{0,1\}^*\rightarrow\N$ 
is defined as $f((x_1,\ldots, x_k)):= \min \{\sum_{i=1}^kx_i, k-\sum_{i=1}^kx_i\}$. This corresponds to minimizing the sum of Hamming 
distances (not the maximum Hamming distance as in \textsc{Closest String}).
As we will show, allowing circular shifts makes the problem much harder to solve.

\begin{figure}[t]
\centering
	\begin{tikzpicture}[xscale=.78, 
	dtw/.style={rectangle, rounded corners=1mm, inner sep=2pt, fill=white, draw=black, line width=1pt, minimum height=1.2em},
	lineb/.style={line width=2pt, red}, lineg/.style={line width=2pt, green}, 	
	]
	\foreach \y/\d in {1/0,2/2,3/1} {
		\coordinate (s\y) at (1,-\y);
		\coordinate (sv\y) at (9,-\y);
		\node at (s\y) {$s_{\y}$};
		\node at (sv\y) {$s^{\leftarrow\d}_{\y}$};
	}

\node  at (9,-3.6) {cost:};
	\foreach \l [count=\i] in { $\frac 23$, 0,0,0, $\frac 23$} {
		\node at (9+\i, -3.6) {\footnotesize \l};	   
	}

	\draw[lineb] ($(s1)+(1,0 )$) to[out=-90,in=90, looseness=.6] ($(s2)+(3,0)$) to[out=-90,in=90, looseness=.6]($(s3)+(2,0)$);
	\draw[lineg] ($(s1)+(2,0 )$) to[out=-90,in=90, looseness=.6] ($(s2)+(4,0)$) to[out=-90,in=90, looseness=.6]($(s3)+(3,0)$);
	
	\draw[lineg] ($(s1)+(3,0 )$) to[out=-90,in=90, looseness=.6] ($(s2)+(5,0)$) to[out=-90,in=90, looseness=.6]($(s3)+(4,0)$);
	
	\draw[lineg] ($(s1)+(4,0 )$) to[out=-90,in=160, looseness=.6] ($(s2)+(5.5,0.33)$);
	\draw[lineg] ($(s2)+(0.5, 0.33 )$) to[out=-20,in=90, looseness=.6] ($(s2)+(1,0)$) to[out=-90,in=45, looseness=.6]($(s3)+(.5,.5)$);
	\draw[lineg]  ($(s3)+(5.5,.5)$) to[out=-134,in=90, looseness=.6]($(s3)+(5,0)$);

	\draw[lineb] ($(s1)+(5,0 )$) to[out=-90,in=160, looseness=.6] ($(s2)+(5.5,0.66)$);
	\draw[lineb] ($(s2)+(0.5, 0.66 )$) to[out=-20,in=90, looseness=.6] ($(s2)+(2,0)$) to[out=-90,in=90, looseness=.6]($(s3)+(1,0)$);
	
	\foreach \s [count=\j] in {b,g,g,g,b} {
		\draw[line\s] ($(sv1)+(\j,0 )$) -- ($(sv3)+(\j,0)$);
		
	}
	
	\foreach \r [count=\y] in {{1,0,0,1,1}, {1,1,0,0,0}, {0,1,0,0,1}}{
		\foreach \v [count=\x] in \r{
			\node[dtw, draw=gray] at ($(s\y)+(\x, 0)$) {\v}; 
		} 
	} 
	\foreach \r [count=\y] in {{1,0,0,1,1}, {0,0,0,1,1}, {1,0,0,1,0}}{
		\foreach \v [count=\x] in \r{
			\node[dtw, draw=gray] at ($(sv\y)+(\x, 0)$) {\v}; 
			
		} 
	} 
	
	\end{tikzpicture}
	\caption{An instance of \fMSCS[$\sigma$] with three binary input strings, and an optimal multiple circular shift $\Delta=(0,2,1)$, using the sum of squared distances from the mean ($\sigma$) as a cost function. Columns of $\Delta$ are indicated with dark (red) or light (green) lines, depending on their cost. For example, column 1, with values $(1,0,1)$ has mean $\frac 23$, and cost $\left(\frac 13\right)^2+\left(\frac 23\right)^2+\left(\frac 13\right)^2=\frac 23$. The overall cost is~$\frac 43$.}
        \label{fig:exampleFMSCS}
\end{figure}
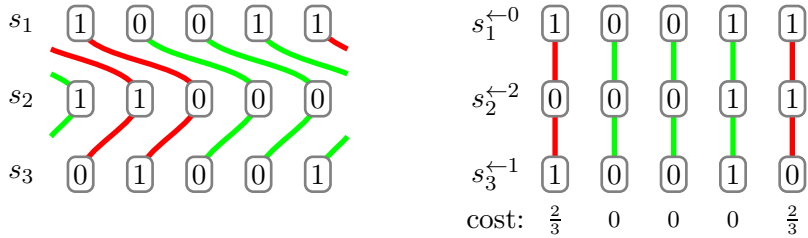

Multiple circular string (sequence) alignment problems have 
been considered in different variations
in bioinformatics, where circular strings naturally arise in several applications (for example, in multiple alignment of genomes, which often have a circular molecular structure)~\cite{MVC02,FPF09,WS14,BIKPRV15,GIMPPRV16,AP17}.
Depending on the application at hand, different cost functions are used.
For example, non-trivial algorithms for computing a consensus string of three and four circular strings with respect to the Hamming distance have been developed~\cite{LNPPS13}.
However, most of the algorithmic work so far is heuristic in nature or only
considers specific special cases. A thorough analysis of the computational complexity for these problems in general so far has been missing.

After having dealt with circular string alignment problems in a quite general fashion,
we then study a fundamental (consensus) problem in time series analysis.
\emph{Dynamic time warping} (see \Cref{sec:prelims} for details) defines a distance between two time series which is used in many applications in time series analysis~\cite{KR05,PKG11,SDG16,MAKD18} (notably, dynamic time warping has also been considered in the context of circular sequences~\cite{Arica05,PM12}).
An important problem here is to compute an average of a given sample of time series under the dynamic time warping distance.

\problemdef{\DTW}
{A list of~$k$ univariate rational time series $x_1,\ldots, x_k$ and~$c\in\Q$.}
{Is there a univariate rational time series~$z$ such that $\Fcost(z) = %\frac{1}{k}
\sum_{i=1}^{k}\left(\dtw(z,x_i)\right)^2 \le c$?}

\begin{figure}
  \centering	
  \begin{tikzpicture}[xscale=.78, 
    dtw/.style={rectangle, rounded corners=1mm, inner sep=2pt, fill=white, draw=gray, line width=1pt, minimum height=1.2em},
    dtwlineA/.style={line width=2pt, draw=black!85},
    dtwareaA/.style={dtwlineA, fill=black!25},	
    dtwlineB/.style={dtwlineA, draw=black!30},
    dtwareaB/.style={dtwlineB, fill=black!15}	
    ]
    \foreach \y in {1,2,3} {
      \coordinate (s\y) at (1,-\y);
      \coordinate (sv\y) at (9,-\y);
      \node at (s\y) {$x_{\y}$};
      \node at (sv\y) {$x_{\y}$};
    }
    \node (z) at (9,-4) {$z$};
    \foreach \l [count=\i] in { $\frac 14$, 1,10,0, $\frac 43$} {
      \coordinate (z\i) at ($(z)+(\i, 0)$);   
    }
    
    \node  at (9,-4.6) {cost:};
    
    \foreach \l [count=\i] in { $\frac 34$, 2,0,0, $\frac {32}3$} {
      \node at ($(z\i) +(0,-.6)$) {\footnotesize \l};	   
    }

    \draw[dtwareaA] (z1)to[out=90,in=-90, looseness=.6]  ($(sv3)+(1,0 )$) to[out=90,in=-90, looseness=.6] ($(sv2)+(1,0)$) to[out=90,in=-90, looseness=.6]($(sv1)+(0.9,0)$) to[out=-90,in=90, looseness=.6] ($(sv2)+(1,0)$)  to[out=-90,in=90, looseness=.6] ($(sv3)+(2,0)$)  to[out=-90,in=90, looseness=.6] (z1)  ;
    
    \draw[dtwareaB] (z4)to[out=90,in=-90, looseness=.6]  ($(sv3)+(4.94,0 )$) to[out=90,in=-90, looseness=.6] ($(sv2)+(4,0)$) to[out=90,in=-90, looseness=.6]($(sv1)+(3,0)$)--($(sv1)+(4,0)$) to[out=-90,in=90, looseness=.6] ($(sv2)+(4,0)$)  to[out=-90,in=90, looseness=.6] ($(sv3)+(4.94,0)$)  to[out=-90,in=90, looseness=.6] (z4)  ;
    
    \foreach \a/\b/\c/\z/\s in {1.07/2/3/2/B,2/3/4/3/A,5/5/5.14/5/A}{
      \draw[dtwline\s] (z\z)to[out=90,in=-90, looseness=.6]  ($(sv3)+(\c,0 )$) to[out=90,in=-90, looseness=.6] ($(sv2)+(\b,0)$) to[out=90,in=-90, looseness=.6]($(sv1)+(\a,0)$);
    }

    \foreach \r [count=\y] in {{1,10,0,0,4}, {0,2,10,0,0}, {0,0,0,10,0}}{
      \foreach \v [count=\x] in \r{
        \node[dtw] at ($(s\y)+(\x, 0)$) {\v}; 
        
        \node[dtw] at ($(sv\y)+(\x, 0)$) {\v}; 			
      } 
    } 
    \draw [fill=white, line width=1.5pt, opacity=0.9] ($(z1)+(-0.25,-0.3)$) rectangle ($(z5)+(0.25,0.3)$);
    \foreach \l [count=\i] in { $\frac 14$, 1,10,0, $\frac 43$} {
      \node[] at (z\i) {\l};	   
    }
  \end{tikzpicture}
  \caption{A \DTW instance with three input sequences and an optimal length-5 mean~($z$). Alignments between the mean and input sequences can progress at different speeds. This is formalized using \emph{warping paths} (see \Cref{sec:prelims}) represented by polygons (or lines in degenerate cases) with alternating shades. Every pair of aligned elements belongs to the same polygon.
  The cost of each mean element is the sum of squared differences over all aligned input elements, e.g.\ the cost of the first element is $(1-\frac 14)^2 + 3\cdot (0-\frac 14)^2 = \frac 34$.}
      \label{fig:exampleDTW}
\end{figure}

Here, $\dtw$ denotes the dynamic time warping distance (see \Cref{sec:prelims} for details).
Intuitively, dynamic time warping allows for non-linear alignments between two series.
\Cref{fig:exampleDTW} depicts an example.
The dtw-distance of two length-$n$ time series can be computed via standard dynamic programming in~$O(n^2)$ time.
A slight improvement to $O(n^2\frac{\log\log\log n}{\log\log n})$ time is known~\cite{GS17}.
For two binary time series, there exists an~$O(n^{1.87})$-time algorithm~\cite{ABW15}.
In general, however, a strongly subquadratic-time algorithm (that is, $O(n^{2-\varepsilon})$ time for some~$\varepsilon>0$) does not exist unless the Strong Exponential Time Hypothesis fails~\cite{ABW15,BK15}.

Regarding the computational complexity of \DTW, 
although more or less implicitly assumed in many 
publications presenting heuristic solution strategies\footnote{For instance, Petitjean et al.~\cite{PFWNCK16} write ``Computational biologists have long known that averaging under time warping is a very
complex problem, because it directly maps onto a multiple sequence alignment: the ``Holy Grail''
of computational biology.'' Unfortunately, the term ``directly maps'' has not been formally defined and only sketchy explanations are given.}, 
NP-hardness still has been open (see~Brill et al.~\cite[Section~3]{BFFJNS19} for a discussion 
on some misconceptions and wrong statements in the literature).
It is known to be solvable in~$O(n^{2k+1}2^kk)$ time, where~$n$ is the maximum length of any input series~\cite{BFFJNS19}. Moreover, Brill et al.~\cite{BFFJNS19} presented 
a polynomial-time algorithm for the special case of binary time series.
In practice, numerous heuristics are used~\cite{PKG11,PFWNCK16,CB17,SJ18}.
Note that \DTW is often described as closely related to multiple sequence alignment problems~\cite{PG12,ASW15,PG17}.
However, we are not aware of any formal proof regarding this connection.
By giving a reduction from \fMSCSlong to \DTW, we show that \DTW is actually connected to multiple \emph{circular} sequence alignment problems.
To the best of our knowledge, this is the first formally proven result regarding this connection.

\paragraph*{Our Results.}
Using plausible complexity-theoretic assumptions, we provide a fine-grained 
picture of the exact computational complexity (including parameterized complexity) 
of the problems introduced above.
We present two main results.

First, we show that, for a large class of natural cost functions~$f$, \fMSCS on binary sequences is NP-hard, W[1]-hard with respect to the number~$k$ of inputs, and not solvable in~$\rho(k)\cdot n^{o(k)}$ time for any computable function~$\rho$ (unless the Exponential Time Hypothesis fails).
Note that \fMSCS is easily solvable in~$\rho(k)\cdot n^{O(k)}$ time (for computable functions~$f$) since there are at most~$n^{k-1}$ cyclic shifts to try out (without loss of generality, the first string is not shifted).
Our running time lower bound thus implies that brute-force is essentially optimal (up to constant factors in the exponent).
Based on this, we can also prove the same hardness for the \textsc{Circular Consensus String} problem.
In fact, the general ideas of our reduction might also be used to develop hardness reductions for other circular string alignment problems.

As our second main contribution, we obtain the same list of hardness results as above 
for \DTW on binary time series. We achieve this by a polynomial-time reduction from a special case of \fMSCS.
Our reduction implies that, unless the Exponential Time Hypothesis fails, the known 
$O(n^{2k+1}2^kk)$-time algorithm~\cite{BFFJNS19} essentially can be improved only up to constants in the exponent.
Note that recently Buchin et al.~\cite{BDS19} achieved the same hardness result for the problem of averaging time series under generalized $(p,q)$-DTW. Their reduction, however, does not yield binary input time series.

\paragraph*{Organization.} 
In Section~\ref{sec:prelims} we fix notation and introduce basic concepts, 
also including the formal definition of dynamic time warping and the corresponding 
concept of warping paths.
In Section~\ref{sec:mscs}, we identify a circular string problem (of independent interest in molecular biology) which forms the basis for the results in \Cref{sec:dtw}.
More specifically, we prove the hardness results for 
\fMSCSlong. The key ingredient here is a specially geared reduction from the \textsc{Regular 
Multicolored Clique} problem. Moreover, we introduce the concept of 
polynomially bounded grouping functions~$f$ (for which our results hold).
In Section~\ref{sec:ccs}, providing a reduction from \fMSCSlong, we show 
analogous hardness results for \textsc{Circular Consensus String}. 
Notably, the cost function corresponding to \textsc{Circular Consensus String}
is not a polynomially bounded grouping function, making the direct 
application of the result for \fMSCSlong impossible.
In Section~\ref{sec:dtw} we prove analogous complexity results for \DTW{},
again devising a reduction from \fMSCSlong.
In Section~\ref{sec:concl}, we conclude with some open 
questions and directions for future research.

\section{Preliminaries}\label{sec:prelims}
In this section, we briefly introduce our notation and formal definitions.

\paragraph*{Circular Shifts.}
We denote the $i$-th element of a string~$s$ by~$s[i]$, and its length by $|s|$.
For a string $s\in\Sigma^n$ and~$\delta\in\N$, we define the \emph{circular (left) shift by~$\delta$} as the string 
\begin{align*}
 s^{\leftarrow\delta}:=
&s[\delta+1]\ldots s[n]s[1] \ldots s[\delta]
\quad (\text{that is, }  s^{\leftarrow\delta}[i] = s[(i+\delta-1 \bmod n)+1]),
 \end{align*}
that is, we circularly shift the string $\delta$ times to the left.
Let $s_1,\ldots,s_k$ be strings of length~$n$. A \emph{multiple circular (left) shift} of~$s_1,\ldots,s_k$ is defined by a $k$-tuple $\Delta=(\delta_1,\ldots,\delta_k)\in\{0,\ldots,n-1\}^k$ and yields the strings $s_1^{\leftarrow\delta_1},\ldots,s_k^{\leftarrow\delta_k}$.
We define \emph{column~$i\in\{1,\ldots,n\}$} of a multiple circular shift~$\Delta$ as the $k$-tuple $(s_1^{\leftarrow\delta_1}[i], \ldots, s_k^{\leftarrow\delta_k}[i])$.
By \emph{row $j\in\{1,\ldots,k\}$} of column~$i$ we denote the element~$s_j^{\leftarrow\delta_j}[i]$.

\paragraph*{Cost Functions.}

A \emph{local cost function} is a function $f\colon \Sigma^*\rightarrow\Q$ assigning a cost to any tuple of values.
Given such a function, the \emph{overall cost} of a circular shift~$\Delta$ for 
$k$~ length~$n$ strings is defined as
$$\cost_f(\Delta):=\sum_{i=1}^n f\big((s_1^{\leftarrow\delta_1}[i], \ldots, s_k^{\leftarrow\delta_k}[i])\big),$$
that is, we sum up the local costs of all columns of~$\Delta$.

For example, a well-known local cost is the sum of squared distances from the arithmetic mean (i.e.\ $k$ times the variance, here called~$\sigma$), that is,
\begin{align*}
 \sigma((x_1,\ldots,x_k))
   &=\sum_{i=1}^k \left(x_i- \frac{1}{k}\sum_{j=1}^kx_j\right)^2.
   %&= \frac{1}{k}\sum_{j=1}^kx_j^2 - \left(\frac{1}{k}\sum_{j=1}^kx_j\right)^2
   \end{align*}
Using a well-known formula for the variance, we get the following useful formula for computing~$\sigma$: 
$$\sigma((x_1,\ldots,x_k))= \bigg(\sum_{j=1}^kx_j^2 \bigg) - \frac{1}{k}\bigg(\sum_{j=1}^kx_j \bigg)^2$$

\paragraph*{Dynamic Time Warping.}
A time series is a sequence~$x=(x_1,\ldots,x_n)\in\Q^n$.
The dynamic time warping distance between two time series is based on the concept of a warping path.
\begin{definition}
  A \emph{warping path} of order~$m\times n$ is a sequence~$p=(p_1,\ldots,p_L)$, $L\in\N$,
  of index pairs $p_\ell=(i_\ell,j_\ell)\in \{1,\ldots,m\}\times\{1,\ldots,n\}$, $1\le \ell \le L$, such that
  \begin{compactenum}[(i)]
    \item $p_1=(1,1)$,
    \item $p_L=(m,n)$, and
    \item $(i_{\ell+1}-i_\ell, j_{\ell+1}-j_\ell)\in \{(1,0),(0,1),(1,1)\}$ for each~$1\le \ell \le L-1$.
  \end{compactenum}
\end{definition}
\noindent See \Cref{fig:exampleDTW} in \Cref{sec:intro} for an example.

The set of all warping paths of order~$m\times n$ is denoted by~$\mathcal{P}_{m,n}$.
A warping path~$p\in\mathcal{P}_{m,n}$ defines an \emph{alignment} between two time series~$x=(x[1],\ldots,x[m])$ and~$y=(y[1],\ldots,y[n])$ in the following way:
Every pair~$(i,j)\in p$ \emph{aligns} element~$x_i$ with~$y_j$. Note that 
every element from~$x$ can be aligned with multiple elements from~$y$, and vice versa.
The \emph{dtw-distance} between~$x$ and~$y$ is defined as
$$\dtw(x,y) := \min_{p\in\mathcal{P}_{m,n}}\sqrt{\sum_{(i,j)\in p}(x[i]-y[j])^2}.$$

For \DTW, the cost of a mean $z$ for $k$ input time series $x_1,\ldots,x_k$ is given by 
\begin{align*}
 \Fcost(z):=\sum_{j=1}^{k}\left(\dtw(z,x_j)\right)^2
 =\sum_{j=1}^k \min_{p_j\in\mathcal{P}_{|x_j|,|z|}}  \sum_{(u,v)\in p_j}(x_j[u]-z[v])^2.
\end{align*}
Note that for \DTW, a normalized cost $F(z):=\frac 1k\Fcost(z)$ is often considered: this does not affect the computational complexity of the problem, so for simplification purposes we only consider the non-normalized cost $\Fcost(z)$.

\paragraph*{Parameterized Complexity.}
We assume familiarity with the basic concepts from classical 
and parameterized complexity theory.

An instance of a parameterized problem is a pair~$(I,k)$ consisting of the classical problem instance~$I$ and a natural number~$k$ (the \emph{parameter}).
A parameterized problem is contained in the class~XP if there is an algorithm solving an instance~$(I,k)$ in polynomial time if~$k$ is a constant, that is, in time~$O(|I|^{f(k)})$ for some computable function~$f$ only depending on~$k$ (here~$|I|$ is the size of~$I$). A parameterized problem is \emph{fixed-parameter tractable} (contained in the class~FPT) if it is solvable in time~$f(k)\cdot|I|^{O(1)}$ for some computable function~$f$ depending solely on~$k$.
The class W[1] contains all problems which are parameterized reducible to \textsc{Clique} parameterized by the clique size.
A parameterized reduction from a problem~$Q$ to a problem~$P$ is an algorithm mapping an instance~$(I,k)$ of~$Q$ in time~$f(k)\cdot |I|^{O(1)}$ to an equivalent instance~$(I',k')$ of~$P$ such that~$k' \le g(k)$ (for some functions~$f$ and~$g$).
It holds FPT $\subseteq$ W[1] $\subseteq$ XP.

A parameterized problem that is W[1]-hard with respect to a parameter (such as \textsc{Clique} with parameter clique size) is presumably not in FPT.

\paragraph*{Exponential Time Hypothesis.}
Impagliazzo and Paturi~\cite{IP01} formulated the \emph{Exponential Time Hypothesis} (ETH) which asserts that there exists a constant~$c > 0$ such that \textsc{3-SAT} cannot be solved in~$O(2^{cn})$ time, where~$n$ is the number~$n$ of variables in the input formula. It is a stronger assumption than common complexity assumptions such as P$\neq$NP or FPT$\neq$W[1]. The \emph{Strong Exponential Time Hypothesis} states that \textsc{SAT} cannot be solved faster than~$O(2^n)$.

Several conditional running time lower bounds have since been shown based on the ETH, for example, \textsc{Clique} cannot be solved in~$\rho(k)\cdot n^{o(k)}$ time for any computable function~$\rho$ unless the ETH fails~\cite{CCFHJKX05}.

\section{Hardness of \fMSCS on Binary Strings}\label{sec:mscs}
In this section, we consider only binary strings from~$\{0,1\}^*$. We prove hardness for a family of local cost functions that satisfy certain properties. The functions we consider have the common property that they only depend on the number of~$0$'s and~$1$'s in a column, and that they aim at grouping similar values together.

\begin{definition}\label{def:canonical}
  A function $f\colon\{0,1\}^*\rightarrow\Q$ is called \emph{order-independent} if, for each $k\in\N$, there exists a function~$f_k\colon\{0,\ldots,k\}\rightarrow\Q$ such that $f((x_1,\ldots, x_k)) = f_k\big(\sum_{j=1}^kx_j\big)$ holds for all~$(x_1,\ldots,x_k)\in\{0,1\}^k$.

  For an order-independent function~$f$, we define the function $f'_k:\{1,\ldots, k\}\rightarrow \Q$ as $$f'_k(x)=\frac{f_k(x)-f_k(0)}{x}.$$ 
    
  \noindent An order-independent function~$f$ is \emph{grouping} if $f'_k(k) < \min_{1\le x < k}f'_k(x)$ and $f_k'(2)<f_k'(1)$ holds for every~$k\in\N$. 
\end{definition}

For an order-independent function, $f'_k$ can be seen as the cost per 1-value  (a column with $x$ 1's and $k-x$ 0's has cost $f_k(x)=f_k(0)+xf_k'(x)$). It can also be seen as a discrete version of the derivative for $f_k$, so that if $f_k$ is concave then $f_k'$ is decreasing.
The intuition behind a grouping function is that the cost per $1$-value is minimal in columns containing only $1$'s, and that having two $1$'s in a column has less cost than having two columns with a single~1. In particular, any concave function is grouping. Finally, if $f$ is grouping, then the cost function with value $f_k(x)+ax+b$ is also grouping for any values $a$ and $b$.

The following definitions are required to ensure that our subsequent reduction (\Cref{thm:MSCS_reduction}) is computable in polynomial time.
\begin{definition}
  Let $f$ be an order-independent function. The \emph{gap} of~$f_k$ is
  $$\varepsilon_k :=\min\{f'_k(x)-f'_k(y) \mid 1\le x,y\le k, f'_k(x) > f'_k(y)\}.$$
  The \emph{range} of~$f_k$ is $\mu_k:=\max_{1\le x \le k}|f'_k(x)|$. 

An order-independent function $f$ is \emph{polynomially bounded} if it is polynomial-time computable and if, for every~$k\in\N$, $\mu_k$ and $\varepsilon_k^{-1}$ are upper-bounded by a polynomial in~$k$.
\end{definition}

For binary strings, the function $\sigma$ (see \Cref{sec:prelims}) is a polynomially bounded grouping function. Indeed, it is order-independent since $\sigma((x_1,\ldots x_k)) = w - \frac{w^2}{k} =\frac {w(k-w)}{k}$, where $w=\sum_{j=1}^k x_j = \sum_{j=1}^k(x_j)^2$ since~$x_j\in\{0,1\}$ for $j=1,\ldots,k$.
Thus, $\sigma_k(w) = \frac {w(k-w)}k$ and
we have $\sigma_k(0) =0$, and $\sigma'_k(w)=\frac {k-w}{k}$, so $\sigma'_k$ is strictly decreasing, which is sufficient for $\sigma$ to be grouping. Finally, it is polynomially bounded, with gap $\varepsilon_k=\frac{1}{k}$ and range $\mu_k= \frac {k-1}k \leq 1$.

We prove our hardness results with a polynomial-time reduction from a special version of the \textsc{Clique} problem.

\problemdef{\RMCClong~(RMCC)}
{A $d$-regular undirected graph~$G=(V,E)$ where the vertices are colored with $k$~colors such that each color class contains the same number of vertices.}
{Does~$G$ have a size-$k$ complete subgraph (containing $\binom{k}{2}$ edges, 
called a $k$-clique) with exactly one vertex from each color?}

\noindent\RMCC is known to be NP-hard, W[1]-hard with respect to~$k$, and
not solvable in $\rho(k)\cdot|V|^{o(k)}$ time for any computable function~$\rho$ unless the ETH fails~\cite{Cyg15}. %{MS12}.  

The following lemma states the existence of a polynomial-time reduction from \RMCC to \fMSCS which implies hardness of \fMSCS for polynomially bounded grouping functions.

\begin{lemma}\label{thm:MSCS_reduction}
  Let~$f$ be a polynomially bounded grouping function.
  Then there is a polynomial-time reduction that, given an \RMCC instance~$G=(V,E)$ with~$k$ colors,
  outputs binary strings $s_0,\ldots,s_{k}$ of equal length and~$c\in\Q$ such that the following holds:
  \begin{compactitem}
  \item If~$G$ contains a properly colored $k$-clique, then there exists a multiple circular shift~$\Delta$ of~$s_0,\ldots,s_{k}$ with~$\cost_f(\Delta) = c.$

  \item If~$G$ does not contain a properly colored~$k$-clique, then every multiple circular shift~$\Delta$ of $s_0,\ldots,s_{k}$ has $\cost_f(\Delta)\ge c+ \varepsilon_{k+1}$.
    \end{compactitem}
\end{lemma}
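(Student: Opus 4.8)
The plan is to reduce from \RMCC. Write the vertex set of the given $d$-regular, $k$-colored graph $G$ as the disjoint union of color classes $V_1,\dots,V_k$, each of size $N:=|V|/k$, and fix an enumeration $V_i=\{v_i^0,\dots,v_i^{N-1}\}$. I build $k+1$ binary strings $s_0,s_1,\dots,s_k$ of a common length $n=\poly(|V|,k)$: for $i\ge 1$, the string $s_i$ is a ``selector'' for color $i$, and its cyclic shift $\delta_i$ is to be read as a choice of a vertex $v_i^{a_i}\in V_i$, while $s_0$ is a fixed ``template'' used to anchor the shifts. The positions $\{1,\dots,n\}$ are organized as an interleaving of $T$ tracks each of length $N$ (so $n=TN$), the tracks being grouped into an \emph{identity} group (one track per color), an \emph{edge-check} group (a constant number of tracks per unordered color pair $\{i,j\}$), and a \emph{padding} group. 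The point of the interleaving is that shifting $s_i$ by $a_iT$ rotates every track of $s_i$ simultaneously by $a_i$, so that a single cyclic shift advances the intended vertex selection coherently in all tracks at once.

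The cost bookkeeping rests on the two grouping inequalities of \Cref{def:canonical}. Cyclic shifts preserve the number of $1$'s, so the total number $W$ of $1$'s over all columns is a constant of the instance; using $f_{k+1}(x)=f_{k+1}(0)+x\,f'_{k+1}(x)$, one rewrites $\cost_f(\Delta)=n\,f_{k+1}(0)+\sum_{t}f'_{k+1}\big(w(t)\big)$, where the sum ranges over the $W$ individual $1$'s (``tokens'') and $w(t)$ is the number of $1$'s in the column hosting $t$. Minimizing $\cost_f$ thus amounts to placing each token in a column with $f'_{k+1}$-value as small as possible: by $f'_{k+1}(k+1)<\min_{x<k+1}f'_{k+1}(x)$ full columns are cheapest, and by $f'_{k+1}(2)<f'_{k+1}(1)$ a pair of aligned tokens is strictly cheaper than two isolated ones. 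I accordingly design the identity group so that a proper $k$-clique forces $k$ full columns (one per selected vertex, jointly hit by $s_0$), the padding group so that its token distribution, hence its cost, is identical for every multiple circular shift, and the edge-check tracks of a pair $\{i,j\}$ so that the selected $v_i^{a_i},v_j^{a_j}$ contribute a pair of aligned tokens exactly when $\{v_i^{a_i},v_j^{a_j}\}\in E$, and two isolated tokens otherwise. Let $c$ be the cost of this ``fully aligned'' configuration.

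For the first bullet, given a properly colored $k$-clique one takes the shifts selecting precisely its vertices; every group attains its minimum contribution, so $\cost_f(\Delta)=c$. For the second bullet I argue contrapositively. A preliminary anchoring step — comparing the identity track of each $s_i$ against the template $s_0$ — shows that any shift not of the coherent form $\delta_i=a_iT$ already has cost at least $c+\varepsilon_{k+1}$, so one may assume each $\delta_i$ selects a genuine vertex $v_i^{a_i}$. If $\{v_1^{a_1},\dots,v_k^{a_k}\}$ is not a clique, fix a non-edge $\{v_i^{a_i},v_j^{a_j}\}$; then in the edge-check tracks of $\{i,j\}$ two tokens that would otherwise be aligned are forced apart, so relative to the aligned configuration at least one token is charged $f'_{k+1}(1)$ in place of $f'_{k+1}(2)$, and since this is a strictly positive difference of two $f'_{k+1}$-values it is at least $\varepsilon_{k+1}$; hence $\cost_f(\Delta)\ge c+\varepsilon_{k+1}$. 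Finally the reduction is polynomial-time and $c$ has polynomially many bits, since $\mu_{k+1}$ (which bounds every $|f'_{k+1}(x)|$, hence $|c|$ up to the polynomial factor $n$) and $\varepsilon_{k+1}^{-1}$ are polynomially bounded and $f$ is polynomial-time computable.

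The technical heart, which I expect to be the main obstacle, is the gadget design behind the second paragraph: making a \emph{single} cyclic shift of $s_i$ encode the choice of $v_i^{a_i}$ \emph{consistently across all $\binom{k}{2}$ edge-check regions}. The difficulty is that the color-$j$ neighborhood of $v_i^a$ is in general not a cyclic shift of that of $v_i^0$, so a naive ``rotate the neighborhood vector'' encoding fails; one must instead use a more redundant layout — for instance an $N\times N$-type block per color pair in which the shift selects one row while the other rows stay inert — together with padding chosen so that every selector's ``unused'' $1$'s land in columns whose cost does not depend on the remaining shifts. Making this padding exactly shift-invariant while simultaneously ruling out all non-coherent shifts is the delicate point, and the regularity of $G$ in \RMCC is what makes the underlying counting come out even.
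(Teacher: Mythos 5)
Your high-level scheme agrees with the paper's in several respects: reduce from \RMCC, build $k+1$ strings with a fixed ``template'' $s_0$ plus one selector $s_i$ per color, rewrite
$\cost_f(\Delta)=n\,f_{k+1}(0)+\sum_{t}f'_{k+1}\bigl(w(t)\bigr)$
(a sum over individual $1$'s), exploit the two grouping inequalities so that full columns and aligned pairs are rewarded, set up an anchoring step to rule out incoherent shifts, and finish by counting weight-$2$ columns against $k+\binom{k}{2}$. That part of the bookkeeping is essentially the paper's.

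The genuine gap is the one you flag yourself: you never actually construct the edge-check gadget, and the sketch you do give is not on the right track. You propose an ``$N\times N$-type block per color pair in which the shift selects one row while the other rows stay inert,'' with extra padding engineered to be shift-invariant — but you give neither the block's contents nor the padding, and you concede the padding/anchoring interaction is ``the delicate point.'' Since the claim explicitly requires a concrete polynomial-time reduction, a proof cannot stop there. The paper resolves exactly this difficulty, and does so quite differently. It sidesteps the ``neighborhood of $v_i^a$ is not a cyclic shift of that of $v_i^0$'' obstacle by \emph{not} encoding pairwise neighborhoods at all: each vertex gadget $u_{j,i}=p_j q_{j,i}$ carries the vertex's full incidence vector $q_{j,i}$ over the \emph{global} edge set $E$, so when two vertices of distinct colors end up in the same block, the number of aligned $1$'s in their edge regions is $1$ if they are adjacent and $0$ otherwise, by construction — no rotation of adjacency patterns is needed. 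Consistency across the $\binom{k}{2}$ color pairs then comes for free from a second device you do not have: the $u_{j,i}$'s inside $s_j$ are spaced $\gamma+j$ blocks apart, a spacing that depends on the color $j$; together with $\gamma>kn$ and the bound on the total length $\lambda$, this yields the paper's \Cref{claim:noSquareCorners}, namely that for any two colors at most \emph{one} block can be occupied by vertices of both, so all the useful weight-$2$ edge columns are forced into the single block anchored by $s_0$. Your interleaved-track layout, with all tracks of length $N$ rotating in lockstep, has no analogue of this ``incommensurable spacing'' and therefore no mechanism preventing a shift from aligning two colors in many places at once. Finally, your anchoring step is only gestured at; the paper needs a quantitative version (\Cref{claim:blocks-aligned}) using a dense lattice of separator $1$'s and the carefully calibrated $\lambda\ge\kappa(2\mu/\varepsilon+1)+1$ to show that any shift not a multiple of the block length already overshoots $c$ by $\varepsilon$. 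Until the gadget, the spacing argument, and the quantitative anchoring are supplied, the proposal does not constitute a proof of the lemma.
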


To prove \Cref{thm:MSCS_reduction}, we first describe the reduction and then prove several claims
about the structure and the costs of multiple circular shifts in the resulting \fMSCS instance.

\paragraph{Reduction.}

  Consider an instance of \RMCC, that is, a graph  $G=(V,E)$ with a partition of $V$ into $k$ subsets $V_1,\ldots, V_k$ of size~$n:=\frac{|V|}{k}$ each, such that each vertex has degree $d$. 
  Let $V_j=\{v_{j,1}, \ldots, v_{j,n}\}$, $m=|E|$, and $E=\{e_1,\ldots,e_m\}$.
  We assume that~$k\ge 3$ since the instance is trivially solvable otherwise.

  We build an \fMSCS instance with $k+1$ binary strings, hence we consider the local cost function~$f_{k+1}$. For simplicity, we write $f'$, gap~$\varepsilon$, and range~$\mu$ for $f'_{k+1}$,  $\varepsilon_{k+1}$, and $\mu_{k+1}$.
  
  For each $j\in\{1,\ldots,k\}$, let $p_j$ be the length-$k$ string such that $p_j[h]=1$ if $h=j$, and $p_j[h]=0$ otherwise.
  For each vertex $v_{j,i}$, let $q_{j,i}\in\{0,1\}^{m}$ be the string such that
  $$q_{j,i}[h] := \begin{cases}1,\; \text{if } 1\le h\le m \text{ and } v_{j,i}\in e_h\\ 0,\;\text{otherwise} \end{cases}$$
  and let $u_{j,i}:=p_jq_{j,i}$ be the concatenation of~$p_j$ and~$q_{j,i}$. 
  Note that $u_{j,i}$ has length $m':=m+k$, contains~$1+d$ ones and $m'-1-d$ zeros.   
  Let $\longZero:=0^{m'}$ be the string containing~$m'$ zeros and
  define the numbers
  \begin{align*}
    \kappa&:=knd+kn +k,\\
    \gamma&:=nk,\\
    \lambda&:=  \max\left\{ \left\lceil\kappa\left(\frac{2\mu}{\varepsilon}+1\right)\right\rceil, 2n(\gamma+k+1)\right\}+1.
  \end{align*}
Let $\ell:=\lambda(m'+1)\le\poly(nk)$.
For $1\leq j\leq k$, we define the string
$$s_j:=1 u_{j,1} (1 \longZero)^{\gamma+j}1 u_{j,2} (1 \longZero)^{\gamma+j}\ldots 1 u_{j,n} (1 \longZero)^{\gamma+j}(1 \longZero)^{\lambda-n(\gamma+j+1)}.$$
We further define the following \emph{dummy} string
$$s_0=11^{k}0^m(1 \longZero)^{\lambda-1}.$$
Note that each string~$s_j$ has length
$$n(m'+1)(1 +\gamma+j) + (m'+1)(\lambda - n(\gamma + j+1))=\lambda(m'+1)=\ell$$
Finally, we define the target cost
\begin{align*}
  c := &\ell f_{k+1}(0) 
  \\&  + \lambda(k+1)f'(k+1)
  \\& + 2\left(k+{k \choose 2}\right)(f'(2)-f'(1))
  \\& + \kappa f'(1).
\end{align*}
Clearly, the strings~$s_0,\ldots, s_{k}$ and the value~$c$ can be computed in polynomial time. This construction is illustrated in \Cref{fig:clique-to-fmscs}.

In the strings $s_0,\ldots,s_k$, any 1-value at a position~$i$ with $i\bmod (m'+1)=1$ is called a \emph{separator}, 
other 1-values are \emph{coding} positions. A coding position is either \emph{vertex-coding} if it belongs to some~$p_j$ 
(or to the~$k$ non-separator positions of~$s_0$), or \emph{edge-coding} otherwise (then it  belongs to some~$q_{i,j}$).
There are $\lambda (k+1)$ separator positions in total and $\kappa$ coding positions.

Given a multiple circular shift~$\Delta$, we define the \emph{weight} $w$ of a column as the number of 1-values it contains, 
that is, $0\leq w\leq k+1$. The cost for such column is  $f_{k+1}(w)=f_{k+1}(0)+wf'(w)$. Each 1-value of this column is
attributed a \emph{local cost} of $f'(w)$, so that the cost of any solution is composed of a \emph{base cost} of 
$\ell f_{k+1}(0)$ and of the sum of all local costs of all 1-values. In the following we mainly focus on local costs.

 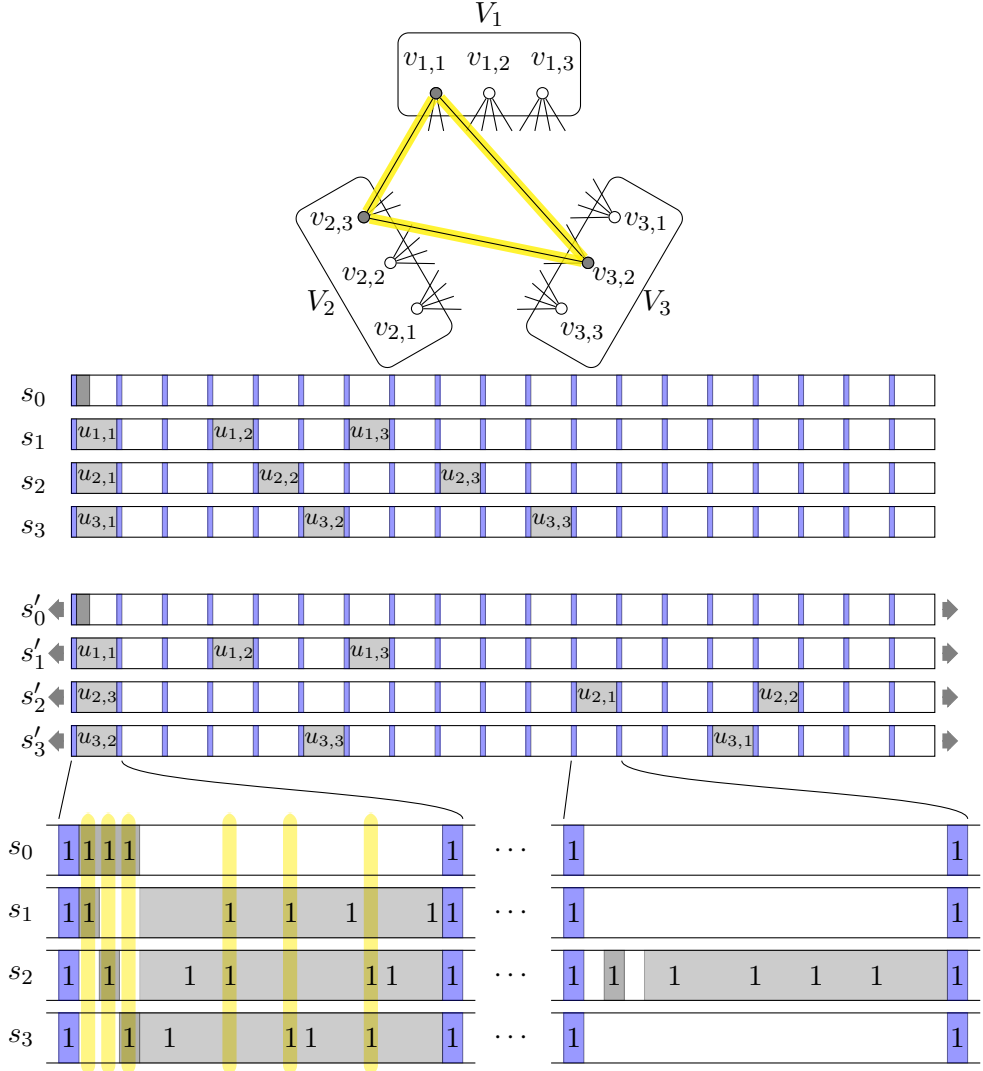
\begin{figure}
 \centering
 	\begin{tikzpicture}
	
 	\foreach \r [count=\i] in {90,210,330} {
		
 		\begin{scope} [rotate=\r]
 		\draw [rounded corners] (1.2,1.2) rectangle (2.3,-1.2);
 		\coordinate (V\i) at (2.55,0) ;
 		\coordinate (Vp\i) at (2,-1.5) ; 		
 		\foreach \y [count=\j] in {.7,0,-.7}{
 			\node[draw,circle, inner sep=1.5pt] (v\i\j) at (1.5,\y) {};
 			\node[] at (1.9,\y*1.2) {$v_{\i,\j}$};    
 			\foreach \delta [count=\d] in {-.3,-.1,.1,.3} {
 				\coordinate (n\i\j\d) at (1,\y+\delta);
 			}
 		}
		
 		\end{scope}
 	}
    \node at (V1) {$V_1$};
    \node at (V2) {$V_2$};
    \node at (V3) {$V_3$};
 	\foreach \s in {12,13,21,22,31,33} {
 		\foreach \d in {1,2,3,4}{
 			\draw (v\s) --(n\s\d);
 		}
 	}
 
 	\foreach \s/\d in {11/2,11/3,23/2,23/3,32/1,32/4} {
 		\draw (v\s) --(n\s\d);
 	}
 	\foreach \s in {11,23,32} {
		
 		\node[draw,circle, inner sep=1.5pt, fill=gray] at (v\s) {};
 	}
	
 	\foreach \s/\t in {11/23,11/32,23/32}{
 		\draw[line width=4pt, yellow, opacity=0.8] (v\s) -- (v\t);
 		\draw (v\s) -- (v\t);
 	}
 	\end{tikzpicture}
 	
 	\begin{tikzpicture}[scale=0.83,myRed/.style={fill=red!85!black,opacity=0.7}, myGray/.style={fill=black, opacity=0.2}, myBlue/.style={fill=blue, opacity=0.4}]
 	\def\w{.72}
	
 	\begin{scope}[shift={(0,4.5)}, yscale=0.7] %%% middle
	
         \foreach \i in {0,...,3} {
             \node at (-0.6,4-\i+0.2)  {$s_\i$};
             \draw (0,4-\i) rectangle (19*\w,4-\i+0.7);
             \foreach \j in {0, ...,18} {
                 \draw [myBlue] (\j*\w,4-\i) rectangle (\j*\w+0.08,4-\i+0.7);
             }            
         }
         \draw [myGray, opacity=0.4] (0.08,4) rectangle (0.4*\w,4+0.7);
     
         \foreach \i/\j/\k in {1/0/1,1/3/2,1/6/3, 2/0/1, 2/4/2,2/8/3, 3/0/1,3/5/2,3/10/3 } {
             \draw [myGray] (\j*\w+0.08,4-\i) rectangle ({(\j+1)*\w},4-\i+0.7);
             \node at (\j*\w+0.5*\w+.06,4-\i+0.35) {\footnotesize $u_{\i,\k}$};         
         }

 	\end{scope}

     \begin{scope}[shift={(0,1)}, yscale=0.7]
 	\foreach \i in {0,...,3} {
 		\node at (-0.6,4-\i+0.35)  {$s_\i'$};
 		\draw (0,4-\i) rectangle (19*\w,4-\i+0.7);
 		\foreach \j in {0, ...,18} {
 			\draw [myBlue] (\j*\w,4-\i) rectangle (\j*\w+0.08,4-\i+0.7);
 		}
		
 		\path[draw=black!50,solid,line width=2mm,fill=black!50,
 		preaction={-triangle 90,thin,draw=black!50,fill=black!50,shorten >=-1mm}
 		] (-0.12, 4.35-\i) -- (-0.25, 4.35-\i);
 		\path[draw=black!50,solid,line width=2mm,fill=black!50,
 		preaction={-triangle 90,thin,draw=black!50,fill=black!50,shorten >=-1mm}
 		] (19*\w+0.12, 4.35-\i) -- (19*\w+0.25, 4.35-\i);
 	}
 \draw [myGray, opacity=0.4] (0.08,4) rectangle (0.4*\w,4+0.7);
 
 	\foreach \i/\j/\k in {1/0/1,1/3/2,1/6/3, 2/0/3, 2/11/1,2/15/2, 3/14/1, 3/0/2,3/5/3 } {
 		\draw [myGray] (\j*\w+0.08,4-\i) rectangle ({(\j+1)*\w},4-\i+0.7);
 		\node at (\j*\w+0.5*\w+.06,4-\i+0.35) {\footnotesize $u_{\i,\k}$};         
		
 	}
	
 	\coordinate (zoom1) at (0,.9);
 	\coordinate (zoom2) at (\w+.08,.9);
	
 	\coordinate (zoom1b) at (11*\w+0,.9);
 	\coordinate (zoom2b) at (11*\w+\w+.08,.9);
 	\end{scope}
	
 	\def\w{.32}
 	\begin{scope}[shift={(-.2,-4.2)}]  %%%%%%%% bottom left
	
 	\coordinate (zoom3) at (0,4.9);
 	\coordinate (zoom4) at (17*\w+3*\w,4.9);
 	\draw (zoom1) -- (zoom3);
 	\draw (zoom2) to[in=90, out=-90, looseness=0.2] (zoom4);
	
 	\foreach \j in {-2,-1,0,5,8,12} {
 		\fill[yellow!60, rounded corners] (\j*\w+0.1*\w+3*\w,5) rectangle  (\j*\w+0.8*\w+3*\w,.8) ;
 	} 
	
 	\foreach \i in {0,...,3} {
 		\node at (-0.6,4-\i+0.4)  {$s_\i$};
 		%\draw (-0.2,4-\i) rectangle (16*\w,4-\i+0.8);
 		\draw (-0.2,4-\i) -- (17*\w+3*\w+.2,4-\i);
 		\draw (-0.2,4-\i+.8) -- (17*\w+3*\w+.2,4-\i+0.8);
 		\draw [myBlue] (0,4-\i) rectangle (\w,4-\i+0.8);  
 		\draw [myBlue] (16*\w+3*\w,4-\i) rectangle (16*\w+3*\w+\w,4-\i+0.8);  
 		\node at (16*\w+3*\w+.5*\w,4-\i+0.4) {$1$};
 		\node at (.5*\w,4-\i+0.4) {$1$};
 	}

 	\foreach \i in {1,...,3} {
 		\draw [myGray, opacity=0.2] (\w+3*\w,4-\i) rectangle (16*\w+3*\w,4.8-\i);
 	}
 	\foreach \j in {1,5,8,12,15} {
 		%\node at (.5*\w+\j*\w,4+0.4) {$1$};
 	}
 	\foreach \i/\j in {1/5,1/8,1/11,1/15,2/5,2/12,2/13,2/3, 3/2,3/8,3/12,3/9} {
 		\node at (.5*\w+\j*\w+3*\w,4-\i+0.4) {$1$};
 	}
 	\foreach \i/\j in {1/1,2/2,3/3} {
	\node at (.5*\w+\j*\w,4-\i+0.4) {$1$};
			\draw [myGray, opacity=0.3] (\j*\w,4-\i) rectangle (\w+\j*\w,4.8-\i);
}
 
	\foreach \s/\t/\i in {1/3/0} {
		\node at (.5*\w+\s*\w,4-\i+0.4) {$1$};
		\node at (.5*\w+\t*\w,4-\i+0.4) {$1$};
		\node at (.5*\w+\s*\w*.5+\t*\w*.5,4-\i+0.4) {$1$};%\tiny $\cdots$};
		\draw [myGray, opacity=0.3] (\s*\w,4-\i) rectangle (\w+\t*\w,4.8-\i);
	}
 	\foreach \j in {3.5,7,10.5,14} {
 	%	\node at  (\j*\w,4+0.4) {...};
 	}
	
 	\end{scope}
 	\begin{scope}[shift={(7.8,-4.2)}] %%%%%%%% bottom right
	
 	\coordinate (zoom3) at (0,4.9);
 	\coordinate (zoom4) at (17*\w+3*\w,4.9);
 	\draw (zoom1b) -- (zoom3);
 	\draw (zoom2b) to[in=90, out=-90, looseness=0.2] (zoom4);

 	\foreach \i in {0,...,3} {
 		\node at (-0.8,4-\i+0.4)  {$\cdots$};
 		%\draw (-0.2,4-\i) rectangle (16*\w,4-\i+0.8);
 		\draw (-0.2,4-\i) -- (17*\w+3*\w+.2,4-\i);
 		\draw (-0.2,4-\i+.8) -- (17*\w+3*\w+.2,4-\i+0.8);
 		\draw [myBlue] (0,4-\i) rectangle (\w,4-\i+0.8);  
 		\draw [myBlue] (16*\w+3*\w,4-\i) rectangle (16*\w+3*\w+\w,4-\i+0.8);  
 		\node at (16*\w+3*\w+.5*\w,4-\i+0.4) {$1$};
 		\node at (.5*\w,4-\i+0.4) {$1$};
 	}

 	\foreach \i in {2} {
 		\draw [myGray, opacity=0.2] (\w+3*\w,4-\i) rectangle (16*\w+3*\w,4.8-\i);
 	}
 	
 	\foreach \i/\j in {2/2,2/6,2/12,2/9} {
 		\node at (.5*\w+\j*\w+3*\w,4-\i+0.4) {$1$};
 	}
 	\foreach \i/\j in {2/2} {
	\node at (.5*\w+\j*\w,4-\i+0.4) {$1$};
			\draw [myGray, opacity=0.3] (\j*\w,4-\i) rectangle (\w+\j*\w,4.8-\i);
}
 	
 	\end{scope}
	
 	\end{tikzpicture}
 	\caption{\label{fig:clique-to-fmscs}Illustration of the reduction from an instance of \RMCC (top) with $k=3$. Middle: Sequences $s_0$ to $s_3$, and their optimal circular shifts $s_0'$ to $s_3'$. Blue stripes represent the regularly-spaced separator 1-values. The  (light) gray intervals contain both $0$'s and $1$'s according to strings~$u_{i,j}$, and white intervals contain only 0's. The spacing between consecutive $u_{i,j}$'s is defined using~$\gamma$ and the overall string length depends on $\lambda$, both values are chosen so as to restrict the possible alignments between different $u_{i,j}$'s; in this example we use $\gamma=2$ and~$\lambda=19$. Bottom: a zoom-in on blocks~1 and~12 in the shifted strings (only non-0 values are indicated, weight-2 columns are highlighted). Through vertex columns, the dummy string~$s_0$ ensures that one vertex occupies block~1 in each row, and weight-2 edge-columns ensure that~${k \choose 2}$ edges (as highlighted in the graph) are induced by these vertices.
 	}
 \end{figure}

It remains to show that there exists a multiple circular shift of $s_0,\ldots, s_{k}$ with cost~$c$ if~$G$ contains a properly colored $k$-clique, and that
otherwise every multiple circular shift has cost at least~$c+\varepsilon$.
We proceed by analyzing the structure and costs of optimal multiple circular shifts.

\paragraph{Aligning Separators.}

  Let $\Delta=(\delta_0,\ldots,\delta_{k})$ be a multiple circular shift of~$s_0,\ldots,s_{k}$.
  Without loss of generality, we can assume that $\delta_{0}=0$ since setting each~$\delta_j$ to~$(\delta_j-\delta_{0})\bmod \ell$ yields a shift with the same cost.
  First, we show that if $\delta_j \bmod (m'+1) \neq 0$ holds for some $0< j \le k$, then
  $\Delta$ has large cost.
  \begin{claim}\label{claim:blocks-aligned}
    For any multiple circular shift $\Delta=(\delta_0=0, \delta_1,\ldots,\delta_{k})$ with $\delta_j\bmod~(m'+1) \neq 0$ for some $1< j \le k$, it holds that $\cost_f(\Delta)\ge c+\varepsilon$.
  \end{claim}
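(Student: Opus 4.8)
To prove the claim, my plan is to exploit the rigid modular structure of the separator positions, using that $\ell=\lambda(m'+1)$ is a multiple of $m'+1$. I would first partition the $\ell$ columns into $m'+1$ \emph{classes} according to their index modulo $m'+1$, so that each class contains exactly $\lambda$ columns. Since the separators of $s_j$ occupy precisely the positions $\equiv 1 \pmod{m'+1}$, a short computation shows that after shifting, all $\lambda$ separators of $s_j^{\leftarrow\delta_j}$ fall into the single class $C_j:=(1-\delta_j)\bmod(m'+1)$, exactly one per column of that class; in particular $C_0=1$ because $\delta_0=0$. Moreover, since within each of the $n$ blocks $u_{j,\cdot}$ every relative position carries at most one symbol, $s_j^{\leftarrow\delta_j}$ has no coding $1$ in its class $C_j$ and at most $n$ coding $1$'s in any other class; and $s_0$ has exactly $\lambda+k$ ones, namely one separator in each of the $\lambda$ columns of class $1$ and $k$ coding $1$'s lying in $k$ columns of other classes.

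Next I would assume the hypothesis, $C_j\neq C_0=1$ for some $j$, and set $S:=\{\, j'\in\{1,\dots,k\}: C_{j'}=1 \,\}$; then $j\notin S$, so $|S|\le k-1$. Consider the $\lambda$ columns of class $1$. Each of them contains a $1$ of $s_0$ and a $1$ of every $s_{j'}$ with $j'\in S$, while the strings $s_{j'}$ with $j'\in\{1,\dots,k\}\setminus S$ together place at most $(k-|S|)\,n\le kn=\gamma$ ones into these columns. Hence at least $\lambda-\gamma$ columns of class $1$ have weight exactly $1+|S|\le k$. Since every weight-$(k+1)$ column must contain a $1$ of $s_0$, and $s_0$'s ones occupy only the $\lambda$ columns of class $1$ (at most $\gamma$ of which have weight $k+1$) and $k$ further columns, the number $a$ of weight-$(k+1)$ columns is at most $\gamma+k$. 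Therefore at least $\lambda(k+1)+\kappa-(k+1)(\gamma+k)$ of the $\lambda(k+1)+\kappa$ $1$-values of the instance lie in columns of weight at most $k$.

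To turn this into a cost bound I would write each column cost as $f_{k+1}(w)=f_{k+1}(0)+wf'(w)$, so that $\cost_f(\Delta)=\ell f_{k+1}(0)+\sum_p f'(w_p)$, where $p$ ranges over all $1$-values and $w_p$ is the weight of $p$'s column. By the grouping property $f'(w)\ge f'(k+1)$ for all $w\in\{1,\dots,k+1\}$, and $f'(w)\ge f'(k+1)+\varepsilon$ for $w\le k$ by the definition of the gap $\varepsilon$. Combining with the count above,
$$\cost_f(\Delta)\;\ge\;\ell f_{k+1}(0)+\big(\lambda(k+1)+\kappa\big)f'(k+1)+\varepsilon\big((k+1)(\lambda-\gamma-k)+\kappa\big).$$
Subtracting the target value $c$, cancelling the common $\ell f_{k+1}(0)$ and $\lambda(k+1)f'(k+1)$, and using $0<f'(1)-f'(k+1)\le 2\mu$ (range bound) together with $f'(1)>f'(2)$ (grouping), I obtain $\cost_f(\Delta)-c\ge \varepsilon(k+1)(\lambda-\gamma-k)+\varepsilon\kappa-2\kappa\mu$. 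Finally, from $\kappa=knd+kn+k\ge kn+k=\gamma+k$ and the choice $\lambda\ge\lceil\kappa(2\mu/\varepsilon+1)\rceil+1\ge 2\kappa\mu/\varepsilon+\kappa+1$ it follows that $\lambda-\gamma-k\ge 2\kappa\mu/\varepsilon+1$, hence $\varepsilon(\lambda-\gamma-k)\ge 2\kappa\mu+\varepsilon$, which makes the displayed difference at least $\varepsilon$; this gives $\cost_f(\Delta)\ge c+\varepsilon$.

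The crux is this last estimate: one must ensure that the $\Omega(\lambda)$ $1$-values forced into weight-$\le k$ columns each cost an extra $\varepsilon$, and that this surplus outweighs \emph{simultaneously} the potential savings of all $\kappa$ coding $1$-values (the $-2\kappa\mu$ term) and the negative ``bonus'' term $2(k+\binom{k}{2})(f'(2)-f'(1))$ that is built into $c$ for the clique case. This is exactly why $\lambda$ is chosen polynomially larger than $\kappa$, $\gamma$, and $\mu/\varepsilon$; keeping these three regimes (separators, weight-$2$ coding columns, and everything else) properly separated in the bookkeeping is the delicate part, whereas the modular-structure observations of the first paragraph are routine.
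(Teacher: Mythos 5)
Your proof is correct and relies on the same key idea as the paper: because $\delta_0=0$ and $\delta_j\bmod(m'+1)\neq 0$, the separators of rows $0$ and $j$ land in different residue classes modulo $m'+1$, so weight-$(k+1)$ columns are rare, and the generous choice of $\lambda$ then forces a cost surplus of at least $\varepsilon$. You do, however, reach the bound on the number of weight-$(k+1)$ columns by a more elaborate route than the paper: you partition columns into residue classes, track how many $1$'s each input can drop into class $1$, and conclude that at most $\gamma+k$ columns can have weight $k+1$. The paper gets by with a one-line observation — a weight-$(k+1)$ column cannot contain separator $1$'s from both row $0$ and row $j$, hence must contain a coding value, of which there are only $\kappa$ — and that (weaker, but sufficient) bound of $\kappa$ is already what the definition of $\lambda$ is calibrated against. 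One small thing your accounting does better: by lower-bounding \emph{every} $1$-value by $f'(k+1)$ (rather than only the separators in low-weight columns) you avoid any implicit sign assumption on $f'$, whereas the paper's first displayed inequality quietly drops the contributions of coding values and high-weight separators, which is only safe if $f'\ge 0$ (harmless after the affine normalization mentioned earlier, but worth noting). Both the class-partition detour and the tighter $\gamma+k$ bound are thus more work than needed, but everything you wrote checks out.
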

  \begin{proof}  	
  	
    Assume that~$\delta_j\bmod (m'+1) = a\in\{1,\ldots,m'\}$ for some $0< j \le k$.  
    We count the number of weight-$(k+1)$ columns: such a column cannot only contain separator values since it cannot contain a separator value in both row~$0$ and row~$j$. Hence, it contains at least one coding value. Since there are $\kappa$ coding values, there are at most 
    $\kappa$ weight-$(k+1)$ columns, so at most $k\kappa$ separator values have local cost $f'(k+1)$. 
    All other separator values have local cost $f'(w)$ for some $w<k+1$, which is at least $f'(k+1)+\varepsilon$. There are at least  $\lambda (k+1)- k\kappa$ such separator values.
    Adding the base cost of $\ell f_{k+1}(0)$, the cost of $\Delta$ is thus at least:
    \begin{align*}
      \cost_f(\Delta) 
      &\geq \ell f_{k+1}(0) +  (\lambda(k+1)-k\kappa)(f'(k+1)+\varepsilon)\\
      &\geq \ell f_{k+1}(0) + \lambda(k+1)f'(k+1) + \lambda k\varepsilon - k\kappa(\mu+\varepsilon).      
     \end{align*}
    Recall that
    \begin{align*}
    c &= \ell f_{k+1}(0)  + \lambda(k+1)f'(k+1)
     + 2\left(k+{k \choose 2}\right)(f'(2)-f'(1))
     + \kappa f'(1) 
     \\
     &
     \leq \ell f_{k+1}(0) + \lambda (k+1) f'(k+1)
       +\kappa\mu
    \end{align*}
    since $f'(2)-f'(1) \le -\varepsilon < 0$.
    Combining the above bounds for $c$ and $\cost_f(\Delta)$ using $\lambda \ge \kappa\left(\frac{2\mu}{\varepsilon}+1\right)+1$ (by definition) yields
    \begin{align*}     
    \cost_f(\Delta) -c &\ge  \lambda k\varepsilon - k\kappa(\mu+\varepsilon) - 
                         \kappa\mu\\
      &\ge 2\kappa k \mu + \kappa k \varepsilon + k\varepsilon - k\kappa(\mu+\varepsilon) - 
                         \kappa\mu\\
    &\geq \varepsilon.
    \end{align*}
  \end{proof}

\paragraph{Cost of Circular Shifts.}
We assume from now on that $\delta_j\bmod (m'+1)=0$ for all $j\in\{0,\ldots,k\}$. We now provide a precise characterization of the cost of~$\Delta$.

For~$l\in\{1,\ldots,\lambda\}$, we define the \emph{$l$-th block} consisting of the~$m'$ consecutive columns~$(l-1)(m'+1)+2,\ldots,l(m'+1)$. The \emph{block index} of column $i$ is $i-1\bmod (m'+1)$.
For $j\in\{1,\ldots,k\}$, the substring $s_j^{\leftarrow\delta_j}[(l-1)(m'+1)+2]\ldots s_j^{\leftarrow\delta_j}[l(m'+1)]$ corresponding to the $l$-th block of~$s_j^{\leftarrow\delta_j}$
either equals some $u_{j,i}$ or $\longZero$.
We say that block~$l$ is \emph{occupied} by vertex $v_{j,i}\in V_j$, %~$j\in\{1,\ldots,k\}$
if the corresponding substring of~$s_j^{\leftarrow\delta_j}$ is $u_{j,i}$. Note that for each $j$ there are $n$ distinct blocks out of $\lambda$ that are occupied by a vertex in $V_j$. Columns with block-index $1$ to $k$ are called \emph{vertex-columns}  and columns with block-index $k+1$ to $k+m=m'$ are \emph{edge-columns} (they may only contain edge-coding values from some $q_{i,j}$). 
Let~$P$ denote the set of vertices occupying block~$1$.

\begin{observation}\label{obs:vertex-column}
In block $l$, if the vertex-column with block-index $h$ has weight~2, then $l=1$, and $V_h\cap P\neq\emptyset$. No vertex-column can have weight 3 or more.
\end{observation}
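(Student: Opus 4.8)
The plan is to analyze, within a single block~$l$, which rows can contribute a $1$-value to a vertex-column (one with block-index $h\in\{1,\dots,k\}$). Recall that the separators have been aligned ($\delta_j\bmod(m'+1)=0$), so each block of each shifted string is either an occurrence of some $u_{j,i}=p_jq_{j,i}$ or the all-zero string $\longZero$; the vertex-columns of block~$l$ therefore see only the $p_j$-part (or the $k$ non-separator positions of $s_0$). Since $p_j$ has its unique $1$ in position $j$, string~$s_j$ contributes a $1$ to the vertex-column with block-index $h$ only if $h=j$ \emph{and} block~$l$ is occupied by some vertex $v_{j,i}\in V_j$. Likewise, the dummy string $s_0=11^k0^m(1\longZero)^{\lambda-1}$ has its block-$1$ substring equal to $1^k0^m$ (after the leading separator $1$), so $s_0$ contributes a $1$ to the vertex-column with block-index $h$ precisely when $l=1$, for every $h\in\{1,\dots,k\}$; in all later blocks the substring of $s_0$ is $\longZero$ and contributes nothing.

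First I would record these two facts: (i) for $j\in\{1,\dots,k\}$, row~$j$ contributes a $1$ to a vertex-column of block-index $h$ only if $j=h$, so at most one of the rows $1,\dots,k$ can be a $1$ in any given vertex-column; (ii) row~$0$ contributes a $1$ to a vertex-column only in block~$1$. Combining (i) and (ii), a vertex-column in a block $l>1$ has weight at most $1$ (only its matching row $j=h$ can be $1$), and a vertex-column in block~$1$ has weight at most $2$ (row $0$, plus possibly the matching row $h$). This immediately gives the claimed statement: weight $\ge 3$ is impossible in any vertex-column, and a weight-$2$ vertex-column forces $l=1$. For the second part, a weight-$2$ vertex-column with block-index $h$ in block~$1$ must have its second $1$ in row $h$, which by fact (i) happens only if block~$1$ is occupied by some vertex $v_{h,i}\in V_h$; that is exactly the statement $V_h\cap P\neq\emptyset$ (recall $P$ is the set of vertices occupying block~$1$).

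I do not expect a serious obstacle here — this observation is essentially a bookkeeping consequence of how the $p_j$'s and the dummy string are laid out once separators are aligned. The only point requiring a little care is the dummy string: one must verify that after the leading separator, the first block of $s_0$ is exactly $1^k0^m$ (so its $k$ vertex-column positions are all $1$ and its $m$ edge-column positions are all $0$), while every subsequent block of $s_0$ is $\longZero$ because $s_0$'s tail is $(1\longZero)^{\lambda-1}$ — the $1$'s there sit at separator positions, not at coding positions. A second minor check is that $u_{j,i}$'s edge-part $q_{j,i}$ occupies block-indices $k+1,\dots,m'$, so it never interferes with a vertex-column; this is immediate from $u_{j,i}=p_jq_{j,i}$ with $|p_j|=k$. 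With these two checks in place, the Observation follows directly.
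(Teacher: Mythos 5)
Your proof is correct and takes essentially the same approach as the paper: observe that among rows $1,\dots,k$, only row $h$ can place a $1$ in the vertex-column of block-index $h$, and that row $0$ contributes a $1$ to vertex-columns only in block $1$; the two claims then follow immediately. The paper's version is more terse, but the underlying bookkeeping is identical, and your extra checks on the layout of $s_0$ and the split $u_{j,i}=p_jq_{j,i}$ are accurate.
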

\begin{proof}
  Consider the vertex-column with block-index $h$. By construction, only $s_h$ may have a 1 in this column (which is true if some vertex from $V_h$ occupies this block). The string~$s_0$ has a 1 in this column if it is a column in block 1.
  Thus, assuming that column~$h$ has weight~2 implies $l=1$ and $V_h\cap P\neq \emptyset$. 
\end{proof}
\begin{observation}\label{obs:edge-column}
In block $l$, if the edge-column with block-index $k+h$, $1\leq h\leq m$, has weight 2, then block~$l$ is occupied by both vertices of edge~$e_h\in E$. No edge-column can have weight 3 or more.
\end{observation}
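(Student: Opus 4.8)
The plan is to mirror the proof of \Cref{obs:vertex-column}, but now exploiting the edge-coding structure of the strings~$q_{j,i}$. The key point is that the only $1$-values appearing in an edge-column with block-index~$k+h$ are those coming from some~$q_{j,i}[h]$, i.e.\ exactly from vertices~$v_{j,i}$ that are endpoints of the edge~$e_h$; the dummy string~$s_0$ has only $0$'s in all edge-columns (recall $s_0 = 1 1^k 0^m (1\longZero)^{\lambda-1}$, so after the first $k+1$ symbols it contributes no coding $1$'s beyond separators, and in particular no edge-coding $1$).

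First I would fix a block~$l$ and an index~$h\in\{1,\ldots,m\}$, and consider the edge-column of block-index~$k+h$. By the definition of a column under a block-aligned shift (which we may assume by \Cref{claim:blocks-aligned}), row~$j$ of this column is~$u_{j,i}[k+h] = q_{j,i}[h]$ whenever block~$l$ of~$s_j^{\leftarrow\delta_j}$ is occupied by vertex~$v_{j,i}$, and is~$0$ if that block equals~$\longZero$; row~$0$ is always~$0$ in an edge-column. So a row~$j$ contributes a $1$ to this column precisely when block~$l$ is occupied by some vertex~$v_{j,i}$ with~$v_{j,i}\in e_h$. Since~$e_h$ has exactly two endpoints, and they lie in (at most, and in the properly colored case exactly) two distinct color classes~$V_j$, at most two rows can contribute a~$1$. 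This immediately gives the ``weight $3$ or more is impossible'' part. For the weight-$2$ characterization, if the column has weight exactly~$2$ then there are two distinct rows~$j\neq j'$ such that block~$l$ of~$s_j^{\leftarrow\delta_j}$ is occupied by an endpoint of~$e_h$ lying in~$V_j$ and block~$l$ of~$s_{j'}^{\leftarrow\delta_{j'}}$ is occupied by an endpoint of~$e_h$ lying in~$V_{j'}$; since~$e_h$ has only two endpoints, these two occupying vertices must be exactly the two endpoints of~$e_h$, i.e.\ block~$l$ is occupied by both endpoints of~$e_h$, which is what we want.

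I do not expect a serious obstacle here; the only things to be careful about are the bookkeeping of which substring sits in block~$l$ of~$s_j^{\leftarrow\delta_j}$ (occupied by a vertex versus~$\longZero$) and the fact that~$s_0$ genuinely has no edge-coding $1$-values, so that it never contributes to an edge-column. One should also note that the two endpoints of~$e_h$ need not lie in different color classes in an arbitrary \RMCC instance unless the instance is a \emph{proper} multicolored graph, but this does not matter for the observation: even if both endpoints happened to lie in the same~$V_j$, only one of them could occupy block~$l$ of~$s_j^{\leftarrow\delta_j}$ at a time, so the contribution from row~$j$ is still at most one $1$-value, and the weight bound and the ``occupied by both endpoints'' conclusion remain valid. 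Hence the argument is essentially a direct unpacking of the construction, parallel to \Cref{obs:vertex-column}.
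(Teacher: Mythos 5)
Your proposal is correct and follows essentially the same route as the paper: unpack the construction to see that row~$0$ and any row~$j$ whose block~$l$ is not occupied by an endpoint of~$e_h$ contribute only $0$'s, so at most the (at most two) rows housing endpoints of~$e_h$ can contribute $1$'s, and weight~$2$ forces both endpoints to occupy block~$l$. The extra remark you add about the possibility of a monochromatic edge~$e_h$ is a small refinement the paper elides (its proof tacitly writes the two endpoints with distinct row indices), but your handling of it is sound and does not change the overall argument.
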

\begin{proof}
Consider an edge-column with block-index $k+h$, $1\leq h\leq m$. Denote by $v_{j_0,i_0}$ and $v_{j_1,i_1}$ the endpoints of edge $e_h$.
For any $1\leq j\leq k$, $s_j$ has a 1 in this column only if block $l$ is occupied by some vertex $v_{j,i}$, and, moreover, only if $u_{j,i}$ has a 1 in column $h$, i.e. $v_{j,i}= v_{j_0,i_0}$ or $v_{j,i}= v_{j_1,i_1}$, hence $j=j_1$ or $j=j_2$. So this column may not have weight 3 or more, and if it has weight~2,  then block~$l$ is occupied by both endpoints of $e_h$.
\end{proof}

From Observations~\ref{obs:vertex-column} and~\ref{obs:edge-column}, it follows that no column (beside separators) can have weight~3 or more. Since the number of coding values is fixed, the cost is entirely determined by the number of weight-2 columns. The following result quantifies this observation.

\begin{claim}\label{claim:costW2}
  Let $W_2$ be the number of weight-2 columns.
  If~$W_2=k+{k \choose 2}$, then $\cost_f(\Delta) = c$. 
  If~$W_2 < k+{k \choose 2}$, then $\cost_f(\Delta)\ge c + \varepsilon$.
\end{claim}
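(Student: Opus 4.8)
The plan is to compute $\cost_f(\Delta)$ exactly as a function of $W_2$, using the fact established by Observations~\ref{obs:vertex-column} and~\ref{obs:edge-column} that every non-separator column has weight at most~$2$. First I would account for the total number of $1$-values in the instance: it is $\lambda(k+1)$ separator values plus $\kappa$ coding values. Under the assumption $\delta_j \bmod (m'+1) = 0$ for all $j$, each separator column is perfectly aligned and has weight exactly $k+1$ (all $k+1$ strings carry a separator in every separator column by construction), contributing local cost $f'(k+1)$ per value, i.e.\ $\lambda(k+1)f'(k+1)$ in total, plus base cost $\ell f_{k+1}(0)$. It remains to account for the $\kappa$ coding values spread over weight-$1$ and weight-$2$ columns. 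If there are $W_2$ weight-$2$ columns among the non-separator columns, these account for $2W_2$ of the coding values (local cost $f'(2)$ each), and the remaining $\kappa - 2W_2$ coding values lie in weight-$1$ columns (local cost $f'(1)$ each). Hence
\begin{align*}
  \cost_f(\Delta) &= \ell f_{k+1}(0) + \lambda(k+1)f'(k+1) + 2W_2\, f'(2) + (\kappa - 2W_2) f'(1)\\
  &= \ell f_{k+1}(0) + \lambda(k+1)f'(k+1) + \kappa f'(1) + 2W_2\big(f'(2) - f'(1)\big).
\end{align*}
Comparing with the definition of $c$, we see $\cost_f(\Delta) - c = 2\big(W_2 - (k+\binom{k}{2})\big)\big(f'(2)-f'(1)\big)$.

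Next I would argue that $W_2 \le k + \binom{k}{2}$ always holds, so that the difference above is nonnegative. By Observation~\ref{obs:vertex-column}, weight-$2$ vertex-columns can only occur in block~$1$, and there is at most one per block-index $h \in \{1,\dots,k\}$, giving at most $k$ such columns. By Observation~\ref{obs:edge-column}, a weight-$2$ edge-column with block-index $k+h$ forces both endpoints of $e_h$ to occupy the same block $l$; since for each color~$j$ exactly one vertex of $V_j$ occupies each given block, and each block receives at most one vertex per color, the number of edges whose both endpoints occupy a common block is at most the number of edges inside the (at most) one-per-color vertex set sitting in that block, summed over blocks — which is at most $\binom{k}{2}$ per block, but one must be careful here. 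Actually the cleaner bound: fix a block $l$; the vertices occupying it form a set with at most one vertex per color, so at most $\binom{k}{2}$ pairs, hence at most $\binom{k}{2}$ weight-$2$ edge-columns in block $l$; but I need the total over all blocks to be $\binom{k}{2}$. The correct argument is that a weight-$2$ edge-column for $e_h$ requires both endpoints of $e_h$ in the \emph{same} block, and each edge has its two endpoints in two fixed colors; since within each color the vertices are spread over $n$ distinct blocks, and distinct blocks are pairwise disjoint in which vertices they hold, each edge contributes at most one weight-$2$ edge-column total. Summing over the at most $\binom{k}{2}$ color-pairs is not quite the bound either — rather, I would bound $W_2^{\text{edge}}$ by noting that the $\binom{k}{2}$ bound comes from restricting to block~$1$ via the dummy string: the precise combinatorial bookkeeping showing $W_2 \le k + \binom{k}{2}$ is the step I expect to require the most care, and it is where the roles of $s_0$, $\gamma$, and $\lambda$ enter (they force all ``useful'' coincidences into block~$1$).

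Granting $W_2 \le k + \binom{k}{2}$ and recalling $f'(2) - f'(1) < 0$ (from the grouping property, in fact $f'(2)-f'(1) \le -\varepsilon$), the displayed identity gives: if $W_2 = k + \binom{k}{2}$ then $\cost_f(\Delta) = c$; and if $W_2 < k+\binom{k}{2}$, i.e.\ $W_2 \le k + \binom{k}{2} - 1$, then
\begin{align*}
  \cost_f(\Delta) - c = 2\Big(k+\tbinom{k}{2} - W_2\Big)\big(f'(1) - f'(2)\big) \ge 2 \cdot 1 \cdot \varepsilon \ge \varepsilon,
\end{align*}
which is exactly the claim. The main obstacle, as noted, is the clean verification of the upper bound $W_2 \le k + \binom{k}{2}$; everything else is the linear-algebraic bookkeeping of local costs, which the problem setup (constant total number of coding values, all non-separator columns of weight $\le 2$) makes routine. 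I would also double-check the edge case $k \ge 3$ (already assumed) so that the separator-column count and the definition of $c$ are consistent, and confirm that the weight-$2$ count being realized exactly at $k+\binom{k}{2}$ corresponds to a genuine $k$-clique structure (this is what links the two bullet points of Lemma~\ref{thm:MSCS_reduction}, but is arguably the subject of a separate claim).
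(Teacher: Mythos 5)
Your derivation of the exact cost formula
\begin{align*}
  \cost_f(\Delta) = \ell f_{k+1}(0) + \lambda(k+1)f'(k+1) + \kappa f'(1) + 2W_2\big(f'(2) - f'(1)\big)
\end{align*}
and the final two-case conclusion match the paper's proof exactly, so the core of the argument is right. However, the middle section of your proposal — the attempt to establish $W_2 \le k + \binom{k}{2}$, which you acknowledge struggling to pin down — is entirely unnecessary for \Cref{claim:costW2}. Both bullets of the claim are \emph{conditional}: the first assumes $W_2 = k + \binom{k}{2}$ and the second assumes $W_2 < k + \binom{k}{2}$, and in each case the conclusion follows immediately from the displayed identity together with $f'(2)-f'(1)\le -\varepsilon$. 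There is nothing to prove about $W_2$ itself here. The bound you were chasing (at most $k$ weight-2 vertex-columns, at most $\binom{k}{2}$ weight-2 edge-columns) is indeed proved in the paper, but as separate statements (\Cref{claim:noSquareCorners}, \Cref{claim:fewHeavyEdgeColumns}, \Cref{claim:noCliqueManyW2}) serving a different purpose: showing that a graph with no clique forces $W_2 < k + \binom{k}{2}$, so that \Cref{claim:costW2} can then be invoked. Conflating the two makes the write-up look incomplete when it isn't; if you drop the detour, what remains is a correct proof taking essentially the same route as the paper.
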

\begin{proof}
	The base cost $\ell f_{k+1}(0)$ of the solution only depends on the number~$\ell$ of columns.
	Separator values are in weight-$(k+1)$ columns.
        Since there are~$\lambda$ of them, it follows that the total local cost of all separator values is $\lambda(k+1)f'(k+1)$.
	
	The total number of coding values is  $\kappa$, each coding value has a local weight of $f'(1)$ if it belongs to a weight-1 column, and $f'(2)$ otherwise (since there is no vertex- or edge-column with weight 3 or more). There are $W_2$ weight-2 columns, so exactly $2W_2$ coding values within weight-2 columns. Summing the base cost with the local costs of all separator and coding values, we get:
		\begin{align*}
		 \cost_f(\Delta) 
		 = & \ell f_{k+1}(0) 
		 \\&  + \lambda(k+1)f'(k+1)
		 \\& + 2W_2(f'(2)-f'(1))
		 \\& + \kappa f'(1).
		\end{align*}
	
                Thus, by definition of~$c$, we have $\cost_f(\Delta) = c$ if~$W_2 = k+{k \choose 2}$.
                If $W_2 < k + {k\choose 2}$, then using the fact that, by assumption, $f'(2)-f'(1) \le -\varepsilon$, we obtain
                $$\cost_f(\Delta) = c + 2\left(W_2-k-{k\choose 2}\right)(f'(2)-f'(1)) \ge c+\varepsilon.$$
\end{proof}

Since the cost is determined by the number of weight-2 columns, we need to evaluate this number. \Cref{obs:vertex-column} gives a direct upper bound for weight-2 vertex columns (at most~$k$, since they all are in block~1), hence we now focus on weight-2 edge-columns. The following claim will help us to upper-bound their number.

\begin{claim} \label{claim:noSquareCorners}
	For any two rows $j,j'$, there exists at most one block $l$ that contains vertices from both $V_j$ and $V_{j'}$.
\end{claim}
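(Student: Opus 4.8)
The plan is to bound how far apart two vertex-occupied blocks in different rows can drift once the separators are aligned. Fix two rows $j, j'$ with $1 \le j < j' \le k$ (the case involving row $0$ is trivial since $s_0$ occupies only block $1$ in the vertex sense and no $u$-block at all). Recall that in $s_j$ the blocks occupied by vertices of $V_j$ are exactly those at positions $1 + i(\gamma+j+1)$ for $i = 0, \ldots, n-1$ (before shifting): the spacing between consecutive occupied blocks in row $j$ is exactly $\gamma + j + 1$. After a circular shift by a multiple of $m'+1$, this set of occupied blocks is a cyclic translate (modulo $\lambda$) of $\{0, \gamma+j+1, 2(\gamma+j+1), \ldots, (n-1)(\gamma+j+1)\}$. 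The same holds for row $j'$ with spacing $\gamma + j' + 1$.

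The key arithmetic observation is that the two spacings $\gamma + j + 1$ and $\gamma + j' + 1$ differ by $j' - j$ with $1 \le j' - j \le k-1 < k$, and both spacings are at least $\gamma + 2 = nk + 2$, which is much larger than $n(j' - j) < nk$. So I would argue as follows: suppose blocks $l$ and $l'$ are both occupied by a vertex of $V_j$ \emph{and} a vertex of $V_{j'}$, with $l \ne l'$. Write $l = a + \alpha(\gamma+j+1) \pmod \lambda$ and $l = b + \alpha'(\gamma+j'+1) \pmod \lambda$ for the row-$j$ and row-$j'$ occupied-block indices $\alpha, \alpha' \in \{0, \ldots, n-1\}$, where $a, b$ are the (shift-dependent) offsets of the first occupied block in each row; similarly $l' = a + \beta(\gamma+j+1) = b + \beta'(\gamma+j'+1) \pmod \lambda$ with $\beta, \beta' \in \{0, \ldots, n-1\}$. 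Subtracting, $(\alpha - \beta)(\gamma+j+1) \equiv (\alpha' - \beta')(\gamma+j'+1) \pmod \lambda$, i.e. $(\alpha-\beta)(\gamma+j+1) - (\alpha'-\beta')(\gamma+j'+1) \equiv 0 \pmod \lambda$. Since $|\alpha - \beta|, |\alpha' - \beta'| \le n-1$ and $\gamma + j + 1, \gamma + j' + 1 \le \gamma + k + 1$, the left-hand side has absolute value at most $2(n-1)(\gamma + k + 1) < 2n(\gamma+k+1) \le \lambda$ (using the definition of $\lambda$), so it must equal $0$ exactly, not merely modulo $\lambda$. Hence $(\alpha-\beta)(\gamma+j+1) = (\alpha'-\beta')(\gamma+j'+1)$ as integers.

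From $(\alpha-\beta)(\gamma+j+1) = (\alpha'-\beta')(\gamma+j'+1)$ I would derive a contradiction with $l \ne l'$. If $\alpha = \beta$ then the right side forces $\alpha' = \beta'$ (since $\gamma+j'+1 \ne 0$), and then $l = l'$, contradiction. So assume $\alpha \ne \beta$, hence $\alpha' \ne \beta'$, and without loss of generality $|\alpha - \beta| \ge 1$. Then $\gamma + j + 1$ divides $(\alpha' - \beta')(\gamma + j' + 1) = (\alpha'-\beta')(\gamma+j+1) + (\alpha'-\beta')(j'-j)$, so $\gamma + j + 1$ divides $(\alpha' - \beta')(j' - j)$. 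But $0 < |(\alpha'-\beta')(j'-j)| \le (n-1)(k-1) < nk < \gamma + j + 1$, so this divisibility is impossible. This contradiction establishes that $l = l'$, i.e.\ at most one block contains vertices from both $V_j$ and $V_{j'}$.

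The main obstacle I anticipate is getting the modular bookkeeping exactly right — in particular being careful that the set of occupied blocks is genuinely a cyclic translate modulo $\lambda$ (the tail padding $(1\longZero)^{\lambda - n(\gamma+j+1)}$ is what makes the total block count exactly $\lambda$ for every row, so the cyclic structure is clean), and choosing the two comparison bounds ($2n(\gamma+k+1) \le \lambda$ to collapse the congruence to an equality, and $(n-1)(k-1) < nk \le \gamma + j + 1$ to kill the divisibility) so that both slack inequalities hold with the $\lambda$ and $\gamma$ defined in the reduction. Once the arithmetic is pinned down, the combinatorial conclusion is immediate. I would also remark at the end that, combined with \Cref{obs:edge-column}, this claim bounds the number of weight-$2$ edge-columns: a weight-$2$ edge-column in block $l$ forces $l$ to be occupied by both endpoints of some edge $e_h$, and the two endpoints lie in two distinct color classes $V_j, V_{j'}$, so each pair $(j, j')$ can contribute weight-$2$ edge-columns in at most one block — ultimately yielding the $\binom{k}{2}$ bound needed to match the target cost $c$.
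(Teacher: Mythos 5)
Your proof is correct and uses the same two arithmetic ingredients as the paper ($\lambda > 2n(\gamma+k+1)$ to control wrap-around, and $\gamma = nk$ to make the spacings $\gamma+j+1$ ``mutually incommensurable'' over the relevant range), but it packages them more cleanly. The paper works with the unsigned block distance $|l-l'|$ and therefore must case-split on whether each row's shift wraps that distance around the cycle, leading to four cases (three by symmetry), each dispatched by a separate magnitude estimate. You instead work with signed differences directly modulo $\lambda$, obtain a single congruence $(\alpha-\beta)(\gamma+j+1) \equiv (\alpha'-\beta')(\gamma+j'+1) \pmod{\lambda}$, and use the bound $2(n-1)(\gamma+k+1) < \lambda$ once to lift it to an integer identity, thereby folding the paper's Cases~2 and~3 into the setup. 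Your final step is a divisibility argument ($\gamma+j+1$ would have to divide the nonzero quantity $(\alpha'-\beta')(j'-j)$ of magnitude $<\gamma$) rather than the paper's direct magnitude comparison $|a'j'-aj| \ge \gamma+1$, but these are essentially the same contradiction. Two minor points: the remark about row $0$ is unnecessary since the claim's rows range over $\{1,\ldots,k\}$ only, and ``without loss of generality $|\alpha-\beta|\ge 1$'' is redundant given you just assumed $\alpha\ne\beta$. Overall this is a valid, slightly tidier rendering of the paper's argument.
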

\begin{proof}
	If two distinct blocks $l,l'$ contain vertices from the same row $j$, then two cases are possible: either $|l-l'|= a(\gamma+j+1)$ or  $|l-l'|= \lambda-a(\gamma+j+1)$, in both cases with $1\leq a\leq n$. Indeed, there are $n$ regularly-spaced substrings $u_{j,i}$ in row $j$, so the two cases correspond to whether or not the circular shifting of row $j$ separates these two blocks.
	
	Aiming at a contradiction, assume that two distinct rows $j$ and $j'$ provide two vertices for  both $l$ and $l'$. Then there exist $1\leq a,a'\leq n$ such that $|l-l'|= a(\gamma+j'+1)$ or  $|l-l'|= \lambda-a(\gamma+j+1)$, and $|l-l'|= a'(\gamma+j'+1)$ or  $|l-l'|= \lambda-a'(\gamma+j'+1)$. This gives four cases to consider (in fact just three by symmetry of $j$ and $j'$).
	
	If $|l-l'| = a(\gamma+j+1) = a'(\gamma+j'+1)$, then $(a-a')(\gamma+1) = a'j'-aj$. We have $a\neq a'$, as otherwise this would imply $j=j'$. So $|a'j'-aj|\geq \gamma+1$, but this is impossible since $a,a'\leq n$ $j,j'\leq k$, and $\gamma>kn$ by construction. 
	
	If $|l-l'| = a(\gamma+j+1) = \lambda-a'(\gamma+j'+1)$, then $\lambda = a(\gamma+j+1)+a'(\gamma+j'+1)$. However, $\lambda>2n(\gamma+k+1)$ by construction, so this case also leads to a contradiction.
	
	Finally, if $|l-l'| = \lambda - a(\gamma+j+1) = \lambda-a'(\gamma+j'+1)$, then we have $a(\gamma+j+1) = a'(\gamma+j'+1)$. This case yields, as in the first case, a contradiction.
\end{proof}

\begin{claim}\label{claim:fewHeavyEdgeColumns}
	There are at most ${k \choose 2}$ weight-2 edge-columns.
\end{claim}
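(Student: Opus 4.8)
The plan is to exhibit an injection from the set of weight-2 edge-columns into the set of $2$-element subsets of $\{1,\ldots,k\}$, which has cardinality ${k \choose 2}$, and thereby read off the bound. The injection sends a weight-2 edge-column to the (unordered) pair of \emph{colors} of the two endpoints of the edge it encodes; essentially all of the work lies in showing this map is well-defined and injective, and both facts will be routine consequences of \Cref{obs:edge-column}, \Cref{claim:noSquareCorners}, and the basic fact that each block is occupied by at most one vertex of each color class.

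First I would check well-definedness. Consider a weight-2 edge-column with block-index $k+h$ that lies in some block~$l$. By \Cref{obs:edge-column}, block~$l$ is occupied by both endpoints of the edge~$e_h$. Since a block is occupied by at most one vertex from each $V_j$, these two endpoints lie in two \emph{distinct} color classes, say $V_j$ and $V_{j'}$ with $j\neq j'$ (in particular an edge internal to a single color class can never produce a weight-2 edge-column); moreover block~$l$ then contains a vertex of $V_j$ and a vertex of $V_{j'}$. Thus the map $\phi$ sending this column to $\{j,j'\}$ is well-defined.

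Next I would prove injectivity of $\phi$. Suppose two weight-2 edge-columns $C_1$ and $C_2$, encoding edges $e_{h_1}$ and $e_{h_2}$ and lying in blocks $l_1$ and $l_2$ respectively, satisfy $\phi(C_1)=\phi(C_2)=\{j,j'\}$. By the previous paragraph, both $l_1$ and $l_2$ contain a vertex of $V_j$ and a vertex of $V_{j'}$, so \Cref{claim:noSquareCorners} forces $l_1=l_2=:l$. The vertex of $V_j$ (resp.\ of $V_{j'}$) occupying block~$l$ is unique, so the $V_j$-endpoint and the $V_{j'}$-endpoint of $e_{h_1}$ coincide with those of $e_{h_2}$; hence $e_{h_1}=e_{h_2}$ since $G$ is a simple graph, so $h_1=h_2$, and since $C_1$ and $C_2$ share both the block~$l$ and the block-index $k+h_1$ they are the same column. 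Therefore $\phi$ is injective, and the number of weight-2 edge-columns is at most ${k \choose 2}$.

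The only place that needs genuine care — the ``main obstacle'' — is the bookkeeping in the injectivity step: one must be precise that the hypothesis of \Cref{claim:noSquareCorners} (``a block containing vertices from both $V_j$ and $V_{j'}$'') is exactly what the weight-2 condition supplies via \Cref{obs:edge-column}, and then combine this with the uniqueness of the occupying vertex per row and the simplicity of $G$ to recover that the encoded edge, and hence the column itself, is determined. Everything else is immediate.
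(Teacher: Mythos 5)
Your proof is correct and takes essentially the same approach as the paper: the paper fixes a pair of rows $j<j'$ and shows (via Observation~\ref{obs:edge-column} and Claim~\ref{claim:noSquareCorners}) that at most one weight-2 edge-column can have its two 1-values in those rows, which is exactly the injectivity of your map $\phi$; your well-definedness step makes explicit a point the paper leaves implicit.
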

\begin{proof}
	Consider any pair $j,j'$ such that $1\leq j<j'\leq k$. It suffices to show that there exists at most one weight-2 edge-column with a 1 in rows~$j$ and~$j'$. 
	
	Aiming at a contradiction, assume that two such columns exist. By Observation~\ref{obs:edge-column}, they must each belong to a block which is occupied by vertices both in $V_j$ and $V_{j'}$. From \Cref{claim:noSquareCorners} it follows that both columns belong to the same block. Let $v$ and $v'$ be the vertices of $V_j$ and~$V_{j'}$, respectively, occupying this block. By Observation~\ref{obs:edge-column} again, both edges are equal to $\{v,v'\}$, which contradicts the fact that they are distinct.	
\end{proof}

\begin{claim}\label{claim:noCliqueManyW2}
	If $G$ does not contain a properly colored $k$-clique, then there are at most  $k+{k \choose 2}-1$ weight-2 columns.
\end{claim}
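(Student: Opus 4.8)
The plan is to argue by contradiction. Suppose that, even though $G$ contains no properly colored $k$-clique, the multiple circular shift $\Delta$ (with all $\delta_j\bmod(m'+1)=0$, as assumed throughout) produces at least $k+{k\choose 2}$ weight-2 columns. Recall from \Cref{obs:vertex-column} and \Cref{obs:edge-column} that no non-separator column has weight $3$ or more, so every weight-2 column is either a vertex-column or an edge-column. \Cref{obs:vertex-column} bounds the number of weight-2 vertex-columns by $k$ (they all lie in block~$1$, at most one per block-index), and \Cref{claim:fewHeavyEdgeColumns} bounds the number of weight-2 edge-columns by ${k\choose 2}$. Hence the assumed total of $k+{k\choose 2}$ forces \emph{equality} in both bounds: there are exactly $k$ weight-2 vertex-columns and exactly ${k\choose 2}$ weight-2 edge-columns.

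First I would extract the clique candidate from the weight-2 vertex-columns. These are exactly the $k$ vertex-columns of block~$1$, so for every block-index $h\in\{1,\dots,k\}$ the vertex-column of block~$1$ with block-index $h$ has weight~$2$; by \Cref{obs:vertex-column} this gives $V_h\cap P\neq\emptyset$, where $P$ is the set of vertices occupying block~$1$. Since block~$1$ of each $s_j^{\leftarrow\delta_j}$ equals a single $u_{j,i}$ or $\longZero$, the set $P$ contains at most one vertex of each color, so in fact $P=\{v_1,\dots,v_k\}$ with $v_h\in V_h$ for every $h$. This is our candidate $k$-clique.

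Next I would use the ${k\choose 2}$ weight-2 edge-columns. By the proof of \Cref{claim:fewHeavyEdgeColumns}, each pair $j<j'$ accounts for at most one weight-2 edge-column whose two $1$'s lie in rows $j$ and $j'$; as the total is exactly ${k\choose 2}$, every such pair accounts for exactly one. Fix $j<j'$ and let $C$ be the corresponding weight-2 edge-column. By \Cref{obs:edge-column}, $C$ lies in a block occupied by a vertex of $V_j$ and a vertex of $V_{j'}$, and the edge indexing $C$ is exactly the edge between those two vertices. But block~$1$ is already occupied by $v_j\in V_j$ and $v_{j'}\in V_{j'}$, and \Cref{claim:noSquareCorners} guarantees at most one block occupied by vertices of both $V_j$ and $V_{j'}$; therefore $C$ lies in block~$1$ and its edge is $\{v_j,v_{j'}\}$. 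Hence $\{v_j,v_{j'}\}\in E$ for all $j<j'$, so $\{v_1,\dots,v_k\}$ is a properly colored $k$-clique in $G$, contradicting our assumption. Thus there are at most $k+{k\choose 2}-1$ weight-2 columns.

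The conceptual crux, and the only step I expect to be non-mechanical, is the observation that makes the two halves fit together: once block~$1$ contains one vertex of \emph{every} color, it becomes the \emph{unique} ``corner'' block for every color pair (by \Cref{claim:noSquareCorners}), which pins all ${k\choose 2}$ weight-2 edge-columns inside block~$1$ and forces their indexing edges to be precisely the pairs within $P$. Everything else — that non-separator columns have weight at most $2$, that weight-2 vertex- and edge-columns are counted separately, and that weight-2 vertex-columns sit in block~$1$ and single out one vertex per color — is already provided by \Cref{obs:vertex-column}, \Cref{obs:edge-column}, \Cref{claim:noSquareCorners}, and \Cref{claim:fewHeavyEdgeColumns}.
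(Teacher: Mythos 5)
Your proof is correct and takes essentially the same approach as the paper: derive that the $k$ weight-2 vertex-columns sit in block~1 (forcing $P$ to meet every color class), use \Cref{claim:noSquareCorners} to pin every weight-2 edge-column into block~1, and conclude that those columns encode all $\binom{k}{2}$ edges within $P$. The only cosmetic difference is that you unpack the per-pair accounting from the proof of \Cref{claim:fewHeavyEdgeColumns} to show each pair $j<j'$ contributes exactly one edge-column, whereas the paper counts $\binom{k}{2}$ distinct edges directly from the $\binom{k}{2}$ distinct column indices; these are equivalent bookkeeping styles of the same argument.
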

\begin{proof}
	Assume that there are at least $k+{k \choose 2}$ weight-2 columns. By \Cref{claim:fewHeavyEdgeColumns}, there are at least~$k$ weight-2 vertex-columns. By \Cref{obs:vertex-column}, only the $k$  vertex-columns of block 1 may have weight 2, hence for each $1\leq j\leq k$ the column of block~1 with block-index $j$ has weight 2. Thus for every, $j$, $P\cap V_j\neq \emptyset$. 
	
	By \Cref{claim:noSquareCorners}, no other block than block 1 may be occupied by two vertices, hence any edge-column with weight 2 must be in block~1, and both endpoints are in $P$. There cannot be more than $k$ weight-2 vertex-columns, hence there are ${k \choose 2}$ weight-2 edge-columns, and for each of these there exists a distinct edge with both endpoints in $P$. 
	Thus, $P$ is a properly colored $k$-clique.
\end{proof}

\paragraph{Cliques and Circular Shifts with Low Cost.}

We are now ready to complete the proof of \Cref{thm:MSCS_reduction}.
First, assume that $G$ contains a properly colored $k$-clique~$P=\{v_{1,i_1},\ldots,v_{k,i_k}\}$.
Consider the multiple circular shift~$\Delta=(\delta_0,\ldots,\delta_{k})$, where
$\delta_{0}=0$ and
$$\delta_j:=(i_j-1)(m'+1)(\gamma + j + 1)$$ for~$j\in\{1,\ldots,k\}$.
Note that~$|P|=k$, and all edge-columns in block~1 corresponding to edges induced in $P$  have weight~2. Hence there are $\binom{k}{2}$ weight-2 edge-columns and $k$ weight-2 vertex-columns.
By \Cref{claim:costW2}, $\cost_f(\Delta) = c$.

Now, assume that~$G$ does not contain a properly colored $k$-clique.
Without loss of generality, let~$\Delta=(\delta_0,\ldots,\delta_k)$ be a multiple circular shift
with~$\delta_{0}=0$.
Clearly, if~$\delta_j\bmod (m'+1)\neq 0$ for some~$j$, then~$\cost_f(\Delta)\ge c+\varepsilon$ (by \Cref{claim:blocks-aligned}).
Otherwise, by~\Cref{claim:noCliqueManyW2} there are at most $k+\binom{k}{2}-1$ weight-2 columns. By \Cref{claim:costW2} , $\cost_f(\Delta) \geq c+\varepsilon$.

This completes the proof of \Cref{thm:MSCS_reduction} which directly leads to our main result of this section.

%\end{proof}

\begin{theorem}\label{cor:MSCShardness}
  Let~$f$ be a polynomially bounded grouping function.
  Then, \fMSCS on binary strings is
  \begin{enumerate}[(i)]
    \item\label{NPh} NP-hard,
    \item\label{W1h} W[1]-hard with respect to the number~$k$ of input strings, and
    \item\label{ETH} not solvable in $\rho(k)\cdot n^{o(k)}$ time for any computable function~$\rho$ unless the ETH fails.
    \end{enumerate}
\end{theorem}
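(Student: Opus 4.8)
The plan is to obtain all three statements directly from \Cref{thm:MSCS_reduction} together with the known hardness of \RMCC; essentially all the substance is already in the lemma, so what remains is to check that its reduction behaves well with respect to each of the three complexity measures. Given an \RMCC instance $(G,k)$, run the reduction of \Cref{thm:MSCS_reduction} to obtain binary strings $s_0,\ldots,s_k$ of equal length and a rational threshold $c$. This is a correct many-one reduction to \fMSCS with threshold~$c$: if $G$ has a properly colored $k$-clique then some multiple circular shift $\Delta$ satisfies $\cost_f(\Delta)=c$, and otherwise every shift has $\cost_f(\Delta)\ge c+\varepsilon_{k+1}>c$; hence $(G,k)$ is a yes-instance of \RMCC if and only if $(s_0,\ldots,s_k,c)$ is a yes-instance of \fMSCS. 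The hypothesis that $f$ is polynomially bounded is exactly what makes this a polynomial-time reduction: it ensures that $\varepsilon_{k+1}^{-1}$ and $\mu_{k+1}$, and hence $\lambda$, the common string length, and the value~$c$, all have size polynomial in the input.

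Statement~(i) is then immediate: \RMCC is NP-hard and the reduction runs in polynomial time. For statement~(ii), note that the reduced instance has $k+1$ input strings, so its parameter $k':=k+1$ is bounded by a function of the source parameter $k$, and the reduction runs in polynomial (in particular $f(k)\cdot|I|^{O(1)}$) time; thus it is a parameterized reduction, and W[1]-hardness of \RMCC parameterized by $k$ transfers to W[1]-hardness of \fMSCS parameterized by the number of input strings.

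For statement~(iii) I would track sizes carefully. Let $n:=|V|/k$ be the common color-class size, so that $|V|=nk$. From the construction in \Cref{thm:MSCS_reduction}, $m'=m+k=|V|^{O(1)}$ and $\lambda$ is polynomial in $k$ (using that $\mu_{k+1}$ and $\varepsilon_{k+1}^{-1}$ are), so the output strings have length $\ell=\lambda(m'+1)=|V|^{O(1)}$. Suppose \fMSCS could be solved in $\rho(k)\cdot N^{o(k)}$ time for some computable $\rho$, where $N$ denotes the string length. Applying this algorithm to the reduced instance, which has $k+1$ strings of length $\ell$, takes $\rho(k+1)\cdot\ell^{\,o(k+1)}=\rho'(k)\cdot|V|^{o(k)}$ time for a computable $\rho'$ (using $o(k+1)=o(k)$ and $\ell=|V|^{O(1)}$); together with the polynomial running time of the reduction itself, this would solve \RMCC in $\rho''(k)\cdot|V|^{o(k)}$ time for a computable $\rho''$, contradicting the fact that \RMCC admits no such algorithm unless the ETH fails.

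The only genuinely delicate part of this argument is the size bookkeeping in the last paragraph — confirming $\ell=|V|^{O(1)}$ so that $\ell^{o(k)}$ collapses to $|V|^{o(k)}$ — and this is read off directly from the explicit formulas for $\lambda$ and $\ell$ and from the definition of a polynomially bounded grouping function; everything else is a routine transfer of hardness along \Cref{thm:MSCS_reduction}.
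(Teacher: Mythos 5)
Your proof is correct and follows essentially the same approach as the paper: all three items are read off from \Cref{thm:MSCS_reduction} and the known hardness of \RMCC, noting that the reduction is polynomial-time (using polynomial boundedness), maps parameter $k$ to $k'=k+1$, and produces strings of length polynomial in $|V|$ so that $\ell^{o(k')}=|V|^{o(k)}$. Your size bookkeeping for item~(iii) is slightly more explicit than the paper's, but the argument is the same.
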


\begin{proof}
  The polynomial-time reduction from \Cref{thm:MSCS_reduction} yields the NP-hardness.
  Moreover, the number of strings in the \fMSCS instance only depends on the size of the multicolored clique. Hence, it is a parameterized reduction from \RMCC parameterized by the size of the clique to \fMSCS parameterized by the number of input strings and thus yields W[1]-hardness.
  Lastly, the number~$k'=k+1$ of strings is linear in the size~$k$ of the clique. Thus, any $\rho(k')\cdot n^{o(k')}$-time algorithm for \DTW would imply a $\rho'(k)\cdot |V|^{o(k)}$-time algorithm for \RMCC contradicting the ETH.
\end{proof}

\noindent Note that \Cref{cor:MSCShardness} holds for the function~$\sigma$ since it is
a polynomially bounded grouping function (as discussed earlier).

The assumption that~$f$ is polynomially bounded is only needed to obtain a polynomial-time reduction in~\Cref{thm:MSCS_reduction}.
Without this assumption, we still obtain a parameterized reduction from \RMCC parameterized by the clique size to \fMSCS parameterized by the number of input strings, which yields the following corollary for a larger class of functions.

\begin{corollary}
  Let~$f$ be a computable grouping function.
  Then, \fMSCS on binary strings is W[1]-hard with respect to the number~$k$ of input strings and not solvable in $\rho(k)\cdot n^{o(k)}$ time for any computable function~$\rho$ unless the ETH fails.
\end{corollary}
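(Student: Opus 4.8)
The plan is to derive the corollary by re-running the reduction of \Cref{thm:MSCS_reduction} while dropping the ``polynomially bounded'' hypothesis and observing that everything except the \emph{polynomial-time} claim still goes through unchanged. Concretely, I would first note that the reduction only ever uses $f$ through the values $f_{k+1}(0)$, $f'(w)$ for $1\le w\le k+1$, and the derived quantities $\varepsilon=\varepsilon_{k+1}$ and $\mu=\mu_{k+1}$; the grouping property (i.e.\ $f'(k+1)<\min_{1\le x<k+1}f'(x)$ and $f'(2)<f'(1)$) is exactly what makes $\varepsilon$ well-defined and strictly positive and is the only structural fact about $f$ that Claims~\ref{claim:blocks-aligned}--\ref{claim:noCliqueManyW2} invoke. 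So the combinatorial core of the argument is untouched: given an \RMCC instance with $k$ colors, the same strings $s_0,\ldots,s_k$ and the same target cost $c$ satisfy the two bullet conditions of \Cref{thm:MSCS_reduction}.

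Next I would address computability and the size/parameter bounds. Since $f$ is computable, $f_{k+1}$, hence $f'_{k+1}$, hence $\varepsilon_{k+1}$ and $\mu_{k+1}$, are computable from $k$; therefore $\lambda$, $\ell=\lambda(m'+1)$, the strings $s_j$, and $c$ are all computable, and the map from an \RMCC instance to an \fMSCS instance is a computable function. The only thing lost relative to \Cref{thm:MSCS_reduction} is that $\lambda$ (which contains the factor $\lceil\kappa(2\mu/\varepsilon+1)\rceil$) need no longer be polynomial in the input size: it is bounded by some computable function $\phi(k)$ times a polynomial in $|V|$. Hence $\ell$, and thus the total size of the output instance, is bounded by $g(k)\cdot|V|^{O(1)}$ for a computable $g$, and the reduction runs in time $g(k)\cdot|V|^{O(1)}$ — exactly the form required for a parameterized reduction. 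The number of output strings is $k+1$, linear in $k$, so this is a parameterized reduction from \RMCC parameterized by clique size to \fMSCS parameterized by the number of input strings.

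Finally I would conclude as in the proof of \Cref{cor:MSCShardness}: W[1]-hardness follows because \RMCC is W[1]-hard with respect to $k$ and the parameter blows up only by an additive constant. For the ETH bound, suppose \fMSCS on binary strings were solvable in $\rho(k')\cdot n^{o(k')}$ time for some computable $\rho$, where $k'=k+1$ and $n=\ell$. Composing with the reduction and using $\ell\le g(k)\cdot|V|^{O(1)}$ and $k'=\Theta(k)$ yields an algorithm for \RMCC running in $\rho'(k)\cdot|V|^{o(k)}$ time for a computable $\rho'$ (the $g(k)^{o(k)}$ and $\rho(k+1)$ factors are absorbed into $\rho'(k)$, and $o(k')=o(k)$), contradicting the ETH lower bound for \RMCC. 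This gives both claimed statements.

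The proof is essentially bookkeeping, so there is no real obstacle; the one point that warrants a sentence of care is making explicit that the grouping hypothesis alone already guarantees $\varepsilon_{k+1}>0$ (so that the ``$c+\varepsilon_{k+1}$'' gap in \Cref{thm:MSCS_reduction} is meaningful), and that all the inequalities defining $\lambda$ and the claim proofs only require $\varepsilon>0$, $\mu<\infty$, and the two strict inequalities of the grouping property — none of which needs polynomial bounds. A brief remark that the argument is ``the same reduction, but without asserting polynomial running time'' suffices, and one can simply cite the proofs of \Cref{thm:MSCS_reduction} and \Cref{cor:MSCShardness} rather than repeating them.
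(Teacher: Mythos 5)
Your proposal is correct and follows the same route as the paper: the paper itself justifies the corollary by noting that dropping polynomial boundedness only costs the polynomial-time bound on the reduction, leaving a parameterized reduction (with output size $g(k)\cdot|V|^{O(1)}$ for computable $g$), which still yields W[1]-hardness and, after absorbing the $g(k)^{o(k)}$ factor, the ETH lower bound. You simply spell out the bookkeeping that the paper's one-sentence remark leaves implicit.
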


\section{Circular Consensus String}\label{sec:ccs}
In this section we briefly study the \textsc{Circular Consensus String} (CCS) problem: Given $k$ strings $s_1,\ldots, s_k$ of length~$n$ each, find a length-$n$ string $s^*$ and a circular shift $(\delta_1,\ldots,\delta_k)$ such that $\sum_{j=1}^kd(s_j^{\leftarrow \delta_j}, s^*)$ is minimal, where $d$ denotes the Hamming distance, that is, the number of mismatches between the positions of two strings.
Although consensus string problems in general have been widely studied from a theoretical point of view~\cite{BHKN14},
somewhat surprisingly this is not the case for the circular version(s).
For CCS, only an $O(n^2\log n)$-time algorithm for~$k=3$ and an~$O(n^3\log n)$-time algorithm for~$k=4$ is known~\cite{LNPPS13}.
However, for general~$k$ no hardness result is known.
Note that without circular shifts the problem is solvable in linear time:
It is optimal to set $s^*[i]$ to any element that appears a maximum number of times among the elements $s_1[i],\ldots,s_k[i]$.

For binary strings, it can easily be seen that the cost induced by column~$i$ is the minimum of the number of $0$'s and the number of $1$'s.
Let $\fCCS$ be the polynomially bounded order-independent function with $\fCCS[k](w)=\min\{w, k-w\}$.
It follows from the discussion above that \textsc{Circular Consensus String} is exactly \fMSCS[\fCCS]. Note, however, that~$\fCCS$ is not a grouping function since $\fCCS[k]'(2)=\fCCS[k]'(1)=1$. That is, we do not immediately obtain hardness of CCS from \Cref{cor:MSCShardness}.
We can still prove hardness via a reduction using a properly chosen polynomially bounded grouping function.

\begin{theorem}\label{thm:CCS-hard}
  \textsc{Circular Consensus String} on binary strings is
  \begin{enumerate}[(i)]
    \item NP-hard,
    \item W[1]-hard with respect to the number~$k$ of input strings, and
    \item not solvable in~$\rho(k)\cdot n^{o(k)}$ time for any computable function~$\rho$ unless the ETH fails.
  \end{enumerate}
\end{theorem}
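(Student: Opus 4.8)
The plan is to reduce from \fMSCS[$f$] for a carefully chosen polynomially bounded grouping function~$f$, thereby invoking \Cref{cor:MSCShardness}. The obstacle, as noted in the text, is that $\fCCS$ itself is not grouping: since $\fCCS[k]'(1)=\fCCS[k]'(2)=1$, the cost function $\min\{w,k-w\}$ does not strictly prefer packing two $1$'s into one column over spreading them across two columns. So the goal is to engineer, from a given \fMSCS[$f$] instance on $k+1$ strings, a CCS instance on a somewhat larger number~$k'$ of strings whose induced local cost function $\fCCS[k']$ \emph{behaves like} a grouping function on the range of weights that can actually occur.

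First I would pick a concrete grouping function to reduce \emph{from} --- the natural choice is $\sigma$ (sum of squared deviations from the mean), which the paper has already verified is a polynomially bounded grouping function. Given an \fMSCS[$\sigma$] instance with strings $s_0,\ldots,s_k\in\{0,1\}^n$ and target~$c$, I want to build CCS strings over a larger set of rows so that a column of weight~$w$ (among the ``real'' rows) contributes, up to an affine shift, a multiple of $\sigma_{k+1}(w)$. The trick is padding: add a block of constant ``all-$0$'' and ``all-$1$'' rows (or rather, copies of the zero/one strings shifted trivially) so that in the enlarged instance with $k'$ rows, a column that had weight~$w$ now has weight $w + a$ for a fixed offset~$a$, and the new $\fCCS[k']$ value $\min\{w+a,\,k'-(w+a)\}$ can be controlled. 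But $\min$ is still piecewise linear, not strictly convex, so a single padding won't reproduce $\sigma$. The standard fix is to take \emph{multiple copies} at different offsets: replicate each of the $k+1$ real strings $t$ times, and add padding rows, choosing the offsets $a_1,\ldots$ so that summing the piecewise-linear costs $\sum \min\{t w + a_i, \text{const}\}$ over the copies yields a strictly concave (in fact arbitrarily good polynomial approximation of $\sigma_{k+1}$) function of~$w$ on $\{0,\ldots,k+1\}$. Concretely, $\min\{x, M-x\}$ summed over shifted arguments gives a concave piecewise-linear function, and with enough pieces one can match the required inequalities $g'(k+1) < \min_{1\le x<k+1} g'(x)$ and $g'(2)<g'(1)$ for the \emph{effective} cost $g$ seen by the reduction. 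Since the paper only needs the two grouping inequalities (plus the gap/range being polynomial), it suffices to make $g$ strictly concave on $\{0,\ldots,k+1\}$, which is easy to arrange with $O(\poly(k))$ copies and offsets.

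Second, I would verify the two directions of the reduction. In one direction, a solution to the \fMSCS[$\sigma$] instance of cost~$c$ lifts to a CCS solution: apply the same shift to every copy of each string, set the consensus string column-wise to the majority value (which, by the padding, is always well-defined and equals the ``padding-majority'' value so that the per-column cost is exactly $g(w)$), and read off that the total cost equals the target $c'$ obtained by translating $c$ through the affine relationship between $g$ and $\sigma_{k+1}$. In the other direction, given a CCS solution of cost $\le c'$: the copies of a single original string must all receive the \emph{same} shift (otherwise the contribution of the padding/separator structure blows the budget --- this is where I would reuse the ``separator'' idea, or simply note that differing shifts among identical rows strictly increase $\sum\min\{\cdot\}$), so the shift descends to a multiple circular shift of $s_0,\ldots,s_k$; and the column weights then show that the \fMSCS[$\sigma$] cost is at most~$c$. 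The \emph{main obstacle} is exactly this last step --- forcing all copies of an original string to be shifted identically, and ruling out the extra freedom CCS has in choosing $s^*$ to exploit the flatness of $\min$ at weights $1$ and $2$; I expect to handle it by making the number~$t$ of copies large enough relative to the string length that any ``cheating'' shift incurs cost strictly exceeding the slack, mirroring the role of $\lambda$ in \Cref{thm:MSCS_reduction}. Finally, the parameter bookkeeping is routine: $k' = O(\poly(k))$ (or even $O(k)$ if offsets are chosen cleverly), the construction is polynomial-time because $f=\sigma$ is polynomially bounded, so the reduction is simultaneously a polynomial-time reduction (giving NP-hardness), a parameterized reduction (giving W[1]-hardness in~$k$), and preserves the $n^{o(k)}$ lower bound under ETH.
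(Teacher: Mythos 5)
Your opening observation is exactly right: $\fCCS$ is not grouping because $\fCCS[k]'(1)=\fCCS[k]'(2)$, and padding with constant rows to shift the weight offset is the correct move. But from there the proposal takes a turn that both overcomplicates the reduction and introduces a genuine unresolved gap, whereas the paper's construction is far more direct.

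The paper does \emph{not} try to reproduce $\sigma$ (or any prescribed target) as the effective cost. Instead it defines, from scratch, the order-independent function $g_k(w):=\fCCS[2k-2](w+k-2)=\min\{w+k-2,\,k-w\}$, which is a single shifted-and-cropped copy of $\fCCS$, and checks directly that $g'_k(w)=\frac{2-w}{w}$ is strictly decreasing with polynomially bounded gap and range --- so $g$ is already a polynomially bounded grouping function, and \Cref{cor:MSCShardness} applies to \fMSCS[$g$]. The reduction from \fMSCS[$g$] to CCS then consists of appending $k-2$ copies of the all-ones string $1^n$: a column of weight~$w$ in the original instance becomes a column of weight $w+k-2$ in the new one, contributing $\fCCS[2k-2](w+k-2)=g_k(w)$. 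Crucially, because the padding rows are constant strings, circular shifts of them are identities --- there is literally nothing a ``cheating'' shift could do, so the converse direction is one sentence.

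This is precisely the step your proposal leaves unresolved. By replicating each of the $k+1$ original strings $t$ times (plus offset tricks to mimic $\sigma$), you create $t$ independent shift degrees of freedom per original string, and you then have to argue that an adversary cannot profit by shifting copies of the same string differently. You acknowledge this as the main obstacle and gesture at ``making $t$ large enough relative to the string length,'' mirroring $\lambda$ in \Cref{thm:MSCS_reduction} --- but $\lambda$ in that lemma inflates the \emph{length} of the strings, whereas $t$ inflates the \emph{number} of strings, which is the parameter. If $t$ must grow with $n$ to suppress cheating, the parameterized reduction (and hence (ii) and (iii)) collapses. Nothing in the proposal resolves this tension, and the ``$k'=O(\poly(k))$'' claim is asserted without a mechanism. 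There is also a smaller technical slip: replicating strings with a common shift yields column weights $tw+a$, giving a single $\min\{tw+a,\,M-tw-a\}$ per column, not the sum of shifted $\min$'s you would need to approximate a strictly concave function, so the approximation scheme as sketched does not even produce the target shape. All of these difficulties vanish once one gives up on reproducing $\sigma$ and instead hands \Cref{cor:MSCShardness} a grouping function that $\fCCS$ realizes \emph{exactly} after a constant-weight offset --- which is what the paper does with just $k-2$ all-ones padding rows and no copies at all.
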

\begin{proof}
  As discussed above, CCS is equivalent to \fMSCS[\fCCS].
  To prove hardness, we define a local cost function $g$ (similar to $\fCCS$) and reduce from \fMSCS[$g$] to \fMSCS[$\fCCS$].

  Let $g$ be the order-independent local cost function such that $$g_k(w) :=\fCCS[2k-2](w+(k-2))=\min\{w+k-2, k-w\}.$$
  Note that the function $g_k$ is linearly decreasing on $\{1,\ldots, k\}$ and that $g'_k(w)=\frac{2-w}w=\frac2w -1$.
  The range of~$g_k$ is $\mu_k=1$ and its gap is $\varepsilon_k=\frac2{k-1} - \frac2k >\frac2{k^2}$. That is, $g$ satisfies all conditions of \Cref{cor:MSCShardness} and the corresponding hardness results hold for \fMSCS[$g$].
  See \Cref{fig:red-gMSCS} for an illustration.

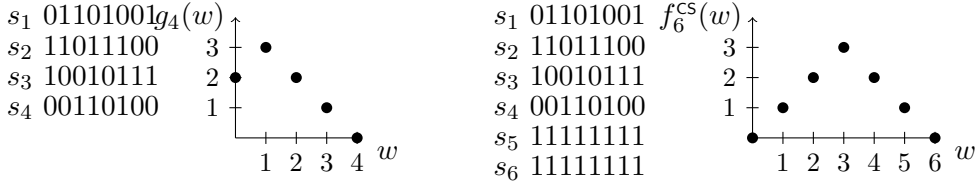
\begin{figure}
\begin{tikzpicture}[scale=0.8]
\coordinate (A) at (0,0);
\coordinate (B) at (8,0);
\coordinate (C) at (2.5,-2.5);
\coordinate (D) at (11,-2.5);
\begin{scope}[yscale=0.5]
 \foreach \s [count=\i] in {$s_1$ 01101001, $s_2$ 11011100, $s_3$ 10010111, $s_4$ 00110100} {
	\node at ($(A)+(0,-\i)$) {\s};	
	\node at ($(B)+(0,-\i)$) {\s};
 }
 \foreach \i in {5,6} {
	\node at ($(B)+(0,-\i)$) {$s_\i$ 11111111};
 }
 \end{scope}
 
\foreach \C/\l/\w in {C/g_4/2.0, D/\fCCS[6]/3.0}{
	\draw[->] (\C)--++(0,2) node[left] {$\l(w)$}; 
	\draw (\C)--++(\w,0);
        \node at ($(\C) + (\w+0.5,-0.25)$) {$w$};
      }

      \begin{scope}[xscale=0.5, yscale=0.5]
        \foreach \y in {1,2,3} {
          \draw ($(C)+(-0.2,\y)$) node[left] {\small\y} -- ($(C)+(0.2,\y)$);
          \draw ($(D)+(-0.2,\y)$) node[left] {\small\y} -- ($(D)+(0.2,\y)$);
        }
        \foreach \x in {1,2,3,4} {
          \draw ($(C)+(\x,0.2)$) -- ($(C)+(\x,-0.2)$) node[below] {\small\x};
        }
        \foreach \x in {1,2,3,4,5,6} {
          \draw ($(D)+(\x,0.2)$) -- ($(D)+(\x,-0.2)$) node[below] {\small\x};
        }
        \foreach \x/\y [ remember=\y as \lasty (initially 2),remember=\x as \lastx (initially 0)] in {0/2,1/3,2/2,3/1,4/0} {
          \node[circle,fill=black,scale=0.4] at ($(C)+(\x,\y)$) {};
        }
        \foreach \x/\y [ remember=\y as \lasty (initially 0),remember=\x as \lastx (initially 0)] in {0/0,1/1,2/2,3/3,4/2,5/1,6/0} {
          \node[circle,fill=black,scale=0.4] at ($(D)+(\x,\y)$) {};
        }
      \end{scope}
\end{tikzpicture}
\caption{\label{fig:red-gMSCS}Reduction from an instance of \fMSCS[$g$] (left) to an instance of \fMSCS[\fCCS], which is equivalent to the {\sc Circular Consensus String} problem. Plots of the (polynomially bounded and order-independent) local cost functions for $k=4$ are shown. Note that~$g_4$ is obtained from~$\fCCS[6]$ by cropping the first two values in order to become grouping.}
\end{figure}

Given an instance $\mathcal I = (s_1,\ldots, s_k, c)$ of \fMSCS[$g$], we define the strings $s_j:=1^{|s_1|}$ for $j=k+1,\ldots,2k-2$. We show that~$\mathcal I$ is a yes-instance if and only if $\mathcal I' := (s_1,\ldots, s_{2k-2}, c)$ is a yes-instance for \fMSCS[\fCCS]. 

For the forward direction, consider a multiple circular shift $\Delta=(\delta_1,\ldots,\delta_k)$ of $s_1,\ldots,s_k$ such that $\cost_g(\Delta)\leq c$. We define the multiple circular shift $\Delta':=(\delta_1,\ldots,\delta_k,\delta_{k+1}=0,\ldots,\delta_{2k-2}=0)$ of~$s_1,\ldots,s_{2k-2}$. Consider column~$i$ of~$\Delta'$ and let~$w'$ be the number of 1's it contains.
Then, $w'=w+k-2$, where $w$ is the number of 1's in column~$i$ of~$\Delta$. The cost of column~$i$ is $\fCCS[2k-2](w+k-2) = g_k(w)$. Hence, column~$i$ has the same cost in both solutions.
This implies~$\cost_g(\Delta) = \cost_{\fCCS}(\Delta')$.

The converse direction is similar. Any multiple circular shift $\Delta'$ of $s_1,\ldots,s_{2k-2}$ can be restricted to a multiple circular shift $\Delta$ of $s_1,\ldots,s_k$ with the same cost.
\end{proof}

\section{Consensus for Time Series: \DTW}\label{sec:dtw}

In this section we prove the following theorem,
settling the complexity status of a prominent consensus 
problem in time series analysis. 

\begin{theorem}\label{thm:DTW-hard}
\DTW on binary time series is
  \begin{enumerate}[(i)]
    \item NP-hard,
    \item W[1]-hard with respect to the number~$k$ of input series, and
    \item not solvable in $\rho(k)\cdot n^{o(k)}$ time for any computable function~$\rho$ unless the ETH fails.
    \end{enumerate}
\end{theorem}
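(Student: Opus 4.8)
The plan is to reduce from the special case of \fMSCS already shown hard in \Cref{cor:MSCShardness}, namely \fMSCS with the cost function~$\sigma$, on the structured binary instances produced by the reduction of \Cref{thm:MSCS_reduction} (which in particular are binary and end with a long run of~$0$'s). The reason~$\sigma$ is the right cost function is the elementary identity that, for a binary column $(a_1,\ldots,a_k)\in\{0,1\}^k$ containing~$w$ ones,
$$\sigma\big((a_1,\ldots,a_k)\big)=\sigma_k(w)=\frac{w(k-w)}{k}=\min_{y\in\Q}\ \sum_{j=1}^k(a_j-y)^2 ,$$
the minimum being attained at the arithmetic mean $y=w/k$. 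Consequently, for binary strings $s_0,\ldots,s_k$ of common length~$\ell$ and any multiple circular shift $\Delta=(\delta_0,\ldots,\delta_k)$,
$$\cost_\sigma(\Delta)=\min_{z\in\Q^{\ell}}\ \sum_{i=1}^{\ell}\sum_{j=0}^{k}\big(s_j^{\leftarrow\delta_j}[i]-z[i]\big)^2 ,$$
and the right-hand side is exactly the \DTW objective~$\Fcost$ evaluated at~$z$ \emph{under the diagonal warping path} between~$z$ and each~$s_j^{\leftarrow\delta_j}$. Hence the reduction must only (i) equip each input time series with the freedom of a circular shift of~$s_j$, and (ii) ensure that an optimal mean is read through essentially diagonal warping paths, so that the \DTW objective collapses to~$\cost_\sigma$.

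For both points I would double each string: let the $j$-th input time series be the binary sequence $x_j:=0\,s_j\,s_j$ (a leading~$0$ followed by two copies of~$s_j$), and consider candidate means of the form $z=0\,z_{\mathrm{core}}\,0^{\ell}$ with $|z_{\mathrm{core}}|=\ell$. For every offset~$\delta_j$, the length-$\ell$ window of~$x_j$ obtained by skipping the leading~$0$ and~$\delta_j$ further positions equals the shifted string~$s_j^{\leftarrow\delta_j}$, so a warping path that contracts the wrapped-out prefix of~$x_j$ onto the leading~$0$ of~$z$, aligns this window diagonally with~$z_{\mathrm{core}}$, and contracts the wrapped-out suffix of~$x_j$ onto the trailing~$0^{\ell}$ of~$z$ costs $\sum_{i=1}^{\ell}(z_{\mathrm{core}}[i]-s_j^{\leftarrow\delta_j}[i])^2+\tau_j$, where~$\tau_j$ is the number of~$1$'s in~$s_j$. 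Crucially, $\tau_j$ is independent of~$\delta_j$, because the wrapped-out prefix and suffix of~$x_j$ together contain exactly the~$1$'s of one copy of~$s_j$ and~$x_j$ begins and ends with a~$0$ (here the structured instances help, since each~$s_j$ ends with a run of~$0$'s). Setting $c':=c+\sum_{j=0}^{k}\tau_j$, the completeness direction is then immediate: a properly colored $k$-clique gives, by \Cref{thm:MSCS_reduction}, a multiple circular shift~$\Delta$ with $\cost_\sigma(\Delta)=c$, and taking~$z_{\mathrm{core}}$ to be the columnwise arithmetic means of $s_0^{\leftarrow\delta_0},\ldots,s_k^{\leftarrow\delta_k}$ yields a mean~$z$ with $\Fcost(z)=c'$.

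The soundness direction is the technical core and the step I expect to be the main obstacle: one must show that \emph{no} mean and warping paths can beat the intended behaviour, so that any mean~$z$ with $\Fcost(z)\le c'$ decodes into a block-aligned multiple circular shift~$\Delta$ with $\cost_\sigma(\Delta)\le c$. The danger is a ``cheating'' warping path far from diagonal, or one that selects a window of wrong length or offset, or a mean whose shape deviates from $0\,z_{\mathrm{core}}\,0^{\ell}$. The leverage is that each~$s_j$ carries~$\Theta(\lambda)$ ones — notably the separators, which lie on a common grid of period~$m'+1$ in every string — while~$\lambda$ was chosen (in \Cref{thm:MSCS_reduction}) far larger than every quantity appearing in~$c$, so the ``budget'' of~$c'$ above~$\sum_j\tau_j$ is only~$o(\lambda)$. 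For any warping path, the parts of~$z$ aligned with the extreme positions of~$x_j$ absorb a prefix and a suffix of~$x_j$ at a cost equal to their number of~$1$'s; and one shows that any deviation from the intended pattern — a window that is not a length-$\ell$, block-aligned rotation of~$s_j$, or a mean~$z$ that is not essentially~$0$ off a period-$(m'+1)$ grid — forces the~$\Theta(\lambda)$ separator~$1$'s of some~$x_j$ onto values of~$z$ bounded away from~$1$, creating~$\Theta(\lambda)$ extra mismatches and overshooting~$c'$. Consequently every window must be a block-aligned rotation, which supplies the offsets~$\delta_j$ (mirroring \Cref{claim:blocks-aligned}), and rounding~$z_{\mathrm{core}}$ to the binary columns it induces costs at most~$\varepsilon_{k+1}/2$; the~$\varepsilon_{k+1}$-gap of \Cref{thm:MSCS_reduction} then yields $\cost_\sigma(\Delta)\le c$. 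All the~$x_j$ are binary because the~$s_j$ are, even though the optimal mean is rational, which the definition of \DTW permits.

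Finally, the parameter and size bookkeeping is routine: the number of input time series is~$k+1$, and all lengths are~$\poly(|V|)$, so the reduction runs in polynomial time (giving NP-hardness~(i)) and is simultaneously a parameterized reduction from \RMCC parameterized by the clique size to \DTW parameterized by the number of input series (giving W[1]-hardness with respect to~$k$, statement~(ii)). Moreover, $k$ grows only by an additive constant while the instance size stays polynomial, so any $\rho(k)\cdot n^{o(k)}$-time algorithm for \DTW would, composed with \Cref{thm:MSCS_reduction}, yield a $\rho'(k)\cdot|V|^{o(k)}$-time algorithm for \RMCC, contradicting the ETH and establishing~(iii).
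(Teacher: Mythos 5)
Your proposal takes a genuinely different route from the paper's, but the soundness direction — which you correctly flag as the main obstacle — is exactly where the paper invests the bulk of its effort (several pages and a sequence of lemmas: Observations~\ref{lem:noConsecutiveSimple} and~\ref{two-consecutive-one-bad}, Claims~\ref{lem:LBsinglePosition}, \ref{count-coding-positions}, \ref{lem:regularCosts}, \ref{lem:IrregularGap}, \ref{lem:OneIntervalCost}), and your sketch does not yet close that gap.

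What you do differently: you use the identity $\sigma_k(w)=\min_y\sum_j(a_j-y)^2$ to collapse the DTW objective under a diagonal warping to $\cost_\sigma$, and you encode each circular shift via the doubling $x_j=0\,s_j\,s_j$. The paper instead reduces from \fMSCS[$\phi$] with $\phi_k(x)=\tfrac{x}{(k+x+1)(k+1)}$, encodes characters as blocks $t_A=(10)^m$, $t_B=100(10)^{m-1}$, concatenates a \emph{large} number $r$ of copies, and — crucially — adds an extra one-element series $x_{k+1}=(1)$. That extra series is not a cosmetic detail: it pins every position $p$ of the mean to also be matched to a $1$, which (i) makes the cost of a position $\tfrac{\#_0(p)(\#_1(p)+1)}{\#_0(p)+\#_1(p)+1}$ and hence makes $0$-simple positions cost $\approx 0.9$ instead of $0$, giving a nonzero background cost that bounds the mean's effective length, and (ii) renders the optimal real-valued $z[p]$ essentially determined (near $0$ or near $1$) so that fractional ``hedging'' by the mean can be controlled. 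Your construction has $\sigma_{k+1}(0)=0$, so $0$-simple positions are free, there is no background cost to anchor the mean, and the mean is genuinely free to use fractional values everywhere.

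Concrete issues that would have to be resolved before the soundness argument can be called a proof: (a) With only two copies, there is no amplification to absorb end effects; the paper's choice of $r$ is precisely what makes the ``extreme'' contributions negligible relative to the gap. You would have to argue that a mean of the wrong length or a warping far from diagonal overshoots $c'$ with only two copies, and that argument is not there. (b) The claim that the budget above $\sum_j\tau_j$ is $o(\lambda)$ is not accurate: with $\sigma$, $c\approx\kappa\cdot\tfrac{k}{k+1}$ while $\lambda\ge\kappa(2k+3)+1$, so $c/\lambda=\Theta(1/k)$, a $k$-dependent constant, not $o(\lambda)$. The slack may still suffice, but this must be checked against the actual per-deviation cost you can guarantee. (c) The step ``any deviation forces $\Theta(\lambda)$ separator $1$'s onto values bounded away from $1$'' needs a decomposition of $[1,|z|]$ into pieces with controlled warping (the paper's regular/irregular intervals and gap cost $\Cgap$ do exactly this); without such machinery the statement is a heuristic, since a warping path can locally compress and stretch independently at each position, and the mean can adapt fractionally. (d) The ``rounding costs at most $\varepsilon_{k+1}/2$'' bound is asserted but not derived. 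Your forward direction is correct, and the $\sigma$-DTW identity is a clean observation, but the soundness direction as written is a plan rather than a proof, and the design choices you dropped (the extra series and the $r$-fold repetition) are precisely what makes the paper's soundness argument go through.
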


The proof is based on a polynomial-time reduction from a special variant of~\fMSCS for which hardness holds via \Cref{cor:MSCShardness} in \Cref{sec:mscs}.
At this point we make crucial use of the fact that the reduction from the proof of \Cref{thm:MSCS_reduction} actually shows that it is hard to decide whether there is a multiple circular shift of cost at most~$c$ or whether all multiple circular shifts have cost at least~$c+\varepsilon$ for some (polynomially bounded)~$\varepsilon$. This guarantees that a no-instance of \fMSCS is reduced to a no-instance of \DTW.

Before giving the proof, we introduce some definitions.
A \emph{position} $p$ in a sequence $x$ is an integer $1\leq p\leq |x|$, its \emph{value} is $x[p]$. The \emph{distance} between two positions $p$ and $p'$ is $|p'-p|$.
A \emph{block} in a binary sequence is a maximal subsequence of 0's (a \emph{0-block}) or 1's (a \emph{1-block}). Blocks are also represented by integers, indicating their rank in the sequence (a sequence with $n$ blocks has blocks $1,2,\ldots,n$). The \emph{distance} between two blocks of rank $y$ and $y'$ is $|y'-y|$. Note that the notion of distance is different in the context of positions and blocks (even between size-1 blocks, as larger blocks in between increase the position distance).
% In the following proof, we mostly consider block distances for input time series and position distances for the mean.

\begin{proof} We will reduce from a special variant of \fMSCS.
  To this end, consider the function $f_k \colon \{0,\ldots,k\} \rightarrow [0,1]$ with
        \[f_k(x) =\frac {k+x}{k+x+1}\] and $\phi_k \colon \{0,\ldots,k\} \rightarrow [0,1]$ with \[\phi_k(x)=f_k(x)-f_k(0)=\frac{x}{(k+x+1)(k+1)}.\]
        Note that~$\phi \colon \{0,1\}^*\rightarrow [0,1]$ with~$\phi((x_1,\ldots,x_k)) = \phi_k(\sum_{j=1}^kx_j)$ is a polynomially bounded grouping function since
        \[\phi_k'(x) = \frac{1}{(k+x+1)(k+1)}\]
        is strictly decreasing with gap~$\epsilon_k=\frac{1}{(k+1)(2k)(2k+1)}$ and range~$\mu_k=\frac{1}{(k+2)(k+1)}$ (see \Cref{fig:phi} for an example).
        Hence, by \Cref{cor:MSCShardness} in \Cref{sec:mscs} hardness holds for \fMSCS[$\phi$].
        We give a polynomial-time reduction from \fMSCS[$\phi$].
\paragraph{Reduction.}

        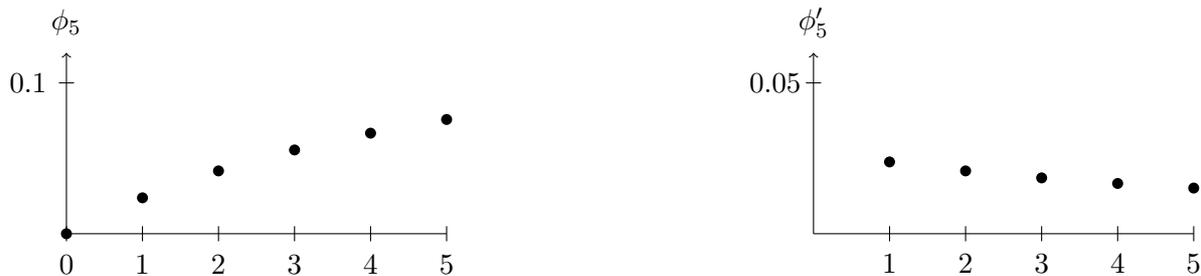
\begin{figure}[t]
          \centering
          \begin{tikzpicture}[yscale=20.0]
            \draw (0,0) -- (5,0);
            \draw[->] (0,0) -- (0,0.12);
            \draw (-0.1,0.1) -- (0.1,0.1);
            \node at (-0.5,0.1) {0.1};
            \node at (0,0.14) {$\phi_5$};
            \foreach \x in {0,1,...,5} {
              %\pgfmathsetmacro\result{(5.0+\x)/(6.0+\x)}
              %\node[circle,fill=black,scale=0.3] at (\x,\result) {};
              \pgfmathsetmacro\result{\x/(36+6*\x)}
              \node[circle,fill=black,scale=0.4] at (\x,\result) {};
              \draw (\x,-0.005) -- (\x,0.005);
              \node at (\x,-0.02) {\x};
            }
          \end{tikzpicture}
          \hfill
          \begin{tikzpicture}[yscale=40.0]
            \draw (0,0) -- (5,0);
            \draw[->] (0,0) -- (0,0.06);
            \draw (-0.1,0.05) -- (0.1,0.05);
            \node at (-0.5,0.05) {0.05};
            \node at (0,0.07) {$\phi_5'$};
            \foreach \x in {1,...,5} {
              \pgfmathsetmacro\result{1 / (36+6*\x)}
              \node[circle,fill=black,scale=0.4] at (\x,\result) {};
              \draw (\x,-0.0025) -- (\x,0.0025);
              \node at (\x,-0.01) {\x};
            }
          \end{tikzpicture}
          \caption{Left: The function~$\phi_5(x) = \frac{x}{6(6+x)}$. Right: The function $\phi_5'(x)=\frac{1}{6(6+x)}$.}
          \label{fig:phi}
        \end{figure}

	In the following, we assume to have an instance of \fMSCS[$\phi$] with $k\ge 15$ length-$n$ binary strings $s_1,\ldots,s_k\in\{A,B\}^n$ (where~$A:=0$ and~$B:=1$) and a target cost~$c$.
        We write~$\epsilon$ for the gap~$\epsilon_k$ of~$\phi_k$ (note that $\epsilon^{-1}\in O(k^3)$).
        The task is to decide whether there exists a multiple circular shift of cost at most $c$ or whether all multiple circular shifts have cost at least $c+\epsilon$ (this is proved to be hard by \Cref{thm:MSCS_reduction}.)
	
	We define the numbers
        \begin{align*}
          m \coloneqq   \left\lceil 1600k(c+\epsilon)\right\rceil \text{ and }
          r \coloneqq \left\lceil \frac{1}\epsilon (3mnk + 2 (c+\epsilon)) \right\rceil +1.
        \end{align*}
	
	Furthermore, we define the binary strings
        $t_A\coloneqq (10)^{m}$ and $t_B\coloneqq 100(10)^{m-1}$, each having~$m$ 0-blocks (all of length one, except for the first 0-block of~$t_B$ which has length two).
	The first 0-block of~$t_A$ and~$t_B$ is called a \emph{coding} block (respectively, an $A$-coding or $B$-coding block).
	
	Let $s'_j$ be the string obtained by concatenating the strings $t_{s_j[i]}$ for $1\leq i\leq |s_j|$. Let $x_j$ be the time series obtained by concatenating~$r$ copies of~$s'_j$.
        Note that each~$x_j$ contains $2mnr$ blocks.
	We also define the \emph{extra} series $x_{k+1}=(1)$ and set the target cost to
        $$c'\coloneqq r(c+ mn f_k(0)) + 3mnk.$$
	
	For the correctness of this reduction we need to prove the following:
	\begin{enumerate}[(i)]
		\item If $(s_1,\ldots,s_k,c)$ is a yes-instance of \fMSCS[$\phi$] (that is, optimal cost at most~$c$), then there exists a mean~$z$ of~$x_1,\ldots,x_{k+1}$ with $\mathcal{F}(z)\le c'$.
		\item If $(s_1,\ldots,s_k,c)$ is a no-instance of \fMSCS[$\phi$] (that is, optimal cost at least $c+\epsilon$), then $\mathcal{F}(z) > c'$ holds for every time series~$z$.
	\end{enumerate}

        \paragraph{Yes-instance of \fMSCS[$\phi$].}
	Consider a multiple circular shift $\Delta=(\delta_1,\ldots,\delta_k)$ of  $s_1,\ldots,s_k$ with cost at most~$c$. Without loss of generality, we assume that $0\le \delta_j < n$ holds for every~$1\le j \le k$.  
	
	We construct a mean~$z$ of length $2mn(r-1)+2$ as follows:
        The first position is aligned with the leftmost $2\delta_jm$ blocks of each $x_j$ (or to the first block if $\delta_j=0$) and the last position is aligned with the rightmost $2(n-\delta_j)m$ blocks of $x_j$.
        For each $1 < i < 2mn(r-1)+2$, position~$i$ is aligned with the $(i-1)+(2\delta_jm)$-th block of~$x_j$, for each $x_j$.
        These positions are called \emph{regular} positions, whereas the
        first and last position are called \emph{extreme}.
        Clearly, all positions of~$z$ are also aligned with the single~1 in $x_{k+1}$.
	
	\begin{claim}\label{clm:reg}
	The total cost of regular positions is at most  $mn(r-1)f_k(0) + (r-1)c$.	
	\end{claim}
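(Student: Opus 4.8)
The plan is to account for the cost of all regular positions by partitioning them according to which block of each~$x_j$ they are aligned with, and to show that a single ``copy'' of the shifted-and-expanded instance contributes exactly the cost~$c'' := mn f_k(0) + c$ of one multiple circular shift, repeated $(r-1)$ times.

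First I would observe that, by construction, every regular position~$i$ (for $1 < i < 2mn(r-1)+2$) is aligned with block~$(i-1) + 2\delta_j m$ of~$x_j$, for each~$j$, and with the single~$1$ of~$x_{k+1}$. Since $x_j$ is the concatenation of $r$ copies of~$s'_j$ and each copy of~$s'_j$ has $2mn$ blocks, the block index $(i-1)+2\delta_j m$ of~$x_j$ lies in the ``window'' of blocks $2\delta_j m + 1, \ldots, 2\delta_j m + 2mn(r-1)$, which is exactly the block range spanning copies $2$ through $r$ of~$s'_j$ when $\delta_j = 0$, and more generally a contiguous stretch of $2mn(r-1)$ blocks whose starting offset within~$x_j$ is shifted by~$2\delta_j m$. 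Because we assumed $0 \le \delta_j < n$ and each $s'_j$ starts/ends with the appropriate coding blocks of $t_{s_j[i]}$, this window aligns position~$i$ of~$z$ with the block of~$x_j$ that ``codes'' symbol $s_j^{\leftarrow\delta_j}[i']$ for the appropriate index~$i'$ — i.e. the regular positions of~$z$ see precisely the circularly shifted strings.

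Next I would group the $2mn(r-1)$ regular positions into $r-1$ consecutive groups of $2mn$ positions each; group~$\ell$ (for $\ell = 1, \ldots, r-1$) then corresponds to one full traversal of the expanded shifted instance. Within a single group I would further split positions into those aligned with a 1-value in every~$x_j$ (cost $0$ each, since all of $z$, the $x_j$'s, and $x_{k+1}$ can carry the value~$1$ there, assuming we set the value of~$z$ at such positions to~$1$), and those aligned with the coding 0-blocks. For a coding-block position, if $w$ of the input strings have a $B$ there (equivalently $s_j^{\leftarrow\delta_j}[i'] = 1$), then setting~$z$'s value to~$0$ at that position incurs cost exactly $f_k(w) = f_k(0) + \phi_k(w)$, matching the local cost function~$\phi$ up to the additive constant $f_k(0)$. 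Summing over the $mn$ coding positions in a group gives $mn f_k(0) + \sum \phi_k(w_{i'}) = mn f_k(0) + \cost_\phi(\Delta)$ for that group's share, and $\cost_\phi(\Delta) \le c$ by hypothesis. (Here I should double-check that the non-coding 0-blocks of $t_A, t_B$ and the interleaved $1$'s genuinely contribute nothing, which follows because within a group the warping path can be taken to advance all coordinates in lockstep through the expanded structure, so $z$ can match every such block exactly.) Multiplying by $r-1$ groups yields the bound $mn(r-1) f_k(0) + (r-1)c$.

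The main obstacle will be the bookkeeping at the boundaries: one must verify that the chosen alignment (lockstep advance on regular positions, with the two extreme positions absorbing the ``wrap-around'' blocks $2\delta_j m$ on the left and $2(n-\delta_j)m$ on the right) is in fact a valid warping path for each~$x_j$ simultaneously, and that the coding-block structure of $t_A$ versus $t_B$ (the length-two 0-block of $t_B$) does not desynchronize the lockstep advance. The length-two coding block of $t_B$ is handled by noting $z$ can spend two steps there (or the path can advance $x_j$ by two against one step of~$z$) at zero additional cost when $z[i']=0$; the only place this matters for synchronization is ensuring all $x_j$ reach the end of their respective copies of $s'_j$ at the same position of~$z$, which holds because each $s'_j$ has the same total block count $2mn$ regardless of the symbols. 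Once validity of this warping path is established, the cost computation is the routine summation sketched above. I would also note explicitly that this is an upper bound on the regular-position cost because the true $\mathcal{F}(z)$ minimizes over warping paths, so exhibiting one path of this cost suffices.
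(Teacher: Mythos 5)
Your overall decomposition (group the $2mn(r-1)$ regular positions into $r-1$ consecutive blocks of $2mn$, account for each block separately, multiply by $r-1$) is the same as the paper's. However, your case analysis and the accompanying parenthetical are internally inconsistent, and the parenthetical is wrong as stated. You split positions into ``those aligned with a $1$-value in every $x_j$'' and ``those aligned with the coding 0-blocks,'' then claim in the parenthetical that the non-coding 0-blocks ``genuinely contribute nothing.'' This is false: a regular position aligned with a non-coding 0-block is matched to $k$ zeros from $x_1,\ldots,x_k$ \emph{and} to the single~$1$ of the extra series $x_{k+1}=(1)$, so its optimal cost is $\frac{k}{k+1}=f_k(0)>0$, not zero. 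The extra series $x_{k+1}$ is precisely the gadget that forces this; ignoring it makes the term $mn(r-1)f_k(0)$ in the claimed bound (the dominant term) unaccounted for. Indeed, if those positions really contributed nothing, the per-group total would be $n f_k(0)+\cost_\phi(\Delta)$, not $mn f_k(0)+\cost_\phi(\Delta)$.

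Relatedly, your phrase ``Summing over the $mn$ coding positions'' conflates the $n$ genuine coding positions per group (those with $i\bmod 2m = 2$, one per column of $\Delta$) with all $mn$ even (0-block-aligned) positions per group. The paper cleanly distinguishes three cases: odd positions ($mn$ per group, cost $0$), even positions with $i\bmod 2m\neq 2$ ($(m-1)n$ per group, cost $f_k(0)$ each), and even positions with $i\bmod 2m = 2$ ($n$ per group, cost $f_k(0)+\phi_k(\cdot)$ each summing to $nf_k(0)+\cost_\phi(\Delta)$). Your final formula is numerically correct, but only because the $\phi_k$ terms vanish at non-coding positions; the justification you give does not actually establish that. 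To fix the proof, replace the parenthetical by the observation that every $0$-aligned position pays $f_k(0)$ due to $x_{k+1}$, and make the three-way case split explicit.
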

	\begin{proof}
          Due to the alternation of 1- and 0-blocks in each~$x_j$ and the fact that $i+(2\delta_jm)\equiv i \pmod 2$, it follows that the $i$-th regular position is mapped only to~1's if $i$ is odd (\emph{odd} position) or only to 0's (except for the single 1 in $x_{k+1}$) if $i$ is even (\emph{even} positions). Thus, odd positions have cost~0, and even positions have a cost depending on the size of the 0-blocks to which they are mapped.
	
          Consider an even position~$i$ such that $i\bmod 2m \neq 2$.
          The $i$-th regular position is not mapped to any coding position in any $x_j$.
          Thus, it is mapped to~$k$ 0's and a single~1, and has cost $\frac{k}{k+1}=f_k(0)$. There are $(m-1)n(r-1)$ such positions, which thus contribute for a total cost of $(m-1)n(r-1)f_k(0)$.
	
          For an even position~$i$ with $i\bmod 2m = 2$, the $i$-th regular position is mapped to a coding block in each $x_j$ (except for the single 1 in $x_{k+1}$). Let $i=2mi'+2$.
   Then, $z[i]$ is mapped to coding positions corresponding to column $i' \bmod n$ of $\Delta$.
   If this column contains $a$ $A$'s and $k-a$ $B$'s, then~$z[i]$ is mapped to $a+2(k-a)$ 0's and a single~1 and has cost
   $$\frac{a+2(k-a)}{a+2(k-a)+1} = 1-\frac 1{2k-a+1} = f_k(k-a)=f_k(0)+\phi_k(k-a).$$
   Note that column $i' \bmod n$ of~$\Delta$ has cost $\phi_k(k-a)$.
   Hence, the overall cost of the $(r-1)n$ regular positions~$i$ with~$i\bmod 2m = 2$ is $$(r-1)\cost_\phi(\Delta)+(r-1)nf_k(0) \le (r-1)c+(r-1)nf_k(0).$$

   Overall, regular positions have cost at most
   $$(m-1)n(r-1)f_k(0) + n(r-1)f_k(0) + (r-1)c = mn(r-1)f_k(0)+(r-1)c.$$	
 \end{proof}

 \begin{claim}\label{clm:extreme}
   The total cost of extreme positions is at most $2knm+2$.
 \end{claim}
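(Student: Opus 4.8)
The plan is to exploit the freedom we still have in choosing the \emph{values} of the mean~$z$ constructed in the yes-instance case (that construction only pins down the warping paths of~$z$ with the~$x_j$, not the entries of~$z$): I would set~$z$ to the value~$0$ at both extreme positions. With this choice, the contribution of an extreme position of~$z$ to~$\Fcost(z)$ is exactly the number of $1$'s among the input elements aligned with it, since each aligned~$1$ costs $(1-0)^2=1$ and each aligned~$0$ costs~$0$. So the whole argument reduces to counting these $1$'s for the first and the last position of~$z$.

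For the first position, recall it is aligned with the leftmost $2\delta_j m$ blocks of each~$x_j$ (or with the first block alone, if $\delta_j=0$), and with the single~$1$ of~$x_{k+1}$. Since $x_j$ is a concatenation of length-$2m$ strings $t_{s_j[1]},t_{s_j[2]},\ldots$, one per input character, and since each of $t_A,t_B$ contains exactly~$m$ ones, the leftmost $2\delta_j m$ blocks of~$x_j$ are precisely those of $t_{s_j[1]}\cdots t_{s_j[\delta_j]}$ and contain exactly $\delta_j m\le (n-1)m< nm$ ones; in the corner case $\delta_j=0$ the first block is a single~$1$, which is again at most~$nm$. Summing over $j=1,\dots,k$ and adding the lone~$1$ of~$x_{k+1}$, the first position is aligned with at most $knm+1$ ones, hence has cost at most $knm+1$. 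By a symmetric count, the last position is aligned with the rightmost $2(n-\delta_j)m$ blocks of each~$x_j$, i.e.\ with those of $t_{s_j[\delta_j+1]}\cdots t_{s_j[n]}$, contributing $(n-\delta_j)m\le nm$ ones per series, plus the~$1$ of~$x_{k+1}$; so it too has cost at most $knm+1$. Adding the two bounds gives the claimed total $2knm+2$.

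The step that needs the most care is the block bookkeeping: one must check that the blocks of~$x_j$ \emph{not} covered by the regular positions are exactly the $2\delta_j m$ leftmost together with the $2(n-\delta_j)m$ rightmost ones. This follows because~$x_j$ has $2mnr$ blocks, the $2mn(r-1)$ regular positions each cover exactly one block, and they cover blocks $2\delta_j m+1,\dots,2\delta_j m+2mn(r-1)$, so the two extreme positions absorb the remaining $2\delta_j m$ blocks at the front and $2mnr-(2\delta_j m+2mn(r-1))=2(n-\delta_j)m$ blocks at the back. Apart from this sanity check and the $\delta_j=0$ corner case, the argument is an elementary count, so I expect no genuine obstacle here.
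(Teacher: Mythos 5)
Your proposal is correct and follows essentially the same approach as the paper: set the extreme positions of $z$ to value $0$ and count the aligned $1$'s, observing that at most $nm$ ones of each $x_j$ (plus the lone $1$ of $x_{k+1}$) are absorbed by each extreme position. Your version is a bit more explicit about the block bookkeeping (pinning the exact counts $\delta_j m$ and $(n-\delta_j)m$) than the paper's terse ``at most $2nm$ consecutive blocks, hence at most $nm$ ones,'' but the argument is otherwise identical.
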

 \begin{proof}
   Aiming at an upper bound, assume that the extreme positions of~$z$ have value 0. 
   Since $0\le \delta_j< n$, an extreme position is mapped to at most $2nm$ consecutive blocks in each $x_j$, thus accounting for at most $nm$ 1's in $x_j$ plus an additional 1 in $x_{k+1}$, yielding an upper bound of cost $knm+1$ for each extreme position.
 \end{proof}
 
 Combining \Cref{clm:reg,clm:extreme}, we obtain
 \begin{align*}
   \mathcal{F}(z) &\le mn(r-1)f_k(0) + (r-1)c + 2knm+2 \\
   &\le r(c+mn f_k(0)) + 2knm + 2 -mnf_k(0)  \\
   &= r(c+mn f_k(0)) + mn(2k+f_k(0)) + 2   \\	   
   &\leq r(c+mn f_k(0)) + mn(2k+1) + 2 \\
   &\leq r(c+mn f_k(0)) + 3mnk = c'.
 \end{align*}
Overall, $z$ is a mean for $(x_1,\ldots,x_{k+1})$ with cost at most $c'$, so $(x_1,\ldots,x_{k+1}, c')$ is a yes-instance of $\DTW$.

\paragraph{No-instance of \fMSCS[$\phi$].}
	We assume that the \fMSCS[$\phi$] instance~$(s_1,\ldots,s_k,c)$ is a no-instance, and we consider a fixed mean time series $z$ together with optimal warping paths between $z$ and $x_1,\ldots, x_{k+1}$. We aim at proving a lower bound on the cost of $z$.
	We say that $x_j[i]$ is \emph{matched} to~$z[i']$ if~$(i,i')$ is in the warping path between~$x_j$ and~$z$.
        Clearly, every position of $z$ is matched to the single position in~$x_{k+1}$.

        We write $\#_1(p)$ and $\#_0(p)$ respectively for the number of positions with value 1 (resp. 0) to which $z[p]$ is matched among $x_1,\ldots, x_k$ (ignoring the matching with the extra sequence $x_{k+1}$). The cost of $z[p]$ is
        $$C(p)\coloneqq\frac{\#_0(p)(\#_1(p)+1)}{\#_0(p)+\#_1(p)+1}.$$

        We will use the following monotonicity property of the cost.

\begin{lemma}\label{lem:monotmonousCost}
For any $a\geq a'\geq 0$ and $b\geq b' \geq 1$, it holds $\frac{ab}{a+b} \geq \frac{a'b'}{a'+b'}$.
\end{lemma}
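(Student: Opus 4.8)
The plan is to clear denominators and reduce to an obvious inequality. Note first that both denominators are positive: $a+b>0$ and $a'+b'>0$, because $b\ge b'\ge 1$. Hence multiplying the claimed inequality through by $(a+b)(a'+b')$ is legitimate and order-preserving, so it is equivalent to $ab(a'+b')\ge a'b'(a+b)$.

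Now I would simply expand the difference of the two sides:
\[ ab(a'+b') - a'b'(a+b) \;=\; aa'(b-b') + bb'(a-a'). \]
Both summands are nonnegative: $aa'(b-b')\ge 0$ since $a,a'\ge 0$ and $b\ge b'$, and $bb'(a-a')\ge 0$ since $b,b'>0$ and $a\ge a'$. This proves $ab(a'+b')\ge a'b'(a+b)$, hence the lemma.

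Conceptually this is just the statement that the ``parallel sum'' $(a,b)\mapsto \tfrac{ab}{a+b}$, whose reciprocal is $\tfrac1a+\tfrac1b$, is separately nondecreasing in each argument; the expansion above is the elementary way to see this without having to treat the boundary cases $a=0$ or $a'=0$ (where the reciprocal form would break down). I do not anticipate any genuine difficulty — the lemma is a one-line computation. The only thing to keep an eye on is that the hypothesis $b\ge b'\ge 1$ is exactly what guarantees that the denominators, and the coefficients $b,b'$ in the expansion, are positive.
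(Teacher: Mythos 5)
Your proof is correct. The paper proves the lemma by computing the partial derivatives $\frac{\partial}{\partial a}\frac{ab}{a+b}=\frac{b^2}{(a+b)^2}$ and $\frac{\partial}{\partial b}\frac{ab}{a+b}=\frac{a^2}{(a+b)^2}$ and observing that both are nonnegative on the relevant domain, i.e.\ it argues via separate monotonicity using calculus. You instead clear denominators (noting both are positive because $b\geq b'\geq 1$) and expand the difference as $ab(a'+b')-a'b'(a+b)=aa'(b-b')+bb'(a-a')$, then observe that both terms are nonnegative. The two arguments encode the same underlying fact (monotonicity in each argument separately), but yours is purely algebraic and slightly more self-contained: it avoids invoking differentiability, and it handles the boundary case $a'=0$ or $a=0$ without any special care, whereas the derivative argument implicitly relies on the function being defined and smooth on a neighborhood. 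The paper's version is a touch shorter to state and perhaps more transparent to a reader comfortable with the ``increasing in each variable'' viewpoint; your version is the one you would reach for if you wanted a proof with no analytic machinery at all. Either is a perfectly good one-liner.
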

\begin{proof}
  It suffices to see that the partial derivatives
  $$\frac{\partial}{\partial a } \frac{ab}{a+b} = \frac{b^2}{(a+b)^2} \quad\text{and}\quad\frac{\partial}{\partial b} \frac{ab}{a+b} = \frac{a^2}{(a+b)^2}$$ 
  are non-negative for~$a\ge 0$ and $b\ge 1$.
\end{proof}	

We define some further notation for the remaining part of the proof.
The \emph{range} of a position~$p$ of~$z$ in $x_j$ is the set of positions of $x_j$ to which $p$ is matched. The range is a subinterval of $\{1,2,...,|x_j|\}$ and by construction of $x_j$, its values may not have three consecutive 0's or two consecutive 1's.
More precisely, its values alternate between 0 and 1, except for the (rare) occasions where it includes a $B$-coding block. 
The number of blocks of~$x_j$ intersecting the range of $p$ in $x_j$ is denoted $r_j(p)$. 
The number of $B$-coding blocks included in the range of $p$ in $x_j$ is denoted~$r^B_j(p)$
and we define $r^B(p) \coloneqq \sum_{j=1}^k r^B_j(p)$.
	
A position $p$ of $z$ is \emph{0-simple} (resp. \emph{1-simple}) if it is only matched to positions with value 0 (resp. 1) within $x_1,\ldots, x_k$, that is $\#_1(p)=0$ (resp. $\#_0(p)=0$). 
   It is \emph{simple} if it is 0- or 1-simple, and it is \emph{bad} otherwise. 
   The cost of a 1-simple position is~0 since it is matched only to positions with value 1.
   For a 0-simple position, we have $\#_1(p)=0$ and $k\leq \#_0(p)\leq 2k$ (more precisely, $\#_0(p) = k+r^B(p)$ and $r^B(p)\leq k$).
   Thus, the cost is between  $\frac k{k+1}$ and $\frac {2k}{2k+1}$.
   For $k>10$, the cost is always contained in the interval $[0.9, 1]$.

   We continue with several structural observations regarding a mean.
   
	\begin{observation} \label{lem:noConsecutiveSimple}
          There exists a mean without consecutive 1-simple positions or consecutive 0-simple positions. Such a mean
          is called \emph{non-redundant}.
	\end{observation}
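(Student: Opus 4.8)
The plan is a standard contraction/minimality argument. I would start from an arbitrary mean $z$ together with optimal warping paths to $x_1,\dots,x_{k+1}$, and then repeatedly \emph{contract} two consecutive positions of $z$ that are simple of the \emph{same} type into a single position, maintaining the invariant that $\mathcal F$ does not increase. Since each contraction strictly decreases $|z|$ and $|z|\ge 1$, the process terminates, and it can only stop at a mean that has no two consecutive $1$-simple positions and no two consecutive $0$-simple positions; the cost of this final mean is at most $\mathcal F(z)$, so in particular an optimal mean can be chosen non-redundant.

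To carry out one contraction, suppose $z[p]$ and $z[p+1]$ are both $1$-simple, and let $z'$ be $z$ with position $p+1$ deleted. By the step conditions of a warping path, for each $j\le k$ the positions of $x_j$ matched to $z[p]$ form a contiguous interval whose right end abuts (or overlaps in exactly one element) the contiguous interval of positions matched to $z[p+1]$, so their union is again an interval. I would build a warping path between $x_j$ and $z'$ from the one between $x_j$ and $z$ by keeping every pair $(i,q)$ with $q\le p$, turning each $(i,p+1)$ into $(i,p)$, and turning each $(i,q)$ with $q\ge p+2$ into $(i,q-1)$ (dropping the one pair repeated at the seam, if any); checking that the result still satisfies the warping-path step conditions is routine, and for $x_{k+1}=(1)$ the path simply stays the full path through the single element. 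Under these paths every position of $z'$ other than the merged one is matched to exactly the same multiset of input values as in $z$, so its cost is unchanged, while the merged position is matched only to $1$'s among $x_1,\dots,x_k$ (since both $z[p]$ and $z[p+1]$ were) and hence stays $1$-simple with cost $0$. Therefore $\mathcal F(z')\le\mathcal F(z)$ and $|z'|=|z|-1$.

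The $0$-simple case uses the same contraction, and differs only in the cost comparison: the merged position has $\#_1=0$ and $\#_0=\#_0(p)+\#_0(p+1)$, so its cost is $\frac{\#_0(p)+\#_0(p+1)}{\#_0(p)+\#_0(p+1)+1}<1$, whereas, because every position of $z$ must be matched in each $x_j$ and thus every $0$-simple position has $\#_0\ge k$, we get $C(p)+C(p+1)=\frac{\#_0(p)}{\#_0(p)+1}+\frac{\#_0(p+1)}{\#_0(p+1)+1}\ge 2-\frac{2}{k+1}\ge 1$, so again $\mathcal F(z')\le\mathcal F(z)$ and $|z'|<|z|$. The only point that needs genuine care is the warping-path bookkeeping at the contracted seam — confirming that the contracted index sequences are still valid warping paths and that no position's matched multiset changes except at the merged one; once that is in place, the cost inequalities are immediate from the formula $C(p)=\frac{\#_0(p)(\#_1(p)+1)}{\#_0(p)+\#_1(p)+1}$ together with the bound $\#_0(p),\#_1(p)\ge k$.
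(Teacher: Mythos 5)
Your proof is correct and takes essentially the same approach as the paper: repeatedly merge two consecutive same-type simple positions, check that this yields a valid warping path for each $x_j$, and compare costs (a merged $1$-simple position still has cost $0$; for $0$-simple, the merged cost is below $1$ while the two original positions together cost at least $\tfrac{2k}{k+1}\ge 1$). You spell out the warping-path bookkeeping at the seam more explicitly than the paper does, and your $0$-simple bound $\tfrac{2k}{k+1}\ge 1$ is slightly cleaner than the paper's use of $0.9$ per position, but these are presentational differences, not a different route. One nitpick: the final sentence invokes ``$\#_0(p),\#_1(p)\ge k$'', which cannot both hold for a simple position (one of them is zero); you clearly mean $\#_0(p)\ge k$ for a $0$-simple position (and $\#_1(p)\ge k$ for $1$-simple), as you used correctly earlier.
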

	\begin{proof}
          Any two consecutive 1-simple (or 0-simple) positions in a mean $z$ have consecutive or intersecting ranges in each $x_j$ with the same value (1 or 0).
          Hence, they can be merged to one single 1-simple (or 0-simple) position.
	    
	    Since the warping of other positions in~$z$ remains unchanged, we focus on the cost of the merged position.
	    For 1-simple positions, the cost remains unchanged (both solutions yield a cost of 0 for the 1-simple positions). 
	    For 0-simple positions, the cost of the two 0-simple positions in the original solution is at least $0.9$ each. However, the cost of the merged 0-simple position is at most 1.
	\end{proof}

        We say a block $b$ of some input~$x_j$ is \emph{matched} (\emph{fully matched}) to a position~$p$ in~$z$ if some position (all positions) in $b$ is (are) matched to $p$.
        (Note that the distinction is only relevant for $B$-coding blocks, as all other blocks have size 1). 
	
	\begin{observation}	   
          For a non-redundant mean~$z$, any $B$-coding block of some~$x_j$ that is not fully matched to a position in~$z$ is matched to at least one bad position in~$z$.
          
           \label{two-consecutive-one-bad}
          If two consecutive positions in~$z$ are matched to a common block, then at least one of them is bad.
	\end{observation}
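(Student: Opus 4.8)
The plan is to prove the second statement first and then obtain the first one from it, because a $B$-coding block that fails to be fully matched necessarily gives rise to two \emph{consecutive} positions of $z$ that are matched to it. Throughout, I would keep in mind three facts: every position of $z$ is either simple or bad by definition; $B$-coding blocks are the only blocks of length two, every other block having length one; and in a non-redundant mean there are no two consecutive $1$-simple positions and no two consecutive $0$-simple positions (\Cref{lem:noConsecutiveSimple}).

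For the second statement, suppose consecutive positions $p$ and $p+1$ of $z$ are both matched to a common block $b$ of some $x_j$. All positions of $b$ share a single value $v\in\{0,1\}$, so each of $p$ and $p+1$ is matched, within $x_1,\dots,x_k$, to at least one position of value $v$; hence $\#_v(p)\ge 1$ and $\#_v(p+1)\ge 1$, so neither of the two positions can be $(1-v)$-simple. If neither $p$ nor $p+1$ were bad, then both would be simple, and by the previous sentence both would in fact be $v$-simple — two consecutive $v$-simple positions, contradicting non-redundancy of $z$. Hence at least one of $p, p+1$ is bad.

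For the first statement, let $b$ be a $B$-coding block of some $x_j$, consisting of the two consecutive positions $i$ and $i+1$ of $x_j$, and assume $b$ is not fully matched to any position of $z$. Fixing the optimal warping path between $x_j$ and $z$, let $I_i$ and $I_{i+1}$ be the sets of positions of $z$ matched to $x_j[i]$ and $x_j[i+1]$. By the definition of a warping path, each $I_h$ is a nonempty interval of $\{1,\dots,|z|\}$, and writing $I_i=\{\alpha,\dots,\beta\}$, the interval $I_{i+1}$ begins at $\beta$ or at $\beta+1$ (the $z$-coordinate along the path is nondecreasing and grows by at most one per step). Since $b$ is not fully matched to any position, $I_i\cap I_{i+1}=\emptyset$, which forces $I_{i+1}$ to begin at $\beta+1$. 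Thus $z[\beta]$ is matched to $x_j[i]\in b$ and $z[\beta+1]$ is matched to $x_j[i+1]\in b$, so the consecutive positions $\beta$ and $\beta+1$ of $z$ are both matched to $b$; by the second statement at least one of them is bad, and that position is matched to $b$, as required.

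The argument is essentially a careful unwinding of the definitions of \emph{matched}, \emph{simple}, \emph{bad}, and \emph{non-redundant}, combined with the monotonicity of warping paths. The only step I expect to need genuine care — and the one I would single out — is verifying that a $B$-coding block which is not fully matched really produces two \emph{consecutive} positions of $z$ matched to it, rather than a single position or two positions with something in between; this is exactly where the interval structure of warping paths and the disjointness $I_i\cap I_{i+1}=\emptyset$ are used.
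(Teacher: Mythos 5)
Your proof is correct and takes a somewhat different structural route from the paper's. The paper proves the two claims separately: for the first, it observes that a not-fully-matched $B$-coding block is matched to at least two positions of~$z$, argues that these cannot all be $0$-simple (implicitly using that they form a contiguous interval, so non-redundancy applies) and that none can be $1$-simple, hence one is bad; for the second, it only sketches a contrapositive. You instead prove the second claim first (phrased uniformly in terms of the block's common value~$v$) and then derive the first from it, by tracking the intervals $I_i$ and $I_{i+1}$ of positions of~$z$ matched to the two cells of the $B$-coding block and using their disjointness together with the warping-path step structure to locate a consecutive pair $\beta,\beta+1$ both matched to the block. This is a clean deduction of one statement from the other rather than two parallel arguments, and it makes explicit the warping-path mechanics that the paper leaves implicit (namely, that match sets of a block form an interval and that splitting a size-two block across the path necessarily yields a consecutive pair of~$z$-positions). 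Both routes rest on the same ingredients --- non-redundancy forbids consecutive same-type simple positions, matching to a $B$-coding block excludes being $1$-simple, and warping paths yield contiguous match sets --- so neither is more general, but yours is more self-contained, and in the step you singled out (extracting the consecutive pair) it is more careful than the paper's terse sketch.
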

	\begin{proof}
	   Consider a $B$-coding block $b$ and all positions of $z$ matched to it. There are at least two of them, which cannot all be 0-simple (since~$z$ is non-redundant). Also, none of them can be 1-simple (since they are matched to at least one 0). Thus, at least one of them is bad.
	   
	   We prove the contrapositive of the second statement: if two simple positions $p < p'$ have a common block match, then they are both $a$-simple ($a\in\{0,1\}$), and cannot be consecutive in a non-redundant mean.
	   %For the second part of the statement, consider two simple positions $p < p'$ having a common match. Since they are both $a$-simple ($a\in\{0,1\}$), they cannot be consecutive (that is, $p' > p+1$).
       %    Note that also position~$p+1$ must have this match. Thus, it cannot be $(1-a)$-simple, and since it cannot be $a$-simple as well, $p+1$ is a bad position with $p<p+1<p'$.
	\end{proof}
	
	We now introduce an \emph{assignment} relation between a block $b$ of some $x_1,\ldots,x_k$ and a position~$p$ of~$z$. 	
	We say that $b$ is \emph{assigned} to the position $p$ (written $b\rightarrow p$) such that~$p$ is the leftmost simple position fully matched to~$b$ (if any), or (if no such simple position exists) such that $p$ is the leftmost bad position matched to $b$. Note that any size-1 block has at least one position (simple or bad) fully matched to it, and by \Cref{lem:noConsecutiveSimple}, any size-2 block has either a simple position fully matched to it or a bad position matched to it, so overall every block is assigned to exactly one position.

	For a position $p$ of~$z$, we introduce the following quantities:
\begin{align*}
%I'm replacing Box symbols by diamonds: I can't help interpreting the bos as a "missing character" in the font 
  \diamondsuit_0(p)&\coloneqq \text{number of 0-blocks in~$x_1,\ldots,x_k$ matched to $p$},\\
  \diamondsuit_1(p)&\coloneqq \text{number of 1-blocks in~$x_1,\ldots,x_k$ matched to $p$},\\
  q(p)&\coloneqq \text{number of $B$-coding blocks assigned to $p$},\\
  \rho_j(p)&\coloneqq\text{number of blocks in $x_j$ assigned to~$p$},\\
g(p)&\coloneqq 
\begin{cases}
0, &\text{ if $p$ is simple}\\
\max\{1, r_1(p) -1, \ldots, r_k(p)-1\}, &\text{ if $p$ is bad}
\end{cases}.
\end{align*}

We quickly observe the following:

\begin{enumerate}[(i)]
\item \label{rem1} If $p$ is simple, it is fully matched to $q(p)$ $B$-coding blocks.
	\\ \emph{Proof}. A $B$-coding block can only be fully matched to a single position, so if $p$ is simple, then all its fully matched $B$-coding blocks are assigned to it.

\item \label{rem2} If~$p$ is 0-simple, then $\diamondsuit_0(p)=k$, $\diamondsuit_1(p)=0$, and $C(p) = f_k(q(p))$.
	\\ \emph{Proof}. A 0-simple position is matched to exactly one 0-block in each~$x_1,\ldots,x_k$. The cost follows from (\ref{rem1}).
\item \label{rem3} If~$p$ is 1-simple, then $\diamondsuit_0(p)=0$, $\diamondsuit_1(p)=k$, and $C(p)=0$.
	\\ \emph{Proof}. A 1-simple position is matched to exactly one 1-block in each~$x_1,\ldots,x_k$. 
\item \label{rem5} If~$p$ is bad, then $g(p)\leq 2\min\{\diamondsuit_1(p), \diamondsuit_0(p)\}$.
\\ \emph{Proof}. For a bad position~$p$, we have $\min\{\diamondsuit_1(p),\diamondsuit_0(p)\}\geq 1$ on the one hand, and also $\min\{\diamondsuit_1(p),\diamondsuit_0(p)\}\geq \frac 12\max_{j=1,\ldots,k}(r_j(p)-1)$ on the other hand.
\item \label{rem9} If~$p$ is bad, then  $C(p)\geq \frac{1}{2}\min\{\diamondsuit_1(p)+1,\diamondsuit_0(p)\}$. 
\\ \emph{Proof}. Let $\mu = \min \{\diamondsuit_1(p)+1,\diamondsuit_0(p)\}$. Then $p$ is matched to at least $\mu$ 0's and $\mu$ 1's. Thus, $C(p)\geq \mu z[p]^2 + \mu(1-z[p])^2= \mu (z[p]^2 + (1-z[p])^2) \geq \frac \mu 2$.
\item \label{rem6} For every position~$p$, it holds  $|\diamondsuit_0(p)-\diamondsuit_1(p)| \leq k$ and $\diamondsuit_0(p)+\diamondsuit_1(p) \geq k$.
  \\ \emph{Proof}. For each~$x_j$, $1\le j \le k$, the difference between number of 0-blocks matched to~$p$ and number of 1-blocks matched to~$p$ is at most one. On the other hand, there is at least one 0- or 1-block matched to $p$ in $x_j$.
\end{enumerate}

   \paragraph{Cost of a Single Position.}
   We consider a fixed position $p$ of $z$.
   For simplification we write $\diamondsuit_0:=\diamondsuit_0(p), \diamondsuit_1:=\diamondsuit_1(p), q:=q(p), g:=g(p)$, and $C:=C(p)$.   
   The goal is to provide a lower bound for~$C$ that can be decomposed into the following elements: 
   \begin{itemize}
     \item a \emph{background cost} $\Cback(p) := \diamondsuit_0\frac{f_k(0)}k$, 
     \item a \emph{coding cost} $\Ccode(p) := \phi_k(q)$ reflecting the extra cost induced by a matched coding block, 
     \item a \emph{gap cost} of $\Cgap(p) := 0.01g$ which is added when $p$ is bad and increases as $p$ is matched to more blocks in some~$x_j$.
   \end{itemize}

   \begin{claim} \label{lem:LBsinglePosition}
     The cost $C$ of position $p$ is at least $LB(\diamondsuit_0,q,g,k)$, defined as follows:
     \begin{align*}     
 LB(\diamondsuit_0,q,g,k)&:=\Cback(p) + \Ccode(p) + \Cgap(p) \\
            &=\frac{\diamondsuit_0}{k+1} + \frac{q}{(k+q+1)(k+1)} + 0.01g  
     \end{align*}
   \end{claim}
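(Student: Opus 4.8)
The plan is to prove the bound by distinguishing whether $p$ is $1$-simple, $0$-simple, or bad, using the structural facts (i)--(vi) collected above together with the monotonicity of $C$ provided by \Cref{lem:monotmonousCost}. Note that if $p$ is simple then $g(p)=0$, so $\Cgap(p)=0$ and only $\Cback(p)+\Ccode(p)$ remains; recall also $\Cback(p)=\diamondsuit_0\frac{f_k(0)}{k}=\frac{\diamondsuit_0}{k+1}$ and $\phi_k(q)=f_k(q)-f_k(0)$.

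The two simple cases are immediate. If $p$ is $1$-simple, then $C=0$ and $\diamondsuit_0=0$ by item~(iii); since $p$ is matched to no $0$, it is matched to no $B$-coding block, hence $q=0$ by item~(i); therefore $LB(\diamondsuit_0,q,g,k)=0=C$. If $p$ is $0$-simple, then $\diamondsuit_0=k$ and $C=f_k(q)$ by item~(ii), so $\Cback(p)=f_k(0)$ and $LB(\diamondsuit_0,q,g,k)=f_k(0)+\phi_k(q)=f_k(q)=C$. In both cases the bound in fact holds with equality.

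The bad case is the core of the argument. Here $\diamondsuit_0,\diamondsuit_1\ge1$, $\#_1=\diamondsuit_1$ (all $1$-blocks have size~$1$), and $\#_0\ge\diamondsuit_0\ge q$, the last inequality because each $B$-coding block assigned to $p$ is one of the $0$-blocks matched to $p$. I would first record the elementary upper bounds $\phi_k(q)<\frac{1}{k+1}$ (from $\phi_k(q)=\frac{q}{(k+q+1)(k+1)}$), $0.01\,g(p)\le0.02\min\{\diamondsuit_0,\diamondsuit_1\}$ (item~(iv)), and $|\diamondsuit_0-\diamondsuit_1|\le k$ (item~(vi)), and then split into three sub-cases. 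If $\diamondsuit_0\le\diamondsuit_1+1$, item~(v) gives $C\ge\tfrac12\min\{\diamondsuit_1+1,\diamondsuit_0\}=\tfrac{\diamondsuit_0}{2}$ while $LB\le\frac{\diamondsuit_0}{k+1}+\frac{1}{k+1}+0.02\,\diamondsuit_0$, and for $k\ge15$ the desired inequality reduces to $\diamondsuit_0\bigl(\tfrac12-\tfrac1{k+1}-0.02\bigr)\ge\tfrac1{k+1}$, which is true since $\diamondsuit_0\ge1$. If $\diamondsuit_0\ge\diamondsuit_1+2$ and $\diamondsuit_1\ge2$, then item~(v) gives $C\ge\tfrac{\diamondsuit_1+1}{2}$, and using $\diamondsuit_0\le\diamondsuit_1+k$ one gets $LB\le\frac{\diamondsuit_1}{k+1}+1+0.02\,\diamondsuit_1$, so it remains to check $\diamondsuit_1\bigl(\tfrac12-\tfrac1{k+1}-0.02\bigr)\ge\tfrac12$, which holds for $\diamondsuit_1\ge2$ and $k\ge15$. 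Finally, if $\diamondsuit_0\ge\diamondsuit_1+2$ and $\diamondsuit_1=1$, then $\#_1=1$, $3\le\diamondsuit_0\le k+1$, $g(p)\le2$, and by \Cref{lem:monotmonousCost} ($C$ is monotone in $\#_0$) we have $C=\frac{2\#_0}{\#_0+2}\ge\frac{2\diamondsuit_0}{\diamondsuit_0+2}$, whereas $LB\le\frac{\diamondsuit_0+1}{k+1}+0.02$; since the left-hand side is concave and the right-hand side affine in $\diamondsuit_0$, it suffices to verify the inequality at $\diamondsuit_0=3$ and $\diamondsuit_0=k+1$, which is a direct computation for $k\ge15$.

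The main obstacle is the bad case and, inside it, the numerical bookkeeping: the lower bound on $C$ is tightest when $\diamondsuit_0$ is of order~$k$ (so that $\Cback(p)$ is close to~$1$), and closing the gap there needs both the near-balance $|\diamondsuit_0-\diamondsuit_1|\le k$, which prevents $C$ from being too small, and the slack furnished by $k\ge15$; special care is required in the degenerate sub-case $\diamondsuit_1=1$, where item~(v) alone does not suffice and one has to fall back on the exact formula $C=\frac{2\#_0}{\#_0+2}$ together with its monotonicity in $\#_0$.
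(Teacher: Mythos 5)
Your proof is correct and follows essentially the same approach as the paper: the simple cases are handled identically, and the bad case is split according to the relation between $\diamondsuit_0$ and $\diamondsuit_1$ using the same observations (items (\ref{rem2})--(\ref{rem6})) and \Cref{lem:monotmonousCost}. Your sub-case boundaries ($\diamondsuit_0\le\diamondsuit_1+1$; $\diamondsuit_0\ge\diamondsuit_1+2,\diamondsuit_1\ge2$; $\diamondsuit_1=1$) differ slightly from the paper's ($\diamondsuit_0\le\diamondsuit_1$; $6\le\diamondsuit_1<\diamondsuit_0$; $\diamondsuit_1\le5<\diamondsuit_0$), and your concavity/endpoint check in the degenerate sub-case is a minor refinement over the paper's crude numerical bound, but the core argument is the same.
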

   
   \begin{proof}
     In the following, we write~$LB$ for $LB(\diamondsuit_0,q,g,k)$.
     We prove $C\geq LB$ by case distinction.
   
   For a 0-simple position~$p$, we have $\diamondsuit_0 = k$ and $g=0$. Thus (by (\ref{rem2})),
   \begin{align*}
     LB= f_k(0) + \phi_k(q) + 0 =f_k(q)=C.
   \end{align*}      
   
   For a 1-simple position~$p$, we have~$\diamondsuit_0=0$, $q=0$ and $g=0$. Thus (by (\ref{rem3})),
   \begin{align*}
     LB= 0+\phi_k(0) + 0=0=C.
   \end{align*}    
   
   It remains to consider the case where $p$ is a bad position. Thus, we have $\diamondsuit_0\geq 1$ and  $\diamondsuit_1\geq 1$. First, note that
   \begin{align*}
    LB - 0.01g 
      =\frac{\diamondsuit_0}{k+1} +  \frac{q}{(k+q+1)(k+1)}
      \leq \frac{\diamondsuit_0}{k} +  \frac{1}{k+1}
      \leq 2\frac{\diamondsuit_0}{k},     
   \end{align*}
   that is, $LB \leq  2\frac{\diamondsuit_0}{k} + 0.01g$.
   
   We now use this upper bound for the following three sub-cases. First, if $\diamondsuit_0\leq \diamondsuit_1$, then we have   
   \begin{align*}   
		C&\geq \frac{1}{2}\diamondsuit_0 \text{\quad (by (\ref{rem9}))} \\
		 &\geq \frac {2}{k}\diamondsuit_0 + 0.02\diamondsuit_0 \text{\quad (since $k\geq 5$)} \\
		 &\geq LB \text{\quad (by (\ref{rem5}))} \\
   \end{align*}    
   
    Second, if $6\leq \diamondsuit_1 < \diamondsuit_0$, we have   
   \begin{align*}   
		C&\geq \frac{1}{2}\diamondsuit_1 \text{\quad (by (\ref{rem9}))}\\
		 &\geq 2+(\frac 2k +0.02)\diamondsuit_1 \text{\quad (since $k\geq 15$ and $\diamondsuit_1\geq 6$)} \\
		 &= \frac {2(\diamondsuit_1+2k)}k + 0.02\diamondsuit_1  \\
		 &\geq  \frac {2\diamondsuit_0}k+0.02\diamondsuit_1 \text{\quad(by (\ref{rem6}))}\\
		 &\geq LB \text{\quad (by (\ref{rem5}))}
   \end{align*}    
   
   Finally, if $\diamondsuit_1\leq 5 < \diamondsuit_0$, then $k-5\leq \diamondsuit_0\leq k+5$ (by (\ref{rem6})), and $g\leq 10$ (by (\ref{rem5})).
   We have
   \begin{align*}   
     C= \frac{(\diamondsuit_1+1)\#_0(p)}{\diamondsuit_1+1+\#_0(p)}\geq \frac{2(k-5)}{k-3}\ge 1.67
   \end{align*}
   using \Cref{lem:monotmonousCost}, with $\diamondsuit_1\geq 1$, $\#_0(p)\geq \diamondsuit_0\geq k-5$, and~$k \ge 15$.
   On the other side, we have
   \begin{align*}
     LB&= f_k(0)\frac{\diamondsuit_0}k + \phi_k(q) + 0.01 g \\
       &\leq \frac{5}{k}f_k(0) + f_k(0)+\phi_k(q) + 0.1 \text{\quad (since $\diamondsuit_0\leq k+5$ and $g\leq 10$)} \\
       &\leq \frac{5}{k} + 1.1  \text{\quad (since $f_k(0)\leq f_k(0)+\phi_k(q)\leq 1$)} \\
       &\leq 1.44 \text{\quad (since $k\geq 15$)} \\		
       &\leq C
   \end{align*}
   \end{proof}

                 %   We define the background, coding and gap costs of~$I$, denoted respectively $\Cback(I)$, $\Ccode(I)$ and $\Cgap(I)$, as the sum of the respective costs of its positions.

\paragraph{(Ir)regular Intervals.}
Two positions $p,p'$ of~$z$ at distance $\ell$ (i.e.\ $|p-p'|=\ell$) form an \emph{irregular pair} if they are matched respectively with blocks $y$ and $y'$ of some~$x_j$ having distance such that either $|y'-y| \leq \ell-\frac{m}{2k}$ or  $|y'-y|\geq \ell+\frac{m}{2k}$.
An interval~$I$ of positions in~$z$ is called \emph{regular} if it does not contain any irregular pair (otherwise it is called \emph{irregular}).
The background, coding and gap cost of~$I$ is the sum of the respective costs of its positions.
The total number of matches between blocks of~$x_1,\ldots,x_k$ and positions in~$I$ is denoted~$W(I)=\sum_{p\in I}(\diamondsuit_0(p)+\diamondsuit_1(p))$.

We aim at computing lower bounds on the cost of intervals.
The structure of regular intervals allows us to bound the coding cost using the minimum cost of the original \fMSCS[$\phi$] instance.
Irregular intervals have bad positions, which allow us to derive a lower bound on their gap cost.

We first introduce some notation: a position $p$ of $z$ is \emph{$j$-coding} if there is a coding block $y$ in $x_j$ such that $y$ is assigned to~$p$ (i.e., $y\rightarrow p$); it is \emph{coding} if it is $j$-coding for some $j$, otherwise it is \emph{non-coding}. A non-coding position~$p$  is \emph{free} if all positions $p'$ at distance at most $\frac{m}{2k}+2$ from $p$ are non-coding (see \Cref{fig:regularInterval}).
We first make the following technical claim before proving the main bound on the coding cost of regular intervals (\Cref{lem:regularCosts}).

\begin{claim}\label{count-coding-positions}
	In a regular interval, if two positions $p < p'$ are at distance at most $2im-\frac{m}{2k}$ for some $i$, then for any $j$, there are at most $i$ $j$-coding positions in $[p, p']$.
	
	Conversely,  if $p$ and $p'$  are at distance at least $2im+\frac{m}{2k}+1$, then, for any $j$, there are at least $i$ $j$-coding positions in $[p, p']$. 
\end{claim}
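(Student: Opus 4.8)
\medskip
\noindent\textbf{Proof plan.}
Fix a colour index $j$ throughout and assume $i\ge 1$ (for $i=0$ the first statement is vacuous and the second is trivial). Both directions rely on a single conversion principle: inside a regular interval, the position-distance in $z$ and the block-distance in $x_j$ of matched positions/blocks agree up to the additive slack $\tfrac{m}{2k}$, while the coding blocks of $x_j$ occur \emph{exactly} at the block-indices $\equiv 1 \pmod{2m}$; hence any window of $D$ consecutive block-indices of $x_j$ contains at least $\lfloor D/(2m)\rfloor$ coding blocks. Recall also that every block is assigned to exactly one position, and that if $b\to p$ then $p$ is matched to $b$ (by the definition of the assignment relation, $p$ is either the leftmost simple position fully matched to $b$ or the leftmost bad position matched to $b$).

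For the upper-bound direction I would argue by contradiction: suppose $[p,p']$ contains $j$-coding positions $q_0<q_1<\dots<q_i$. Since the assignment relation is a function from blocks to positions, the coding blocks of $x_j$ witnessing that the $q_t$ are $j$-coding are pairwise distinct; and since each such block is matched to the position it is assigned to, monotonicity of the warping path between $z$ and $x_j$ forces $q_s<q_t$ to imply $b_s<b_t$ in block order. Thus $b_0<b_1<\dots<b_i$ are $i+1$ distinct coding blocks, so their block-indices (all $\equiv 1\pmod{2m}$) span block-distance at least $2mi$. But $q_0$ is matched to $b_0$, $q_i$ is matched to $b_i$, and $|q_i-q_0|\le|p-p'|\le 2im-\tfrac{m}{2k}$, so the block-distance $|b_i-b_0|\ge 2mi\ge |q_i-q_0|+\tfrac{m}{2k}$, which makes $(q_0,q_i)$ an irregular pair — contradicting regularity of the interval. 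Hence there are at most $i$ $j$-coding positions in $[p,p']$.

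For the lower-bound direction, let $\beta_p$ be a block of $x_j$ matched to $p$ and $\beta_{p'}$ one matched to $p'$. Since $(p,p')$ is not irregular and $|p-p'|\ge 2im+\tfrac{m}{2k}+1$, the block-distance $|\beta_{p'}-\beta_p|$ exceeds $|p-p'|-\tfrac{m}{2k}\ge 2im+1$; as $p<p'$, monotonicity gives $\beta_p\le\beta_{p'}$, and in fact $\beta_p<\beta_{p'}$ (otherwise this block-distance would be $0$). So \emph{strictly} between $\beta_p$ and $\beta_{p'}$ there lie at least $2im$ consecutive block-indices of $x_j$, hence at least $i$ coding blocks $y_1<\dots<y_i$. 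Each $y_t$ lies strictly between $\beta_p$ and $\beta_{p'}$ in $x_j$, so monotonicity of the warping path puts every position of $z$ matched to $y_t$ between $p$ and $p'$; in particular the position $p_t$ to which $y_t$ is assigned lies in $[p,p']$ and is $j$-coding. Finally the $p_t$ are pairwise distinct: if two distinct coding blocks of $x_j$ were assigned to a common position $p^\ast$ of the interval, then $p^\ast$ would be matched to both, and two distinct coding blocks of $x_j$ are at block-distance at least $2m\ge\tfrac{m}{2k}$, so $(p^\ast,p^\ast)$ would be an irregular pair — impossible. Thus $[p,p']$ contains at least $i$ distinct $j$-coding positions.

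I expect the main difficulty to be the careful bookkeeping around the assignment relation together with warping-path monotonicity: one must repeatedly invoke that an assigned position is matched to its block, that distinct blocks are assigned to distinct positions, and — crucially in the converse direction — use regularity a \emph{second} time to forbid a single position from absorbing two coding blocks of the same $x_j$. Pinning the assigned positions inside $[p,p']$ (which is why one selects coding blocks lying \emph{strictly} between $\beta_p$ and $\beta_{p'}$) and making the $\ell\pm\tfrac{m}{2k}$ strict/non-strict inequalities line up is where the argument is most delicate; it also uses the mild convention that a single position matched to two far-apart blocks already constitutes an irregular pair (taken with itself).
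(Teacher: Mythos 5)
Your proof is correct and follows essentially the same approach as the paper: both use regularity to translate position-distance in~$z$ into block-distance in~$x_j$ up to the additive slack~$\tfrac{m}{2k}$, combine this with the fact that consecutive coding blocks of $x_j$ are exactly $2m$ blocks apart, and use that the assignment relation is a function from blocks to positions together with warping-path monotonicity. Your phrasing of the upper bound as a contradiction with irregularity and your explicit argument (via the degenerate irregular pair $(p^\ast,p^\ast)$) for the distinctness of assigned positions are minor presentational variations, not a different route.
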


\begin{proof}
  Fix  $j\in\{1,\ldots,k\}$ and consider, in $x_j$, the first block $y$ matched to~$p$ and the last block $y'$ matched to $p'$ (then $y'\geq y$). Note that all $j$-coding positions in $[p,p']$ have been assigned to a distinct coding block in~$[y,y']$.
  Since $p$ and $p'$ are a regular pair, it holds
  \[y'-y< p'-p + \frac{m}{2k} \leq 2im.\]
  So $y'< y+2im$, and thus $x_j$ contains at most~$i$ coding blocks in~$[y,y']$. These coding blocks are assigned to positions in~$[p,p']$. Hence, $[p,p']$ contains at most $i$ $j$-coding positions.
	
  For the other direction, consider again blocks $y$ and $y'$ as above.
  In this case there is a slight difference: if block $y$ or $y'$ is coding, it might be assigned to a coding position outside of the interval $[p,p']$, which then would not count in the lower bound. Thus, we consider only blocks strictly between $y$ and $y'$, among which all coding blocks are assigned to a coding position in~$[p,p']$. 
  Since $p'-p\geq 2im+\frac{m}{2k}+1$, we have $y'-y>p'-p+\frac{m}{2k} \geq 2im+1$, so there are at least $2im$ blocks strictly between $y$ and $y'$, including at least $i$ coding blocks: they are assigned to at least $i$ $j$-coding positions in $[p,p']$.
\end{proof}

\begin{figure}
  \centering
	\begin{tikzpicture}
	
	\pgfmathsetseed{1} %1 is a good seed :)
	
	\foreach \j/\col in {1/brown,2/purple,3/green!50!black} {
		\node at (-1,3.5-\j) {$x_\j$};
		\draw (0,3.5-\j+.2) -- ++(11, 0);
		\draw (0,3.5-\j-.2) -- ++(11, 0);
		\foreach \i in {0,..., 5} {
			\node (code\j\i)[circle, fill=\col, inner sep=2pt] at (2*\i+.5,3.5-\j){};  	
		} 	   
		\foreach \i in {0,..., 4} {
			\node (pos\j\i)[circle, fill=\col, inner sep=2pt] at (2*\i+.02+.25*rand+.7*\j,-1){};
		} 	   
	}
	
	\draw [decorate,decoration={brace,amplitude=10pt},yshift=0pt]
	(0.6,3-4+.25) -- ++(10.1,0) node [black,midway,yshift=.5cm] {\footnotesize
		$I$};
	
	\fill[cyan] (0,3-4+.2) rectangle ++(.6,-.4);
	\fill[cyan] (11,3-4+.2) rectangle ++(-.3,-.4);
	\fill[pattern=north west lines, pattern color=orange] (1.5,3-4+.2) rectangle ++(.4,-.4);
	
	\node at (-1,3-4) {$z$};
	\draw (0,3-4+.2) -- ++(11, 0);
	\draw (0,3-4-.2) -- ++(11, 0);
	
	\foreach \i in {0,..., 4} {
		\draw (2*\i+1.7,-.8) -- ++ (0, -.4);
		\node [] at (2*\i+1.7,-1.4){$p_\i$};
		%\node [circle, fill=brown, inner sep=2pt] at (2*\i+.5,3-4){};
	} 	   
	\draw[->, densely dotted] (code11)  to [bend left=15] (pos11);
	\draw[->, densely dotted] (code22)  to [bend right=15] (pos21);
	\draw[->, densely dotted] (code30)  to [bend left=15] (pos30);
	
	\draw[->, densely dotted] (code14)  to [bend left=15] (pos14);
	\draw[->, densely dotted] (code25)  to [bend right=15] (pos24);
	\draw[->, densely dotted] (code33)  to [bend left=15] (pos33);
	
	\end{tikzpicture}	
	\caption {\label{fig:regularInterval} Decomposition of a regular interval ($I$) into segments delimited at positions $p_i$. Coding blocks in $x_j$ are indicated with colored bullets, as well as their assigned positions in~$z$ (dotted arcs). Position $p_0$ is free (no coding position within the striped area), and each segment $[p_i, p_{i+1}]$ contains exactly one $j$-coding position from each $x_j$. These coding positions correspond to columns of a multiple circular shift of the \fMSCS[$\phi$] input strings. The coding cost of $n$ consecutive segments can be lower-bounded by the minimum cost of the \fMSCS[$\phi$] instance. 
	}
\end{figure}
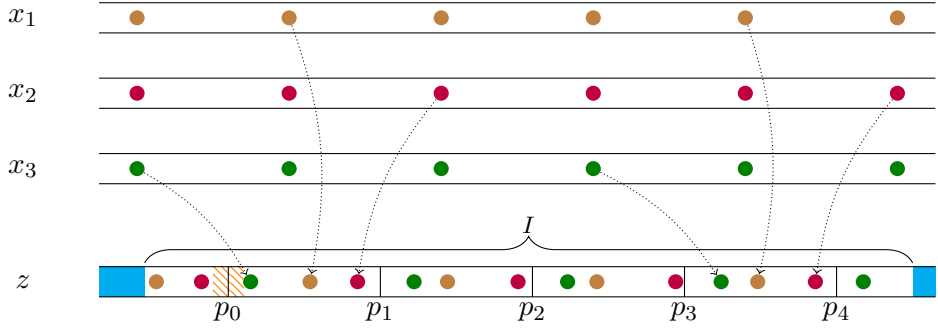

\begin{claim}\label{lem:regularCosts}
  The coding cost of a length-$\ell$ regular interval is at least
  \[\left(\frac{(\ell+1)}{2mn} - 2\right)(c+\epsilon).\]
\end{claim}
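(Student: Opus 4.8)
If $\ell+1\le 4mn$, then $\tfrac{\ell+1}{2mn}-2\le 0$, while the coding cost $\sum_{p\in I}\Ccode(p)=\sum_{p\in I}\phi_k(q(p))$ is a sum of nonnegative terms, so the bound is immediate; assume henceforth $\ell+1>4mn$. The plan is to cut (most of) the regular interval $I$ into consecutive \emph{segments}, each of which carries exactly one coding block from each of $x_1,\dots,x_k$, to bundle $n$ consecutive segments into a \emph{group}, and to extract from each group a multiple circular shift $\Delta=(\delta_1,\dots,\delta_k)$ of $s_1,\dots,s_k$ such that the coding cost accumulated over the group is at least $\cost_\phi(\Delta)$. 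Since $(s_1,\dots,s_k,c)$ is a no-instance of \fMSCS[$\phi$], every multiple circular shift has cost at least $c+\epsilon$, so each group contributes at least $c+\epsilon$ to the coding cost of $I$; counting the groups then gives the claimed lower bound.

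For the segmentation I would use \Cref{count-coding-positions} together with regularity, anchoring at a free position $p_0$ near the left end of $I$ (locating such a $p_0$, and discarding the trailing incomplete segment, costs only $O(m)$ positions, which is absorbed by the additive $-2$). Boundary positions $p_0<p_1<p_2<\cdots$ are then placed so that each segment contains exactly one $j$-coding position for every $j\in\{1,\dots,k\}$: the ``converse'' half of \Cref{count-coding-positions} (distance $\ge 2m+\tfrac m{2k}+1$ forces a $j$-coding position) keeps a boundary from being pushed too far, while the ``direct'' half (distance $\le 2m-\tfrac m{2k}$ admits at most one $j$-coding position) prevents two of the same color in one segment; regularity is exactly what makes these thresholds $2m\mp\tfrac m{2k}$ apply to all $k$ colors at once, so each segment spans roughly $2m$ positions. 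Moreover, in a regular interval every position has range spanning fewer than $2m$ blocks in each $x_j$ (again by the definition of regular pairs applied to adjacent positions), so a $j$-coding position carries exactly one coding block of $x_j$ and consecutive $j$-coding positions carry consecutive coding blocks of $x_j$. Consequently $I$ contains at least $\tfrac{\ell+1}{2m}-O(1)$ complete segments, hence at least $\tfrac{\ell+1}{2mn}-2$ complete groups of $n$ consecutive segments (see \Cref{fig:regularInterval}).

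Next I would turn one group into a circular shift and bound its coding cost. Within a group, for each $j$ the $n$ coding blocks of $x_j$ occurring in its $n$ segments are $n$ consecutive coding blocks of $x_j$; since the coding blocks of $x_j$ are precisely the periodic copies of the $n$ positions of $s_j$, these realize the cyclic shift $s_j^{\leftarrow\delta_j}$ for some offset $\delta_j$. Reading the $n$ segments left to right therefore yields an honest multiple circular shift $\Delta=(\delta_1,\dots,\delta_k)$, and the weight (number of $1$'s, i.e.\ of $B$-coding blocks) of its $t$-th column equals $\sum_{p\text{ in segment }t}q(p)$, because every $B$-coding block occurring in segment $t$ is assigned to a position of that segment. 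Since $\phi_k$ is concave with $\phi_k(0)=0$, it is subadditive on the nonnegative integers, so $\sum_{p\text{ in segment }t}\phi_k(q(p))\ge\phi_k\!\bigl(\sum_{p\text{ in segment }t}q(p)\bigr)$, i.e.\ the coding cost of segment $t$ is at least $\phi_k$ of the weight of the $t$-th column of $\Delta$. Summing over $t=1,\dots,n$ shows the coding cost of the group is at least $\cost_\phi(\Delta)\ge c+\epsilon$. As distinct groups occupy disjoint position sets, the coding cost of $I$ is at least (number of groups)$\,\cdot(c+\epsilon)\ge\bigl(\tfrac{\ell+1}{2mn}-2\bigr)(c+\epsilon)$.

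The step I expect to be the main obstacle is the segmentation itself together with the bridge to a genuine circular shift: making rigorous that one can always cut off segments each carrying exactly one coding block per string, that each such block is assigned inside its own segment (a bad position may absorb coding blocks from several strings, and a $B$-coding block may be matched across a boundary), and that the offsets harvested from a group assemble into one honest cyclic shift per string. This is where the slack $\tfrac m{2k}$ in the definition of (ir)regular pairs and the free anchor $p_0$ are spent, and also where the precise value of the additive constant $-2$ must be reconciled with the accumulated rounding and boundary losses. The remaining ingredients — the trivial short-interval case, the subadditivity of $\phi_k$, and the arithmetic of counting groups — are routine.
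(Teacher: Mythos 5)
Your proposal takes essentially the same approach as the paper's proof: the trivial short-interval case, segmenting the regular interval into length-$2m$ blocks anchored at a free position $p_0$ via \Cref{count-coding-positions}, one $j$-coding position per segment per $j$, bundling $n$ consecutive segments into a group that realizes an honest circular shift whose \fMSCS[$\phi$] cost lower-bounds the group's coding cost via subadditivity of $\phi_k$, and a count of complete groups. The bookkeeping you flag as the main obstacle (cutting off a free anchor, keeping assignments inside their segment, and absorbing the boundary losses into the $-2$) is exactly how the paper spends the $\tfrac{m}{2k}$ slack, so your outline is a faithful match.
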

\begin{proof}
Let $I$ be a length-$\ell$ interval. Assume that $\ell\geq 3mn$ (otherwise the stated lower bound is negative which is trivial).

The first part of the proof consists of splitting interval~$I$ into consecutive length-$2m$ segments, each one containing exactly one $j$-coding position for each $j$. To this end, a few positions need to be cropped from both ends of~$I$. In other words, we need to find a good starting point (a free position) close to the left end of~$I$.

Consider the first position $p$ of $I$ and position $p':=p+2m-\frac{m}{2k}$ (in~$I$).  
By \Cref{count-coding-positions}, for any $j$, there is at most one $j$-coding position $p''\in[p,p']$ (and so at most $k$ such coding positions in total). 
Accordingly, the interval $[p,p']$ contains at most $(k+1)$ disjoint intervals of non-coding positions, with total size at least $(p'-p +1) - k =  2m-\frac{m}{2k}-k+1$. 
Using the fact that $m$ is, by definition, much larger than $k$ (specifically, $\frac m4\geq 5(k+1)+\frac 1k \Rightarrow 2mk-\frac m2 -1 \geq m(k+1)+5k(k+1)$ when $k\geq 2$), there is one interval of non-coding positions in $[p,p']$ with size at least
\[\frac{2m-\frac{m}{2k} -k+1}{k+1} 
  \geq\frac {2mk -\frac m2-1}{k(k+1)} 
  \geq \frac {m(k+1)+5k(k+1)}{k(k+1)}
  \geq \frac m{k} +5 
  ,\]
so its middle position is free.
Hence, there exists a free position, denoted $p_0$, with $p+\frac m{2k}+2 \leq p_0 \leq p+2m-\frac m{2k}-2$. 

Let $\lambda:=\lfloor \frac{\ell}{2m}\rfloor -2$ and $p_i:=p_0+2im$ for $0<i\leq \lambda$. Note that $\lambda\geq n$ (following the assumption on $\ell$), and that every $p_i$ is in $I$.

We fix some input sequence $x_j$ for $1\leq j\leq k$.
Intuitively, positions $p_i$ are the cutting points of our segments within interval $I$.
We now aim at showing that there is exactly one $j$-coding position in each segment~$[p_{i-1},p_i-1]$.
First, consider positions $p=p_0-\frac{m}{2k}-2$  and $p_i-1$. 
By \Cref{count-coding-positions}, since $(p_i-1)-p= 2im + \frac{m}{2k}+1$, interval $[p,p_i-1]$ contains at least $i$ $j$-coding positions.
Since $p_0$ is free, these coding positions cannot be before $p_0$ (as $p_0-p= \frac{m}{2k}+2$), so they are in $[p_0, p_i-1]$.
Consider now positions $p'=p_0+\frac{m}{2k}+1$  and $p_i-1$.
By \Cref{count-coding-positions}, since $(p_i-1)-p= 2im-\frac{m}{2k}$, there are at most $i$ $j$-coding positions in $[p',p_i-1]$, and therefore at most $i$ $j$-coding positions in $[p_0, p_i-1]$. Overall there are exactly $i$ $j$-coding positions in $[p_0, p_i-1]$.

Overall, there is exactly one $j$-coding position in $[p_{i-1}, p_{i}-1]$ for every $0<i\leq \lambda$ and every $j$.
We write~$C_{i,j}$ for the corresponding coding block in~$x_j$.

Let $q_i$ be the number of $B$-coding blocks among $C_{i,1},\ldots,C_{i,k}$.
Then,
\[\sum_{p=p_{i-1}}^{p_{i}-1}q(p) = q_i \text{ and } \sum_{p=p_{i-1}}^{p_{i}-1}\phi_k(q(p))\geq \phi_k(q_i).\] 

Let $\delta_j$ be such that $C_{0,j}$ is the $\delta_j$-th coding block of $x_j$.
Then, $C_{i,j}$ is the $(\delta_j+i)$-th coding block of $x_j$, for every $0\leq i<\lambda$, and $(C_{i,1},\ldots,C_{i,k})$ corresponds to the $(i\bmod n)$-th column in the multiple circular shift $\Delta=(\delta_1,\ldots,\delta_k)$ of $s_1,\ldots, s_k$.
Thus, $q_i$ is the number of $B$s in this column and $\phi_k(q_i)$ is the corresponding cost of this column. Summing over $n$ consecutive columns (circularly), we get, for any integer $a$ with $0\leq a\leq \lambda-n$,  a multiple circular shift of $s_1, \ldots, s_k$ with cost $\sum_{i=a}^{a+n-1}\phi_k(q_i)$.
Since, by assumption, every multiple circular shift of~$s_1, \ldots, s_k$ has cost at least $c+\epsilon$, we have
\[\sum_{i=a}^{a+n-1}\phi_k(q_i)\geq c+\epsilon.\]
We can now compute the lower bound on the coding cost of interval $I$. To this end, we first extract $\left\lfloor \frac \lambda n\right \rfloor\ge 1$ length-$2mn$ subintervals of~$I$, each consisting of $n$ segments of the form $[p_{i-1},p_{i}-1]$. It follows

\begin{align*}
 \sum_{p\in I} \phi_k(q(p))
 &\geq \sum_{p=p_0}^{p_{\lambda}-1} \phi_k(q(p))\\
 & \geq\sum_{a=0}^{\left\lfloor \frac \lambda n\right\rfloor-1}  
       \sum_{i=an}^{an+n-1}
       \sum_{p=p_{i-1}}^{p_{i}-1} \phi_k(q(p))\\
 & \geq\sum_{a=0}^{\left\lfloor \frac \lambda n\right\rfloor-1}  
	\sum_{i=an}^{an+n-1}
	 \phi_k(q_{i})\\
 & \geq\sum_{a=0}^{\left\lfloor \frac \lambda n\right\rfloor-1}  
	c+\epsilon \\
 & =   \left\lfloor \frac \lambda n\right\rfloor(c+\epsilon)\\
 &\geq  \left(\frac \lambda n-1\right)(c+\epsilon) \\ 
 &=  \left(\frac {\lfloor \frac{\ell}{2m}\rfloor -2} n-1\right)(c+\epsilon) \\
 &\geq \left(\frac { \frac{\ell}{2m} -3} n-1\right)(c+\epsilon) \\
 &\geq \left(\frac \ell{2mn} - 2\right)(c+\epsilon)
\end{align*}

\end{proof}

Concerning the gap costs, we prove the following.

\begin{claim}\label{lem:IrregularGap}
For any interval $I$ of length $\ell$, the gap cost is at least	
	
\[\Cgap(I) \geq \frac1{100}\left(\frac {W(I)}k-\ell\right).\]

Moreover, if $I$ is irregular, then
\[\Cgap(I)\geq\frac m{400k}.\]
\end{claim}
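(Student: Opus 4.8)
\emph{Overview and first bound.} The plan is to get the unconditional bound from a position‑wise estimate and the irregular bound from a closer look at the warping path of the input string that witnesses irregularity. For the first bound, note that for every position $p$ of $z$ we have $\diamondsuit_0(p)+\diamondsuit_1(p)=\sum_{j=1}^{k}r_j(p)$, because the blocks of $x_j$ matched to $p$ are exactly those intersecting the range of $p$ in $x_j$. I would then prove the per‑position inequality $g(p)\ge\frac1k\bigl(\diamondsuit_0(p)+\diamondsuit_1(p)\bigr)-1$: if $p$ is simple then $r_j(p)=1$ for every $j$ (the range of a simple position cannot meet two blocks of a binary string, since consecutive blocks carry opposite values), so both sides vanish; if $p$ is bad then $g(p)\ge\max_{j}\bigl(r_j(p)-1\bigr)\ge\frac1k\sum_{j}r_j(p)-1$. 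Summing over the $\ell$ positions of $I$ and multiplying by $0.01$ yields $\Cgap(I)=0.01\sum_{p\in I}g(p)\ge\frac1{100}\bigl(\tfrac{W(I)}{k}-\ell\bigr)$.

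\emph{Setup for the irregular bound.} Let $(p,p')$ with $p\le p'$ be an irregular pair inside $I$ witnessed by an input $x_j$, let $\ell'\coloneqq p'-p$, and let $y\le y'$ be the corresponding matched blocks of $x_j$ (if $p=p'$ the pair forces $r_j(p)\ge\frac m{2k}$ and we are already done, so assume $p<p'$). For a position $q$ write $r_j(q)$ for the number of blocks of $x_j$ matched to $q$, and let $o_q$ be the number of blocks of $x_j$ matched both to $q$ and to $q+1$. The warping‑path conditions imply that the leftmost and rightmost blocks of $x_j$ matched to $q$ are non‑decreasing in $q$ and that consecutive ranges touch; hence $0\le o_q\le\min\{r_j(q),r_j(q+1)\}$, the number $Y$ of distinct blocks of $x_j$ matched by $p,\dots,p'$ equals $\sum_{q=p}^{p'}r_j(q)-\sum_{q=p}^{p'-1}o_q$, the displacement of the leftmost matched block from $p$ to $p'$ equals $\sum_{q=p}^{p'-1}\bigl(r_j(q)-o_q\bigr)$, and (by \Cref{two-consecutive-one-bad}) $o_q\ge1$ forces $q$ or $q+1$ to be bad. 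I will also use $g(q)\ge r_j(q)-1$ for every $q$ and $g(q)\ge1$ whenever $q$ is bad.

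\emph{The two cases.} If $y'-y\ge\ell'+\tfrac m{2k}$ (the warping path stalls on $z$), then $Y\ge y'-y+1$, so
\[\sum_{q=p}^{p'}g(q)\ \ge\ \sum_{q=p}^{p'}\bigl(r_j(q)-1\bigr)\ \ge\ Y-(\ell'+1)\ \ge\ (y'-y)-\ell'\ \ge\ \frac m{2k}.\]
If $y'-y\le\ell'-\tfrac m{2k}$ (the warping path stalls on $x_j$), then combining the displacement identity with $y\le$ (rightmost block matched to $p$), $y'\ge$ (leftmost block matched to $p'$), and $\sum_{q=p}^{p'}r_j(q)\ge\ell'+1$ gives
\[\sum_{q=p}^{p'-1}o_q+\bigl(r_j(p)-1\bigr)+\bigl(r_j(p')-1\bigr)\ \ge\ \frac m{2k}.\]
I would then charge the left‑hand side against the budget $2\sum_{q=p}^{p'}g(q)$: put $r_j(p)-1$ on $g(p)$, put $r_j(p')-1$ on $g(p')$, and charge each $o_q$ to a bad endpoint of the edge $\{q,q+1\}$, splitting it evenly when both endpoints are bad. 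Using $g(q)\ge\max\{1,r_j(q)-1\}\ge r_j(q)/2$ at a bad position, $o_q\le r_j(q),r_j(q+1)$, and the fact that the end positions $p$ and $p'$ each lie on only one of the edges $\{q,q+1\}$, a short case check shows no position is charged beyond its budget $2g(q)$, so $2\sum_{q=p}^{p'}g(q)\ge\frac m{2k}$ here too. In both cases $\sum_{q\in I}g(q)\ge\sum_{q=p}^{p'}g(q)\ge\frac m{4k}$, whence $\Cgap(I)=0.01\sum_{q\in I}g(q)\ge\frac m{400k}$.

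\emph{Main obstacle.} The delicate step is the case $y'-y\le\ell'-\tfrac m{2k}$: one has to convert "$\frac m{2k}$ stalls of the warping path on $x_j$" into $\Omega(m/k)$ units of gap cost while losing at most a factor of two, even though a single bad position can lie in the ranges of several consecutive blocks and flank two overlapping edges, whereas simple positions contribute nothing to $g$. Getting the constant down to exactly $\frac m{400k}$ is precisely what forces the careful charging argument above (and the observation that the two end positions belong to only one edge each); once the charging is in place, the remaining inequalities are routine, and the generous choice $m\ge1600k(c+\epsilon)$ leaves ample room for the additive slack.
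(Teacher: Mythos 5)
Your proof is correct, and the overall plan is the one the paper uses: a per‑position bound $g(p)\geq\frac1k\sum_j r_j(p)-1$ for the unconditional estimate, then a two‑case analysis of an irregular pair (the path stalls on $z$, or it stalls on $x_j$). Your first case is identical to the paper's. Where you diverge is the ``stalls on $x_j$'' case. You work with the overlap counts $o_q$ allowing them to be arbitrary and then run a nontrivial double‑counting/charging argument against the budget $2g(q)$. The paper's route is noticeably shorter because of a structural fact you did not exploit: since a warping path is monotone and contiguous, the ranges of two consecutive positions of $z$ in $x_j$ overlap in \emph{at most one} block, i.e.\ $o_q\in\{0,1\}$. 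With that in hand, your displacement identity immediately gives $\sum_{q}o_q\geq\frac m{2k}$, which is then literally ``at least $\frac m{2k}$ pairs of consecutive positions share a block of $x_j$''; by \Cref{two-consecutive-one-bad} each such pair contains a bad position, a bad position lies in at most two consecutive pairs, so there are at least $\frac m{4k}$ bad positions, each contributing $g\geq1$. Your charging argument is sound (I checked the budget cases, including the endpoints), and has the mild virtue of not needing the $o_q\in\{0,1\}$ observation, but at the cost of a case analysis that you only gesture at (``a short case check''); if you keep your route, those cases should be written out. Also note your intermediate inequality $\sum o_q+(r_j(p)-1)+(r_j(p')-1)\geq\frac m{2k}$ is weaker than necessary — the $(r_j(p')-1)$ term is an artifact of summing $r_j$ up to $p'$ when the displacement identity only involves $q\leq p'-1$ — but since it only makes your charging harder, not wrong, this is harmless.
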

\begin{proof}
For the first lower bound, it suffices to note that for any position~$p$ (simple or bad), it holds $g(p)\geq \frac{W(p)} k-1$, where $W(p)$ is the overall number of blocks matched to~$p$.

For the second lower bound, consider an irregular pair $p, p'$ in~$I$ and an integer $j$, such that $p$ is matched to some block $y$ in $x_j$ and $p'$ is matched to block $y'$ in~$x_j$ with $|(y'-y)-(p'-p)|>\frac m{2k}$. 

If  $y'-y>p'-p + \frac m{2k}$, then
\[\sum_{p''=p}^{p'} g(p'') \geq \sum_{p''=p}^{p'} (r_j(p'')-1)\geq y'-y - (p'-p)>\frac m{2k}>\frac m{4k}.\]

If  $y'-y<p'-p - \frac m{2k}$, then there are at least $\frac m{2k}$ pairs of consecutive positions matching the same block in $x_j$, and for every such pair at least one of the two positions is bad (by \Cref{two-consecutive-one-bad}). Since any bad position may be counted in at most two such pairs, the interval has at least $\frac m{4k}$ bad positions. Hence, using $g(p)\geq 1$ for bad positions, we obtain $\sum_{p''=p}^{p'} g(p'') \geq \frac m{4k}$.

\end{proof}

\paragraph{Cost of a Mean.}
Now, consider the partition
\[[\alpha_1:=1,\beta_1],[\alpha_2:=\beta_1+1,\beta_2],\ldots,[\alpha_L:=\beta_{L-1}+1,\beta_L:=|z|]\] of~$[1, |z|]$ into smallest irregular intervals (except for $[\alpha_L,\beta_L]$ which is possibly regular if the last remaining subinterval is regular).
The following lower bounds hold.

\begin{claim}\label{lem:OneIntervalCost}
  For $1\le i<L$, the coding cost plus the gap cost of interval $[\alpha_i,\beta_i]$ is at least 
  \[\Cgap([\alpha_i, \beta_i])+\Ccode([\alpha_i, \beta_i])\geq (c+\epsilon) \frac{W([\alpha_i,\beta_i])}{2knm}.\]
  
  The coding and gap costs of~$[\alpha_L,\beta_L]$ sum to at least
  \[\Cgap([\alpha_L, \beta_L])+\Ccode([\alpha_L, \beta_L])\geq (c+\epsilon) \frac{W([\alpha_i,\beta_i])}{2knm} - 2(c+\epsilon).\]
\end{claim}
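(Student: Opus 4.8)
The plan is to bound the coding plus gap cost of each interval of the partition separately, in each case combining the coding-cost lower bound for its regular portion (\Cref{lem:regularCosts}) with the gap-cost lower bounds of \Cref{lem:IrregularGap}. Fix an interval $I=[\alpha_i,\beta_i]$, write $\ell:=\beta_i-\alpha_i+1$ and $W:=W(I)$, and note that $W\ge k\ell$ by~(\ref{rem6}). From $m=\lceil 1600k(c+\epsilon)\rceil$ together with $k\ge15$ and $n\ge1$ I would first record the two inequalities $\tfrac m{800k}\ge 2(c+\epsilon)$ and $\tfrac{c+\epsilon}{2nm}\le\tfrac1{200}$; these drive all the estimates below.

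The bulk of the argument is the case where $I$ is \emph{irregular} but its prefix $[\alpha_i,\beta_i-1]$ is \emph{regular}. This covers every $i<L$: a smallest irregular interval has length at least $2$ (a single position forms no irregular pair), and deleting its last position leaves a regular interval; it also covers $i=L$ whenever the last interval is itself a smallest irregular interval. Here \Cref{lem:regularCosts} applied to the regular length-$(\ell-1)$ interval $[\alpha_i,\beta_i-1]$, together with $\phi_k\ge0$, gives $\Ccode(I)\ge\Ccode([\alpha_i,\beta_i-1])\ge(\tfrac\ell{2mn}-2)(c+\epsilon)$; and since $I$ is irregular, averaging the two bounds of \Cref{lem:IrregularGap} (via $\max\{a,b\}\ge\tfrac{a+b}2$) gives $\Cgap(I)\ge\tfrac1{200}(\tfrac Wk-\ell)+\tfrac m{800k}$. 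I would then write the difference $\Cgap(I)+\Ccode(I)-\tfrac{(c+\epsilon)W}{2knm}$ as
\[\left(\tfrac W{200k}-\tfrac{(c+\epsilon)W}{2knm}\right)+\left(\tfrac{(c+\epsilon)\ell}{2mn}-\tfrac\ell{200}\right)+\left(\tfrac m{800k}-2(c+\epsilon)\right).\]
The third bracket is $\ge0$ by the choice of $m$. Factoring $W/k\ (\ge\ell)$ out of the first bracket and using $\tfrac1{200}\ge\tfrac{c+\epsilon}{2nm}$ bounds it below by $\ell(\tfrac1{200}-\tfrac{c+\epsilon}{2nm})$, which exactly cancels the second bracket; hence the whole difference is $\ge0$, i.e.\ $\Cgap(I)+\Ccode(I)\ge\tfrac{(c+\epsilon)W}{2knm}$. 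This is the first displayed inequality for $i<L$, and a fortiori the second (weaker) inequality for $i=L$ in this sub-case.

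There remains only $i=L$ with $[\alpha_L,\beta_L]$ \emph{regular}. Then \Cref{lem:regularCosts} applies directly to $I$, giving $\Ccode(I)\ge(\tfrac{\ell+1}{2mn}-2)(c+\epsilon)\ge(\tfrac\ell{2mn}-2)(c+\epsilon)$, while the first bound of \Cref{lem:IrregularGap} still gives $\Cgap(I)\ge\tfrac1{100}(\tfrac Wk-\ell)$. Regrouping $\Cgap(I)+\Ccode(I)-\tfrac{(c+\epsilon)W}{2knm}+2(c+\epsilon)$ as $(\tfrac W{100k}-\tfrac{(c+\epsilon)W}{2knm})+(\tfrac{(c+\epsilon)\ell}{2mn}-\tfrac\ell{100})$ and running the same cancellation (now with $\tfrac1{100}\ge\tfrac{c+\epsilon}{2nm}$, $W\ge k\ell$) shows it is $\ge0$, which is exactly the second displayed inequality.

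I expect the only real obstacle to be the bookkeeping in the middle step: one has to split the gap cost so that the ``excess matching'' term $\tfrac Wk-\ell$ is charged — with plenty of slack, since $2nm$ is enormous compared with $c+\epsilon$ — against the $\tfrac{(c+\epsilon)W}{2knm}$ target, while the irregularity bonus $\tfrac m{400k}\ge4(c+\epsilon)$ of \Cref{lem:IrregularGap} absorbs the additive $2(c+\epsilon)$ deficit incurred by \Cref{lem:regularCosts}. The last interval is treated separately precisely because, being possibly regular, it does not enjoy that bonus, which is why its bound is weaker by the additive $2(c+\epsilon)$.
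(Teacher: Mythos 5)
Your proof is correct and follows the same approach as the paper's: apply \Cref{lem:regularCosts} to the regular prefix $[\alpha_i,\beta_i-1]$, average the two bounds of \Cref{lem:IrregularGap} for the irregular intervals $i<L$ (and use only the first bound for $i=L$), and observe that the choice of $m$ makes the slack terms cancel. The only cosmetic difference is that you present the cancellation by regrouping the difference into three brackets rather than sequentially rewriting the $\Cgap$ lower bound, but the underlying arithmetic is identical.
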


\begin{proof}
  Consider interval $[\alpha_i,\beta_i]$, let $\ell$ be its length, and $W:=W([\alpha_i,\beta_i])$.
  Since $[\alpha_i,\beta_i-1]$ is a regular interval of length~$\ell-1$, by \Cref{lem:regularCosts}, we have the following lower bound on the coding cost: 
\[
 \Ccode([\alpha_i, \beta_i]) \geq  \Ccode([\alpha_i, \beta_i-1]) \geq \frac {\ell}{2mn} (c+\epsilon) -2(c+\epsilon).
\]

 For irregular intervals $[\alpha_i, \beta_i]$ with $1\le i<L$, we combine both bounds on the gap cost of \Cref{lem:IrregularGap} (by averaging their values):

\[\Cgap([\alpha_i, \beta_i]) 
 \geq \frac 1{200}\left(\frac{W}{k}-\ell\right)+ \frac m{800k}.\]
 
 Using $m\geq 1600k(c+\epsilon)\geq \frac {200(c+\epsilon)}{2n}$, we obtain 
 
 \[\Cgap([\alpha_i, \beta_i]) 
 \geq\frac{c+\epsilon}{2nm}\left(\frac{W}k -\ell\right) + 2(c+\epsilon).\]
 
 Summing with $\Ccode$ gives the desired lower bound for irregular intervals.  
 For interval $[\alpha_L, \beta_L]$, we use the general lower bound from \Cref{lem:IrregularGap}, which yields
 \[\Cgap([\alpha_L, \beta_L]) \geq \frac1{100}\left(\frac Wk-\ell\right) \geq \frac {c+\epsilon}{2mn}\left(\frac W{k}-\ell\right).\]
 Again, the sum of $\Cgap$ and $\Ccode$ above yields the desired lower bound for $[\alpha_L,\beta_L]$.
\end{proof}

Finally, to finish the proof, we show that a mean has high cost.

\begin{claim}
Let~$z$ be a mean and consider the interval $I=[1,|z|]$. Then, it holds
\begin{align*}
  \Ccode(I)+\Cgap(I) &\geq (c+\epsilon)r - 2(c+\epsilon) \text{ and}\\
  \Cback(I) &\geq nmrf_k(0).
\end{align*}
Hence, the total cost of~$z$ is strictly larger than $r(nmf_k(0) + c) + 3mnk=c'$.
\end{claim}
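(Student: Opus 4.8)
The plan is to chain together the per-position bound of \Cref{lem:LBsinglePosition} and the per-interval bound of \Cref{lem:OneIntervalCost}, and then to invoke two elementary counting facts: a warping path is surjective onto the index set of its input series, so every block of every $x_j$ is matched to at least one position of $z$.

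First I would sum \Cref{lem:LBsinglePosition} over all positions $p$ of $z$. Since $\Cback$, $\Ccode$, and $\Cgap$ of an interval are by definition the sums of the respective per-position quantities, this immediately gives
\[\Fcost(z) = \sum_{p} C(p) \;\ge\; \Cback(I) + \Ccode(I) + \Cgap(I),\]
so it suffices to lower-bound the three terms on the right. For $\Ccode(I)+\Cgap(I)$ I would sum the bound of \Cref{lem:OneIntervalCost} over the partition $[\alpha_1,\beta_1],\ldots,[\alpha_L,\beta_L]$ of $[1,|z|]$ into smallest irregular intervals (with $[\alpha_L,\beta_L]$ possibly regular). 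As these intervals partition $[1,|z|]$, the quantities $W(\cdot)$, $\Ccode(\cdot)$, and $\Cgap(\cdot)$ add up and only the last interval contributes the $-2(c+\epsilon)$ correction, so that
\[\Ccode(I)+\Cgap(I) \;\ge\; (c+\epsilon)\frac{W(I)}{2knm} - 2(c+\epsilon).\]
Now $W(I)$ counts incidences between blocks of $x_1,\dots,x_k$ and positions of $z$; each $x_j$ has $2mnr$ blocks and each is matched to at least one position, so $W(I)\ge 2kmnr$. Substituting yields $\Ccode(I)+\Cgap(I) \ge (c+\epsilon)r - 2(c+\epsilon)$, the first claimed inequality.

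For the background cost, I would use that $\Cback(I) = \frac{f_k(0)}{k}\sum_p \diamondsuit_0(p)$, where $\sum_p \diamondsuit_0(p)$ counts incidences between $0$-blocks of $x_1,\dots,x_k$ and positions of $z$. Since the blocks of each $x_j$ strictly alternate, $x_j$ has exactly $mnr$ zero-blocks, each matched to at least one position, so $\sum_p \diamondsuit_0(p) \ge kmnr$ and hence $\Cback(I) \ge nmr\,f_k(0)$, the second claimed inequality. Finally, combining the three bounds,
\[\Fcost(z) \;\ge\; nmr\,f_k(0) + (c+\epsilon)r - 2(c+\epsilon) \;=\; r\bigl(nmf_k(0)+c\bigr) + \epsilon r - 2(c+\epsilon),\]
and the definition $r = \lceil \tfrac1\epsilon(3mnk + 2(c+\epsilon))\rceil + 1$ gives $\epsilon r > 3mnk + 2(c+\epsilon)$, so $\epsilon r - 2(c+\epsilon) > 3mnk$ and therefore $\Fcost(z) > r(nmf_k(0)+c) + 3mnk = c'$, completing the no-instance direction.

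Since the heavy lifting is already done in \Cref{lem:LBsinglePosition}, \Cref{lem:regularCosts}, \Cref{lem:IrregularGap}, and \Cref{lem:OneIntervalCost}, this final step is essentially bookkeeping. The only points requiring a little care are verifying that $W$, $\Cback$, $\Ccode$, $\Cgap$ are genuinely additive over the partition (they are, being sums over positions) so that the per-interval $-2(c+\epsilon)$ terms telescope to a single such term, and justifying the block-counting inequalities $W(I)\ge 2kmnr$ and $\sum_p\diamondsuit_0(p)\ge kmnr$ from surjectivity of the warping paths — I expect no substantive obstacle beyond these checks.
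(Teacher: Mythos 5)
Your proposal is correct and follows the paper's own proof essentially step for step: sum \Cref{lem:LBsinglePosition} over positions, sum \Cref{lem:OneIntervalCost} over the irregular-interval partition (with the single $-2(c+\epsilon)$ correction from the last interval), bound $W(I)\ge 2knmr$ and $\sum_p\diamondsuit_0(p)\ge knmr$ by surjectivity of warping paths, and close with $\epsilon r > 3mnk + 2(c+\epsilon)$. Nothing is missing and no alternative route is taken.
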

\begin{proof}
For the coding and gap cost, we use \Cref{lem:OneIntervalCost} together with the fact that all~$2knmr$ blocks of $x_1,\ldots, x_k$ are involved in at least one match with a position of~$z$, which yields $W(I) = \sum_{i=1}^L W([\alpha_i,\beta_i])\geq 2knmr$. 

The overall background cost is
\[\Cback(I) = \sum_{p=1}^{|z|} \diamondsuit_0(p)\frac{f_k(0)}k.\]
Since overall there are $knmr$ 0-blocks in $x_1,\ldots, x_k$, and each of those is matched to at least one position $p$ in $I$, we have $\sum_{p=1}^{|z|} \diamondsuit_0(p)\geq knmr$ and  $\Cback(I)  \geq nmrf_k(0)$.

The sum of these three costs gives a lower bound on the total cost of $z$ (using \Cref{lem:LBsinglePosition} on each of its positions).
This sum amounts to \[nmrf_k(0) + (c+\epsilon)r - 2(c+\epsilon) = nmrf_k(0) + cr + \epsilon r - 2(c+\epsilon).\] Since $\epsilon r > 3mnk + 2 (c+\epsilon)$, we get the desired bound.
\end{proof}

Since the above reduction is a polynomial-time reduction from~\fMSCS[$\phi$] where the resulting number of time series is linearly bounded in the number of strings in the \fMSCS[$\phi$] instance,
\Cref{thm:DTW-hard} now follows from \Cref{cor:MSCShardness}.
\end{proof}

Closing this section, we remark that Buchin et al.~\cite[Theorem~7]{BDS19} recently obtained the same hardness results for the problem of computing an average series~$z$ that minimizes
\begin{align*}
 \Fcost_p^q(z)\coloneqq\sum_{j=1}^{k}\left(\min_{p_j\in\mathcal{P}_{|x_j|,|z|}}\sum_{(u,v)\in p_j}|x_j[u]-z[v]|^p\right)^{q/p}
\end{align*}
for all integers $p,q \ge 1$. Their reduction, however, builds time series containing three different values.
Hence, \Cref{thm:DTW-hard} yields a stronger hardness on binary inputs for $p=q=2$.
Note that if also the mean is restricted to be a binary time series, then the problem is solvable in polynomial time~\cite[Theorem~1]{BFFJNS19}.

\section{Conclusion}\label{sec:concl}
Shedding light on the computational complexity of prominent consensus problems 
in stringology and time series analysis,  
we proved several tight computational hardness results for circular string 
alignment problems and time series averaging in dynamic time warping spaces.
Notably, we have shown that the complexity of consensus string problems can drastically change (that is, they become hard) when considering \emph{circular} strings 
and \emph{shift} operations instead of classic strings.
Our results imply that these problems with a rich set of applications are intractable in the worst case (even on binary data). Hence, it is unlikely to find algorithms which significantly improve the worst-case running time of the best known algorithms so far. 
This now partly justifies the use of heuristics as have been used for a long time
in many real-world applications.

We conclude with some open questions and directions for future work.
\begin{itemize}
\item We conjecture that the idea of the reduction for \fMSCS can be used to prove the same hardness result for most non-linear (polynomially bounded) order-independent cost functions (note that \fMSCS is trivially solvable if~$f_k$ is linear since every shift has the same cost).
  Proving a complexity dichotomy is an interesting goal to achieve.
  \item From an algorithmic point of view, it would be nice to improve the constants in the exponents of the running times, that is, to find algorithms running in time~$O(n^{\alpha k})$ for small~$\alpha$. In particular, for \DTW, we ask to find an $O(n^k)$-time algorithm.
    \item What about the parameter maximum sequence length~$n$? Are the considered
 problems polynomial-time solvable if $n$~is a constant, or are they even 
 fixed-parameter tractable with respect to~$n$?
 \item Finally, can the hardness result for averaging time series with respect to~$(p,q)$-DTW by Buchin et al.~\cite[Theorem~7]{BDS19} be strengthened to binary inputs?
\end{itemize}

\bibliographystyle{abbrvnat}
\bibliography{ref}

\begin{thebibliography}{34}
\providecommand{\natexlab}[1]{#1}
\providecommand{\url}[1]{\texttt{#1}}
\expandafter\ifx\csname urlstyle\endcsname\relax
  \providecommand{\doi}[1]{doi: #1}\else
  \providecommand{\doi}{doi: \begingroup \urlstyle{rm}\Url}\fi

\bibitem[Abboud et~al.(2015)Abboud, Backurs, and Williams]{ABW15}
A.~Abboud, A.~Backurs, and V.~V. Williams.
\newblock Tight hardness results for {LCS} and other sequence similarity
  measures.
\newblock In \emph{Proceedings of the 56th Annual {IEEE} Symposium on
  Foundations of Computer Science (FOCS '15)}, pages 59--78. IEEE, 2015.

\bibitem[Aghabozorgi et~al.(2015)Aghabozorgi, Shirkhorshidi, and Wah]{ASW15}
S.~Aghabozorgi, A.~S. Shirkhorshidi, and T.~Y. Wah.
\newblock Time-series clustering -- {A} decade review.
\newblock \emph{Information Systems}, 53:\penalty0 16--38, 2015.

\bibitem[Arica(2005)]{Arica05}
N.~Arica.
\newblock Cyclic sequence comparison using dynamic warping.
\newblock In \emph{Proceedings of the International Conference on Image and
  Video Retrieval (CIVR '05)}, volume 3568 of \emph{LNCS}, pages 328--335.
  Springer, 2005.

\bibitem[Ayad and Pissis(2017)]{AP17}
L.~A.~K. Ayad and S.~P. Pissis.
\newblock {MARS}: improving multiple circular sequence alignment using refined
  sequences.
\newblock \emph{{BMC} Genomics}, 18\penalty0 (1):\penalty0 86, 2017.

\bibitem[Barton et~al.(2015)Barton, Iliopoulos, Kundu, Pissis, Retha, and
  Vayani]{BIKPRV15}
C.~Barton, C.~S. Iliopoulos, R.~Kundu, S.~P. Pissis, A.~Retha, and F.~Vayani.
\newblock Accurate and efficient methods to improve multiple circular sequence
  alignment.
\newblock In \emph{Proceedings of the 14th International Symposium on
  Experimental Algorithms (SEA '15)}, volume 9125 of \emph{LNCS}, pages
  247--258. Springer, 2015.

\bibitem[Brill et~al.(2019)Brill, Fluschnik, Froese, Jain, Niedermeier, and
  Schultz]{BFFJNS19}
M.~Brill, T.~Fluschnik, V.~Froese, B.~Jain, R.~Niedermeier, and D.~Schultz.
\newblock Exact mean computation in dynamic time warping spaces.
\newblock \emph{Data Mining and Knowledge Discovery}, 33\penalty0 (1):\penalty0
  252--291, 2019.

\bibitem[Bringmann and K{\"{u}}nnemann(2015)]{BK15}
K.~Bringmann and M.~K{\"{u}}nnemann.
\newblock Quadratic conditional lower bounds for string problems and dynamic
  time warping.
\newblock In \emph{Proceedings of the 56th Annual {IEEE} Symposium on
  Foundations of Computer Science (FOCS '15)}, pages 79--97. IEEE, 2015.

\bibitem[Buchin et~al.(2019)Buchin, Driemel, and Struijs]{BDS19}
K.~Buchin, A.~Driemel, and M.~Struijs.
\newblock On the hardness of computing an average curve.
\newblock \emph{CoRR}, abs/1902.08053, 2019.
\newblock Preprint appeared at the 35th European Workshop on Computational
  Geometry (EuroCG~'19).

\bibitem[Bulteau et~al.(2014)Bulteau, H{\"{u}}ffner, Komusiewicz, and
  Niedermeier]{BHKN14}
L.~Bulteau, F.~H{\"{u}}ffner, C.~Komusiewicz, and R.~Niedermeier.
\newblock Multivariate algorithmics for {NP}-hard string problems.
\newblock \emph{Bulletin of the {EATCS}}, 114, 2014.

\bibitem[Chen et~al.(2005)Chen, Chor, Fellows, Huang, Juedes, Kanj, and
  Xia]{CCFHJKX05}
J.~Chen, B.~Chor, M.~Fellows, X.~Huang, D.~W. Juedes, I.~A. Kanj, and G.~Xia.
\newblock Tight lower bounds for certain parameterized {NP}-hard problems.
\newblock \emph{Information and Computation}, 201\penalty0 (2):\penalty0
  216--231, 2005.

\bibitem[Cuturi and Blondel(2017)]{CB17}
M.~Cuturi and M.~Blondel.
\newblock Soft-{DTW}: a differentiable loss function for time-series.
\newblock In \emph{Proceedings of the 34th International Conference on Machine
  Learning (ICML~'17)}, volume~70 of \emph{Proceedings of Machine Learning
  Research}, pages 894--903. PMLR, 2017.

\bibitem[Cygan et~al.(2015)Cygan, Fomin, Kowalik, Lokshtanov, Marx, Pilipczuk,
  Pilipczuk, and Saurabh]{Cyg15}
M.~Cygan, F.~V. Fomin, {\L}.~Kowalik, D.~Lokshtanov, D.~Marx, M.~Pilipczuk,
  M.~Pilipczuk, and S.~Saurabh.
\newblock \emph{Parameterized Algorithms}.
\newblock Springer, 2015.

\bibitem[Fellows et~al.(2006)Fellows, Gramm, and Niedermeier]{FGN06}
M.~R. Fellows, J.~Gramm, and R.~Niedermeier.
\newblock On the parameterized intractability of motif search problems.
\newblock \emph{Combinatorica}, 26\penalty0 (2):\penalty0 141--167, 2006.

\bibitem[Fernandes et~al.(2009)Fernandes, Pereira, and Freitas]{FPF09}
F.~Fernandes, L.~Pereira, and A.~T. Freitas.
\newblock {CSA}: An efficient algorithm to improve circular {DNA} multiple
  alignment.
\newblock \emph{{BMC} Bioinformatics}, 10\penalty0 (1):\penalty0 230, 2009.

\bibitem[Frances and Litman(1997)]{FL97}
M.~Frances and A.~Litman.
\newblock On covering problems of codes.
\newblock \emph{Theory of Computing Systems}, 30\penalty0 (2):\penalty0
  113--119, 1997.

\bibitem[Gold and Sharir(2017)]{GS17}
O.~Gold and M.~Sharir.
\newblock Dynamic time warping and geometric edit distance: Breaking the
  quadratic barrier.
\newblock In \emph{Proceedings of the 44th International Colloquium on
  Automata, Languages, and Programming (ICALP~'17)}, volume~80 of
  \emph{{LIPIcs}}, pages 25:1--25:14. Schloss Dagstuhl--Leibniz-Zentrum fuer
  Informatik, 2017.

\bibitem[Gramm et~al.(2003)Gramm, Niedermeier, and Rossmanith]{GNR03}
J.~Gramm, R.~Niedermeier, and P.~Rossmanith.
\newblock Fixed-parameter algorithms for {Closest} {String} and related
  problems.
\newblock \emph{Algorithmica}, 37\penalty0 (1):\penalty0 25--42, 2003.

\bibitem[Grossi et~al.(2016)Grossi, Iliopoulos, Mercas, Pisanti, Pissis, Retha,
  and Vayani]{GIMPPRV16}
R.~Grossi, C.~S. Iliopoulos, R.~Mercas, N.~Pisanti, S.~P. Pissis, A.~Retha, and
  F.~Vayani.
\newblock Circular sequence comparison: algorithms and applications.
\newblock \emph{Algorithms for Molecular Biology}, 11\penalty0 (1):\penalty0
  12, 2016.

\bibitem[Impagliazzo and Paturi(2001)]{IP01}
R.~Impagliazzo and R.~Paturi.
\newblock On the complexity of $k$-{SAT}.
\newblock \emph{Journal of Computer and System Sciences}, 62\penalty0
  (2):\penalty0 367--375, 2001.

\bibitem[Keogh and Ratanamahatana(2005)]{KR05}
E.~Keogh and C.~A. Ratanamahatana.
\newblock Exact indexing of dynamic time warping.
\newblock \emph{Knowledge and Information Systems}, 7\penalty0 (3):\penalty0
  358--386, 2005.

\bibitem[Lee et~al.(2013)Lee, Na, Park, Park, and Sim]{LNPPS13}
T.~Lee, J.~C. Na, H.~Park, K.~Park, and J.~S. Sim.
\newblock Finding consensus and optimal alignment of circular strings.
\newblock \emph{Theoretical Computer Science}, 468:\penalty0 92--101, 2013.

\bibitem[Li et~al.(2002{\natexlab{a}})Li, Ma, and Wang]{LMW02a}
M.~Li, B.~Ma, and L.~Wang.
\newblock On the closest string and substring problems.
\newblock \emph{Journal of the {ACM}}, 49\penalty0 (2):\penalty0 157--171,
  2002{\natexlab{a}}.

\bibitem[Li et~al.(2002{\natexlab{b}})Li, Ma, and Wang]{LMW02b}
M.~Li, B.~Ma, and L.~Wang.
\newblock Finding similar regions in many sequences.
\newblock \emph{Journal of Computer and System Sciences}, 65\penalty0
  (1):\penalty0 73--96, 2002{\natexlab{b}}.

\bibitem[Marx(2008)]{Marx08}
D.~Marx.
\newblock Closest substring problems with small distances.
\newblock \emph{{SIAM} Journal on Computing}, 38\penalty0 (4):\penalty0
  1382--1410, 2008.

\bibitem[Mollineda et~al.(2002)Mollineda, Vidal, and Casacuberta]{MVC02}
R.~A. Mollineda, E.~Vidal, and F.~Casacuberta.
\newblock Cyclic sequence alignments: Approximate versus optimal techniques.
\newblock \emph{International Journal of Pattern Recognition and Artificial
  Intelligence}, 16:\penalty0 291--299, 2002.

\bibitem[Morel et~al.(2018)Morel, Achard, Kulpa, and Dubuisson]{MAKD18}
M.~Morel, C.~Achard, R.~Kulpa, and S.~Dubuisson.
\newblock Time-series averaging using constrained dynamic time warping with
  tolerance.
\newblock \emph{Pattern Recognition}, 74:\penalty0 77--89, 2018.

\bibitem[Palaz\'{o}n-Gonz\'{a}lez and Marzal(2012)]{PM12}
V.~Palaz\'{o}n-Gonz\'{a}lez and A.~Marzal.
\newblock On the dynamic time warping of cyclic sequences for shape retrieval.
\newblock \emph{Image and Vision Computing}, 30\penalty0 (12):\penalty0
  978--990, 2012.

\bibitem[Paparrizos and Gravano(2017)]{PG17}
J.~Paparrizos and L.~Gravano.
\newblock Fast and accurate time-series clustering.
\newblock \emph{ACM Transactions on Database Systems}, 42\penalty0
  (2):\penalty0 8:1--8:49, 2017.

\bibitem[Petitjean and Gan{\c{c}}arski(2012)]{PG12}
F.~Petitjean and P.~Gan{\c{c}}arski.
\newblock Summarizing a set of time series by averaging: From {S}teiner
  sequence to compact multiple alignment.
\newblock \emph{Theoretical Computer Science}, 414\penalty0 (1):\penalty0
  76--91, 2012.

\bibitem[Petitjean et~al.(2011)Petitjean, Ketterlin, and
  Gan{\c{c}}arski]{PKG11}
F.~Petitjean, A.~Ketterlin, and P.~Gan{\c{c}}arski.
\newblock A global averaging method for dynamic time warping, with applications
  to clustering.
\newblock \emph{Pattern Recognition}, 44\penalty0 (3):\penalty0 678--693, 2011.

\bibitem[Petitjean et~al.(2016)Petitjean, Forestier, Webb, Nicholson, Chen, and
  Keogh]{PFWNCK16}
F.~Petitjean, G.~Forestier, G.~I. Webb, A.~E. Nicholson, Y.~Chen, and E.~Keogh.
\newblock Faster and more accurate classification of time series by exploiting
  a novel dynamic time warping averaging algorithm.
\newblock \emph{Knowledge and Information Systems}, 47\penalty0 (1):\penalty0
  1--26, 2016.

\bibitem[Schultz and Jain(2018)]{SJ18}
D.~Schultz and B.~Jain.
\newblock Nonsmooth analysis and subgradient methods for averaging in dynamic
  time warping spaces.
\newblock \emph{Pattern Recognition}, 74\penalty0 (Supplement C):\penalty0
  340--358, 2018.

\bibitem[Soheily-Khah et~al.(2016)Soheily-Khah, Douzal-Chouakria, and
  Gaussier]{SDG16}
S.~Soheily-Khah, A.~Douzal-Chouakria, and E.~Gaussier.
\newblock Generalized {$k$}-means-based clustering for temporal data under
  weighted and kernel time warp.
\newblock \emph{Pattern Recognition Letters}, 75:\penalty0 63--69, 2016.

\bibitem[Will and Stadler(2014)]{WS14}
S.~Will and P.~F. Stadler.
\newblock A common framework for linear and cyclic multiple sequence alignment
  problems.
\newblock In \emph{Proceedings of the 14th International Workshop on Algorithms
  in Bioinformatics (WABI '14)}, volume 8701 of \emph{LNCS}, pages 135--147.
  Springer, 2014.

\end{thebibliography}

\end{document}